\newcommand{\bE}{\ensuremath{\mathbf{E}}}
\newtheorem{theorem}{Theorem}[section]
\newtheorem{proposition}[theorem]{Proposition}
\newtheorem{lemma}[theorem]{Lemma}
\newtheorem{corollary}[theorem]{Corollary}
\newtheorem{definition}[theorem]{Definition}
\begin{document}

\title{Algorithmic and enumerative aspects of the Moser-Tardos distribution\footnote{A preliminary version of this paper has appeared in the \emph{Proc.\ ACM-SIAM Symposium on Discrete Algorithms}, 2016.}}
\author{
David G. Harris\thanks{Department of Computer Science, University of Maryland, 
College Park, MD 20742. 
Research supported in part by NSF Awards CNS-1010789 and CCF-1422569.
Email: \texttt{davidgharris29@gmail.com}} \\
\and
Aravind Srinivasan\thanks{Department of Computer Science and
Institute for Advanced Computer Studies, University of Maryland, 
College Park, MD 20742. 
Research supported in part by NSF Awards CNS-1010789 and CCF-1422569, and by a research award from Adobe, Inc. 
Email: \texttt{srin@cs.umd.edu}}
}

\date{}
\maketitle

\begin{abstract}
Moser \& Tardos have developed a powerful algorithmic approach (henceforth ``MT") to the Lov\'{a}sz Local Lemma (LLL); the basic operation done in MT and its variants is a search for ``bad" events in a current configuration. In the initial stage of MT, the variables are set independently. We examine the distributions on these variables which arise during intermediate stages of MT. We show that these configurations have a more or less ``random'' form, building further on the ``MT-distribution" concept of Haeupler et al.\ in understanding the (intermediate and) output distribution of MT. This has a variety of algorithmic applications; the most important is that bad events can be found relatively quickly, improving upon MT across the complexity spectrum: it makes some polynomial-time algorithms sub-linear (e.g., for Latin transversals, which are of basic combinatorial interest), gives lower-degree polynomial run-times in some settings, transforms certain super-polynomial-time algorithms into polynomial-time ones, and leads to Las Vegas algorithms for some coloring problems for which only Monte Carlo algorithms were known.  

We show that in certain conditions when the LLL condition is violated, a variant of the MT algorithm can still produce a distribution which avoids most of the bad events. We show in some cases this MT variant can run faster than the original MT algorithm itself, and develop the first-known criterion for the case of the asymmetric LLL. This can be used to find partial Latin transversals -- improving upon earlier bounds of Stein (1975) -- among other applications. We furthermore give applications in enumeration, showing that most applications (where we aim for all or most of the bad events to be avoided) have large solution sets. We do this by showing that the MT-distribution has large R\'{e}nyi entropy.

\end{abstract}

\smallbreak \noindent \textbf{Key words and phrases:} Lov\'{a}sz Local Lemma, Moser-Tardos algorithm, LLL-distribution, MT-distribution, graph coloring, satisfiability, Latin transversals, combinatorial enumeration.

\section{Introduction}
We consider a number of basic applications of the Lov\'{a}sz Local Lemma (LLL) in probabilistic combinatorics and graph theory \cite{alon-spencer}: these include Latin transversals, hypergraph $2$-coloring, various types of graph coloring, k-SAT, versions of these problems where we satisfy ``most" of the constraints (as in MAX-SAT), and enumerating (lower-bounding) the number of solutions to these problems. 
Recall that the LLL gives a powerful sufficient condition for avoiding all of a given set of bad events. 
 We study the seminal Moser-Tardos approach (henceforth ``MT") for algorithmic versions of the LLL \cite{moser-tardos}, presenting new analyses and branching processes to speed up the MT algorithm -- significantly in some cases (e.g., from exponential to polynomial, and from polynomial to sublinear); furthermore, we improve upon the known sufficient conditions for only a ``few" of the given bad events to occur. A fundamental idea behind our work is that the structures arising in the execution of MT are ``random-like", and that such average-case behavior can be used to good advantage. 

We refer to the distribution on the variables at the termination of the MT algorithm as the \emph{MT-distribution}. A key randomness property of this distribution has been demonstrated in \cite{hss}. We develop this further, showing that the intermediate structures arising in the execution of MT have some very useful ``random-like" properties, which can be exploited using additional ideas. 

In the MT setting, we have a set of \emph{variables} $X_1, \dots, X_n$. We have also a product probability distribution $\Omega$, which selects a integer value $j$ for each variable $X_i$ with probability $p_{i,j}$; the variables are drawn independently and $\sum_j p_{i,j} = 1$ for each $i$.
We have \emph{events}, which are Boolean functions of subsets of the variables. We say that $E \sim E'$ iff $E, E'$ overlap in some variable(s), i.e., if each of them involves some common $X_i$. (Note that we always have $E \sim E$.) There is a set of $m$ \emph{bad events $\mathcal B$} which we are trying to avoid.
In this setting, the MT algorithm is as follows:
\begin{enumerate}
\item[1.] Draw $X_1, \dots, X_n$ from $\Omega$.
\item[2.]  Repeat while there is some true bad event:
\begin{enumerate}
\item[2a.] Choose a currently-true bad event $B \in \mathcal B$ arbitrarily.
\item[2b.] Resample all the variables involved in $B$ from the restriction of $\Omega$ to just these variables. (We refer to this step as \emph{resampling the bad event $B$}).
\end{enumerate}
\end{enumerate}

For any event $E$ (whether in $\mathcal B$ or not), we let $N(E)$ denote the \emph{inclusive} neighborhood of $E$, viz.\ the set of all bad events $B \in \mathcal B$ such that $B \sim E$. This is ``inclusive" since $E \in N(E)$ for $E \in \mathcal B$.

When we are analyzing the MT algorithm, we let $T$ denote the termination time ($T = \infty$ if the algorithm runs forever). For $t = 0, \dots, T$ we let $X^t$ denote the configurations of the variables (the values of $X_1, \dots, X_n$) after $t$ resamplings; $X^0$ is the initial configuration (after step (1)). For $t = 1, \dots, T-1$, we let $B^t$ denote the bad-event which is resampled at time $t$.

In our analyses, there are two probability distributions at play. First, there is the distribution $\Omega$, to which the LLL applies and which the MT algorithm is (in a certain sense) trying to simulate. Second, there is the probability distribution which describes the execution of the MT algorithm; this second probability distribution is the one that is ``actually occurring.'' In order to ensure that this second probability distribution is well-defined, we assume that there is some fixed rule (possibly randomized) for choosing which bad-event to resample. We refer to probabilities of the first type as $P_{\Omega}$ and probabilities of the second type (which are the true probabilities of the events of interest) as simply $P$.

The key criterion for the convergence of the MT algorithm is the ``asymmetric LLL" \cite{spencer-ramsey}. We state a slightly stronger form of this criterion due to Pegden \cite{pegden}:
\begin{theorem}
\label{a1fapp-main-thm}
Suppose there is $\mu:\mathcal B \rightarrow \mathbf [0, \infty)$ such that for all $B \in \mathcal B$ we have
\begin{equation}
\label{a1eqn:pegden-asymm}
\mu(B) \geq P_{\Omega}(B) \times \sum_{\substack{I \subseteq N(B)\\\text{$I$ independent set under $\sim$}}}  \prod_{B' \in I} \mu(B').
\end{equation}
 Then the MT algorithm terminates with probability 1; the expected number of resamplings of any bad event $B \in \mathcal B$ is at most $\mu(B)$.   \footnote{Clearly, $\mu(B) \geq P_{\Omega}(B) \times \sum_{I \subseteq N(B)}  \prod_{B' \in I} \mu(B') = P_{\Omega}(B) \times \prod_{B' \in N(B)} (1 + \mu(B'))$ is a sufficient condition for (\ref{a1eqn:pegden-asymm}). Setting $x(B) = \mu(B)/(\mu(B) + 1)$ in this sufficient condition recovers the usual formulation of the asymmetric LLL.}
\end{theorem}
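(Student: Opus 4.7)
The proof I would attempt follows the Moser–Tardos witness-tree scheme, with the Pegden refinement where sibling labels form an independent set in the dependency relation $\sim$.

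First I would define the execution log $C$ of MT as the sequence $B^1, B^2, \dots$ of resampled events, and attach to each position $t$ a \emph{witness tree} $\tau(t)$ as follows: start with a single root labeled $B^t$; then, scanning the log from $t-1$ down to $1$, for each $B^s$ find the deepest existing node $v$ whose label is in $N(B^s)$; if such a $v$ exists, attach a new child of $v$ labeled $B^s$ (with ties broken by any fixed rule, say deepest-then-rightmost); otherwise discard $B^s$. The central structural observation (Pegden's improvement over Moser–Tardos) is that for any node $v$ labeled $B$, its children's labels all lie in $N(B)$ and, because each newly added vertex is attached as \emph{deep as possible}, no two siblings can themselves be adjacent under $\sim$ — otherwise the later-inserted sibling would have been placed one level further down. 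Hence the children's labels always form an independent set in $N(B)$. Call such trees \emph{proper}.

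Next I would invoke the Witness Tree Lemma of Moser–Tardos: for any fixed proper witness tree $\tau$, the probability (over the randomness of MT) that $\tau$ occurs as some $\tau(t)$ during the execution is at most $\prod_{v\in\tau} P_\Omega(B_v)$, where $B_v$ is the label of $v$. The proof of this lemma uses a coupling with an auxiliary ``resampling table'' $X_{i,k}$, and checks that the values inspected when verifying $\tau$ are independent draws from $\Omega$; this is exactly the standard MT argument and carries through unchanged because it never uses the form of the LLL criterion. Since every resampling of $B$ produces a distinct proper witness tree rooted at a $B$-label, the expected number of resamplings of $B$ is bounded by
\[
\sum_{\tau:\,\text{root labeled }B} \prod_{v\in\tau} P_\Omega(B_v) \;=:\; x(B).
\]

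Now I would finish by comparing $x(B)$ with $\mu(B)$. Decomposing a proper tree by its root and root-subtrees gives the identity
\[
x(B) \;=\; P_\Omega(B)\sum_{\substack{I\subseteq N(B)\\ I\text{ independent under }\sim}}\;\prod_{B'\in I} x(B'),
\]
where the sum runs over the independent set $I$ of child-labels and the product over the root-subtrees (one per child). Comparing with hypothesis (\ref{a1eqn:pegden-asymm}), I would prove $x(B)\le \mu(B)$ by induction on tree size: cutting the sum to trees of depth at most $k$ gives values $x_k(B)$ satisfying $x_0(B)=P_\Omega(B)\le \mu(B)$ and $x_{k+1}(B)\le P_\Omega(B)\sum_{I}\prod_{B'\in I}\mu(B')\le \mu(B)$, and then $x(B)=\lim_k x_k(B)\le \mu(B)$. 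This yields the resampling bound; summability of $x(B)$ immediately gives termination with probability $1$.

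The main technical obstacle is the structural claim that siblings in a proper witness tree are non-adjacent under $\sim$; this is where Pegden's strengthening over the classical asymmetric LLL (which only needed siblings to have distinct labels) enters, and it must be proven carefully from the ``attach as deep as possible'' construction. The Witness Tree Lemma itself and the inductive comparison with $\mu$ are essentially routine given this structural lemma.
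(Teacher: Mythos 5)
Your proposal is correct and follows essentially the same route the paper relies on: Theorem~\ref{a1fapp-main-thm} is Pegden's refinement, and Section~\ref{mt-review-sec} assembles exactly your ingredients --- witness trees built by deepest-possible attachment (whence sibling labels form an independent set under $\sim$), the Witness Tree Lemma (Lemma~\ref{wtlemma1}), and Proposition~\ref{a1fapp-gen-tree-prop} bounding the total weight of tree-structures rooted at $B$ by $\mu(B)$. Your depth-truncation induction deriving that weight bound from hypothesis (\ref{a1eqn:pegden-asymm}) is the standard argument and is sound, so no substantive difference from the paper's (cited) proof remains.
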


The ``Symmetric LLL'' is a special case of this, obtained by setting $\mu(B) = e \cdot P_{\Omega}(B)$: \footnote{In other formulations of the symmetric LLL, $N(B)$ is defined to be the \emph{exclusive} neighborhood (not counting $B$ itself), and hence the criterion becomes $e p (d+1) \leq 1$. The reader should bear in mind that in this paper, $N(B)$ non-standardly refers to the \emph{inclusive} neighborhood.}

\begin{theorem}
\label{a1fapp-main-thm2}
Suppose $P_{\Omega} (B) \leq p$ and $|N(B)| \leq d$ for all $B \in \mathcal B$, with $e p d \leq 1$. Then the MT algorithm terminates with probability 1, and the expected number of resamplings of any bad event is at most $ e p$.
%Thus, the overall expected number of resamplings is $O(m/d)$. 

\end{theorem}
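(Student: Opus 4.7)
The plan is to derive Theorem \ref{a1fapp-main-thm2} as a direct corollary of Theorem \ref{a1fapp-main-thm} by choosing the charge function $\mu(B) = e \cdot P_\Omega(B)$, as suggested in the statement, and then verifying the hypothesis (\ref{a1eqn:pegden-asymm}) under the symmetric condition $e p d \leq 1$.

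First, rather than grapple with the sum over independent subsets of $N(B)$ directly, I would use the easier sufficient condition given in the footnote:
\[
\mu(B) \geq P_{\Omega}(B) \cdot \prod_{B' \in N(B)} \bigl(1 + \mu(B')\bigr),
\]
since this upper-bounds the sum over $I \subseteq N(B)$ by dropping the independence constraint and expanding the product. With $\mu(B') = e \cdot P_\Omega(B') \leq e p$ and $|N(B)| \leq d$, the right-hand side is at most
\[
P_\Omega(B) \cdot (1 + e p)^d \leq P_\Omega(B) \cdot e^{e p d} \leq P_\Omega(B) \cdot e,
\]
using the elementary inequality $1 + x \leq e^x$ and the assumed bound $e p d \leq 1$. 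This equals exactly $\mu(B)$, so (\ref{a1eqn:pegden-asymm}) holds.

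Having verified the hypothesis, Theorem \ref{a1fapp-main-thm} immediately yields that the MT algorithm terminates with probability $1$ and that the expected number of resamplings of any bad event $B$ is at most $\mu(B) = e \cdot P_\Omega(B) \leq e p$, which is the claimed bound.

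There is essentially no obstacle here; the only subtlety worth double-checking is that $N(B)$ is the \emph{inclusive} neighborhood in the convention of this paper, so the bound $(1 + e p)^d$ (rather than $(1 + e p)^{d+1}$) is the correct one to match against the standard symmetric LLL threshold $e p d \leq 1$. This is consistent with the footnote's remark that the usual ``$e p (d+1) \leq 1$'' appears only when $d$ is taken to be the size of the exclusive neighborhood.
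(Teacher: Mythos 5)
Your proposal is correct and matches the paper's own route: the paper obtains Theorem~\ref{a1fapp-main-thm2} precisely by specializing Theorem~\ref{a1fapp-main-thm} with $\mu(B) = e \cdot P_{\Omega}(B)$ and the footnote's product-form sufficient condition, and your calculation $(1+ep)^d \leq e^{epd} \leq e$ is exactly the routine verification left implicit there. The remark about $N(B)$ being the inclusive neighborhood is also the right subtlety to flag and is consistent with the paper's convention.
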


The MT algorithm can give polynomial-time algorithms for nearly all applications of the Lov\'{a}sz Local Lemma. Yet, implemented directly, this algorithm can be fairly slow. The key bottleneck is that, in each step of the algorithm, one must search for currently-true bad events (or certify there are none). 
We show,
by understanding the MT-distribution and some of its relatives better, 
 that the configurations which arise during the execution of MT have a more or less ``random'' form, and that currently-true bad events can be found relatively quickly in expectation. 
Our main contributions are as follows.

\smallskip \noindent \textbf{(a) From super-polynomial to polynomial time, and from Monte Carlo to Las Vegas.} The MT algorithm, as described, may not run in poly$(n)$ time if the number of bad events is super-polynomial. This issue is addressed in \cite{hss}, where polynomial-time algorithms are developed for
many such cases. However, the framework of \cite{hss} has some important limitations. First, it typically requires satisfying the LLL criterion with an additional slack. This means that one typically obtains worse constructive bounds than the existential ones possible from the LLL. Second, this framework leads to Monte Carlo algorithms --- that is, the algorithm terminates and there is a high probability (but not certainty) of success. These problems are both present for the class of problems based on non-repetitive vertex colorings. In Section~\ref{a1fapp-sec:exp-to-poly}, we present improved algorithms for these problems; our algorithm leads to essentially the same parameters as the non-constructive LLL, and is Las Vegas. 

\smallskip \noindent \textbf{(b) Improved polynomial run-times.} We also significantly improve the run-times of certain combinatorial algorithms. In Section~\ref{a1sec:ramsey}, we give improved algorithms for Ramsey number lower bounds. In Section~\ref{a1fapp-sec:hyp2col}, we give improved algorithms for hypergraph 2-coloring, reducing a quadratic run-time to a quasi-linear run-time. In Section~\ref{a1fapp-sec:latin}, we give the first sub-linear algorithm for Latin transversals:
one that runs in time proportional to the square root of the input length.  Latin transversals and their ``partial transversal" variants are well-studied in combinatorics (see, e.g., \cite{bissacot,brualdi-ryser:book,erdos-spencer,lllperm,hatami-shor:latin,ryser:oberwolfach,stein:latin}), the latter of which we encounter in item (c) next. 

\smallskip \noindent \textbf{(c) Partially avoiding bad events.} In some cases, the LLL criterion is not satisfied, and one cannot necessarily avoid \emph{all} the bad events. However, one can still avoid most of the bad events. 
This issue was first examined in \cite{hss}, which extended the symmetric LLL to the case when $e p d = \alpha$, for $\alpha \in [1,e]$, and $d$ was large: they gave a randomized algorithm whose expected number of bad events at the end is $(1 + o(1)) \cdot mp \cdot (e \ln(\alpha)/\alpha) = (1 + o(1)) \cdot \frac{m \ln \alpha}{d} $, where the ``$o(1)$" term is a function of $d$ that tends to zero for large $d$. No such results were known for the general asymmetric LLL (Theorem~\ref{a1fapp-main-thm}) or symmetric LLL for small $d$. We develop the first ``few bad events" variant of Theorem~\ref{a1fapp-main-thm} in Theorem~\ref{a1falpha-resamp-thm1}, and also obtain an exact result for the symmetric LLL by removing the ``$o(1)$" term above (Corollary~\ref{a1fsym-cor-most}). 

These results apply to many forms of the Lopsided Lov\'{a}sz Local Lemma (LLLL) (an extension of the LLL to probability spaces in which the bad-events are ``negatively correlated'' in a certain technical sense; see \cite{erdos-spencer}). Some well-known applications of the LLLL which we treat here include random permutations and $k$-SAT. Our algorithms here are also much faster than \cite{hss}. Some applications of this technique are also given to partial Latin transversals, improving upon \cite{stein:latin}. 

\smallskip \noindent \textbf{(d) Entropy of the MT-distribution and combinatorial enumeration.} We show another concrete way in which the MT-distribution has significant randomness --  that its \emph{R\'{e}nyi entropy} \cite{chor-goldreich:weak-randomness} is relatively close to that of the initial product distribution. (The min-entropy is a special case of the R\'{e}nyi entropy and has become a central notion in randomness extractors and explicit constructions: see, e.g., \cite{cohen:two-source,cz,nisan-zuckerman:extractors,vadhan:pseudorandomness}.) For many applications of the LLL, such as $k$-SAT, non-repetitive coloring etc., this implies that the solution set has greater cardinality than was known before; perhaps more excitingly, it further builds on item (c) above to prove for the first time that MAX-SAT instances, as just one example, have several good solutions. 

\smallskip
To summarize, we consider some basic applications of the LLL, and develop (much) faster algorithms for these, some of which are the first-known polynomial-time- or Las-Vegas- algorithms. We also present improved/new algorithms and enumerative results in settings where we can allow a few bad events to happen. The impetus behind our work is further investigation of the MT-distribution
and some of its relatives. 

\subsection{Technical overview}
The original analysis of Moser \& Tardos gave sufficient conditions for their MT algorithm to terminate, yielding a configuration without bad-events. However, often one would like more information about such configurations, beyond the bare fact that they exist. As shown in \cite{hss}, one can define an \emph{MT-distribution}: the probability distribution induced on configurations that are output from the MT algorithm. The MT-distribution was used by \cite{hss} to show that in various MT applications, one can guarantee that the output of the MT algorithm has additional good properties.

Another useful application of this principle comes from \cite{harris-srin-focs}, which uses the MT disribution to find configurations (e.g. independent transversals) which have certain large-scale average properties as well. For example, one may define a weighting function on elements and find configurations with high overall weight, by examining the expected weight in the MT-distribution.

In this paper, we take the notion of the MT-distribution much further: not only can one analyze the probability distribution on the \emph{output} of the MT algorithm, but one can also analyze the distribution on its \emph{intermediate states}. These intermediate distributions share many properties with the original sampling distribution $\Omega$, which is just a product distribution. In particular, the key step of the MT algorithm --- the search for currently-true bad events --- is quite similar to a search problem over a random configuration. Random configurations are often easy to search: for example, while deciding $k$-colorability is NP-hard in general, a simple algorithm of \cite{decide-k-color} solves it for Erd\H{o}s-R\'{e}nyi random graphs in expected polynomial time.

The key step of the MT algorithm thus often boils down to finding a bad-event in a (nearly) random configuration. This can often be accomplished by \emph{branching algorithms}, in which one gradually builds up a putative true bad event by ``guessing'' successively more of its state. At every step, one can check whether the partial bad event is extendable to a full bad event, and abort the search if not. Using the randomness of the configuration, one can show that there is a good probability of aborting early.

\subsection{Outline}
In Section~\ref{mt-review-sec}, we review the analysis of the MT algorithm. We describe witness trees, a key proof-technique for showing the convergence of that algorithm, which also plays a key role in understand the MT distribution. We also introduce a new variant of the critical Witness Tree Lemma, which allows us to bound the probability of events in internal states of the MT algorithm.

Sections~\ref{a1fapp-sec:fast} describes our basic algorithms and data structures. Two applications are given, for Ramsey numbers and for hypergraph 2-coloring. They are good representatives of ``typical'' applications in combinatorics and algorithms, and they show how these techniques can lead to faster algorithms for many LLL applications, even those which already have polynomial-time algorithms.

Section \ref{a1fapp-sec:latin} analyzes a variant of the MT algorithm for random permutations, and shows that one can obtain the first sub-linear (square-root of input size) algorithms for Latin transversals, a problem of fundamental combinatorial interest.

Section~\ref{a1fapp-sec:exp-to-poly} addresses non-repetitive vertex coloring -- one of the few remaining cases where polynomial-time versions of the LLL were not known -- and develops such polynomial-time versions.

Section~\ref{a1fapp-sec:partial-avoid} addresses the problem of partially avoiding bad events, in cases where the LLL criterion is not satisfied. We tighten the bounds of \cite{hss}, giving a symmetric criterion in the case when $e p d = \alpha$, for $\alpha \in [1,e]$, as well as, for the first time, an asymmetric criterion. Furthermore, we give a faster parallel algorithm in this case; while applying the parallel MT algorithm directly, as in \cite{hss}, would give a running time of $O(\frac{\log^3 m}{(1 - \alpha)^2})$, we improve this to $O(\frac{\log^2 m}{1 - \alpha})$.

Section~\ref{a1sec:min} estimates the entropy of the MT-distribution, and shows that it is close to the original distribution. This automatically implies that there are many more solutions than known before for various problems such as $k$-SAT, non-repetitive coloring, and independent transversals -- and especially the maximum-satisfiability variants of these problems. 

\section{Witness trees and the MT-distribution}
\label{mt-review-sec}
The analysis of \cite{moser-tardos} is based on \emph{witness trees}, an analytical tool which provides the history of all variables that lead up to a resampling. These give an explanation or witness for each of the resamplings that occurs during the MT algorithm. As shown in \cite{hss}, these witness trees can also be used to give explanations for other types of events (not necessarily bad events). We will give a very brief overview of these results here; the reader should consult \cite{moser-tardos} and \cite{hss} for a much more in-depth explanation of these concepts.

Suppose we run the MT algorithm, and we resample the bad-events $B^1, \dots, B^T$ in order; the MT algorithm may or may not have terminated by this point. We may produce a \emph{witness tree} $\hat \tau^k$ for the $k^{\text{th}}$ resampling, as follows. We begin by placing a singleton root node labeled $B^k$. We then proceed backward for $t = k-1, k-2, \dots, 1$; for each bad-event $B^t$, we see if there are any nodes of $\hat \tau^k$ which are labeled by some $B' \sim B^t$. If there are not, then we do not modify $\hat \tau^k$. If there, we select one such node at greatest depth in $\hat \tau^k$, and attach to it a new leaf node labeled $B^t$. 

In this description, $\hat \tau^k$ is a random variable. One may also fix a specific labeled tree $\tau$, and examine if $\hat \tau^k = \tau$ for any value of $k$. If there is some value of $k$ for which $\hat \tau^k = \tau$, we say that $\tau$ \emph{appears}. To distinguish these related notions, we use the term ``tree-structure'' to refer to a particular labeled tree which could be produced as a value for the (random variable) $\hat \tau^t$.

The key lemma in \cite{moser-tardos}, which governs the behavior of the MT algorithm, is the Witness Tree Lemma:
\begin{definition}[Weight of a witness tree]
For any tree-structure $\tau$, whose nodes are labeled by events $B_1, \dots, B_s$, we define the \emph{weight} of $\tau$ by
$w(\tau) = \prod_{i=1}^s P_{\Omega}(B_i)$.
\end{definition}

\begin{lemma}[Witness Tree Lemma]
\label{wtlemma1}
For any tree-structure $\tau$, $P(\text{$\tau$ appears}) \leq w(\tau)$. 
\end{lemma}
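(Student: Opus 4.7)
The plan is to follow the Moser--Tardos coupling argument via a resampling table. Realize the MT execution on an enlarged product space: for each variable index $i$ and each integer $j \geq 1$, draw $Y_{i,j}$ independently from the marginal of $\Omega$ on coordinate $i$. Initialize $X_i = Y_{i,1}$, and maintain a counter $c_i$; whenever some bad event $B^t$ gets resampled, for each variable $X_i$ in the scope of $B^t$ advance $c_i$ and reset $X_i := Y_{i, c_i}$. The resulting joint distribution of the MT trajectory is unchanged, but now all randomness lives in the independent array $(Y_{i,j})$, which lets us reason about events in terms of disjoint sub-collections of these variables.

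Fix a tree-structure $\tau$. I would define, purely from $\tau$, an assignment $v \mapsto (j_v(i))_{i \in \mathrm{vbl}(B_v)}$ by the rule $j_v(i) = 1 + |\{u : u \text{ is a strict descendant of } v \text{ in } \tau,\ i \in \mathrm{vbl}(B_u)\}|$, where $\mathrm{vbl}(\cdot)$ denotes the set of variable indices of an event. The two structural claims I need are: (1) if $\hat\tau^k = \tau$ for some $k$, and $t_v$ denotes the MT-time at which the node $v$ was grafted onto $\hat\tau^k$, then at time $t_v$ the value of $X_i$ actually consulted while testing $B^{t_v}$ is exactly $Y_{i,\,j_v(i)}$; and (2) the slots $\{(i, j_v(i)) : v \in \tau,\ i \in \mathrm{vbl}(B_v)\}$ are pairwise distinct across nodes. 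Given (1) and (2), the event ``$\tau$ appears'' forces, for every node $v$, the substituted event $B_v$ evaluated on the slots $Y_{\cdot,\,j_v(\cdot)}$ to be true (since $B^{t_v}$ was genuinely currently-true when the algorithm selected it). By (2) these per-node truth events depend on disjoint, hence independent, subsets of the $Y_{i,j}$'s, so
\[
P(\tau \text{ appears}) \;\leq\; \prod_{v \in \tau} P_\Omega(B_v) \;=\; w(\tau),
\]
which is the desired bound.

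The main obstacle is verifying (2), and this is precisely where the tree-building rule ``attach to a node of greatest depth labeled by some $B' \sim B^t$'' does its work. I would prove (1) and (2) together by induction on the nodes of $\tau$ processed in order of decreasing depth (equivalently, in the reverse order $t_v$ in which they were attached). The inductive invariant I want is: for every variable index $i$ and every node $v$ with $i \in \mathrm{vbl}(B_v)$, the counter $c_i$ at the start of step $t_v$ equals $j_v(i)$. If $v$ has no proper descendant involving $i$, then between the initialization and time $t_v$ no prior resampling touched $X_i$ (else the deepest-attachment rule would have placed such a node strictly below $v$ along the $X_i$-chain, contradicting the absence of descendants), so $c_i = 1 = j_v(i)$. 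In the inductive step, the deepest-attachment rule guarantees that the chain of descendants of $v$ involving $X_i$ is exactly the chain of previous times $X_i$ was resampled since the most recent ancestor involving $X_i$, so $c_i$ advances by exactly one per such descendant; this gives the counter-matches-depth-count identity, from which both the slot-identification (1) and the disjointness (2) follow. Once this bookkeeping is pinned down, the probability estimate above is immediate from independence of the $Y_{i,j}$'s.
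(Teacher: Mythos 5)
Your overall strategy --- couple the MT execution to a resampling table $(Y_{i,j})$, identify for each node of $\tau$ which table entries it must certify, and multiply over disjoint slots --- is exactly the route of Moser--Tardos (the paper does not reprove this lemma; it cites \cite{moser-tardos}). However, your two structural claims are false with the slot assignment you chose, and the flaw is precisely in the inductive step you sketch. You set $j_v(i) = 1 + |\{u : u \text{ strict descendant of } v,\ i \in \mathrm{vbl}(B_u)\}|$, and you argue that any earlier resampling touching $X_i$ "would have been placed strictly below $v$ along the $X_i$-chain.'' But the deepest-attachment rule attaches a node below the deepest node dependent with it, and that node can lie in a \emph{different subtree}, dependent through a \emph{different} shared variable. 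Concretely, take bad events $E_0$ on $\{x,y\}$, $E_1$ on $\{x\}$, $E_2$ on $\{y,z\}$, $E_3$ on $\{x,z\}$, resampled in the order $E_3, E_2, E_1, E_0$. Building $\hat\tau^4$: the root is $E_0$; $E_1$ and $E_2$ become its children at depth $1$; $E_3$ is dependent with both $E_1$ (via $x$) and $E_2$ (via $z$), and the rule may attach it under $E_2$, giving a legitimate appearing tree-structure in which $E_3$ is \emph{not} a descendant of $E_1$. Then for $v$ the node labeled $E_1$ your rule gives $j_v(x)=1$, yet at time $t_v=3$ the algorithm consults $Y_{x,2}$ (since $x$ was already resampled when $E_3$ was processed at time $1$), so claim (1) fails; moreover the nodes labeled $E_1$ and $E_3$ both receive the slot $(x,1)$, so claim (2) fails as well, and without disjoint slots the independence argument that yields $\prod_v P_\Omega(B_v)$ collapses.

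The repair is the standard one: define the slot by depth rather than by descent, $j_v(i) = 1 + |\{u \in V(\tau) : i \in \mathrm{vbl}(B_u),\ d(u) > d(v)\}|$, and first prove the structural fact that in any tree that can appear, two distinct nodes whose labels satisfy $\sim$ occupy distinct depths, with the node resampled earlier lying strictly deeper (this is where the "attach at greatest depth'' rule is actually used: every resampling at a time $s<t_v$ of an event sharing a variable with $B_v$ must enter the tree, at depth $>d(v)$, and conversely any deeper node sharing a variable with $B_v$ corresponds to an earlier time). With that, the number of table entries of variable $i$ consumed before time $t_v$ equals the number of strictly deeper nodes involving $i$, which gives both the correct slot identification and pairwise distinctness of slots, and your final independence computation then goes through verbatim. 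In the example above the depth-based rule assigns $x$-slots $1,2,3$ to the nodes $E_3, E_1, E_0$ respectively, as it should.
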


One key result of \cite{moser-tardos} is the following:
\begin{proposition}[\cite{moser-tardos}]
\label{a1fapp-gen-tree-prop}
Let $B$ be any bad event. The total weight of all tree-structures rooted in $B$ is at most $\mu(B)$.
\end{proposition}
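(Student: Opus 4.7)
The plan is to prove $\sum_{\tau : \mathrm{root}(\tau) = B} w(\tau) \le \mu(B)$ by induction on the depth of tree-structures, using a recursive decomposition that mirrors Pegden's condition~(\ref{a1eqn:pegden-asymm}). The crucial structural ingredient is a combinatorial constraint on the labels of sibling nodes: in any tree-structure that could be produced by the construction in Section~\ref{mt-review-sec}, the children of any node labeled $B$ carry labels forming a subset $I \subseteq N(B)$ which is independent under $\sim$. To see this, recall that the construction attaches each new leaf to the \emph{deepest} previously placed node whose label is $\sim$-related; so if two siblings had related labels, the one added second (going backward in time) would have been attached under the other rather than beside it. In particular, since $B' \sim B'$ always holds, sibling labels must be distinct, so $I$ is a genuine independent set rather than a multiset.

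Given this, I would define $\Psi^{(k)}(B)$ to be the total $w$-weight of tree-structures rooted at $B$ of depth at most $k$, and prove by induction on $k$ that $\Psi^{(k)}(B) \le \mu(B)$ for every $B \in \mathcal{B}$. The base case $k = 0$ gives just the singleton tree, and $\Psi^{(0)}(B) = P_{\Omega}(B) \le \mu(B)$ follows from (\ref{a1eqn:pegden-asymm}) with $I = \emptyset$. For the inductive step, every tree-structure of depth $\le k+1$ rooted at $B$ arises by choosing the independent subset $I \subseteq N(B)$ of labels for the root's children and, for each $B' \in I$, attaching a subtree of depth $\le k$ rooted at $B'$. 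Since $w$ factors as the product of $P_{\Omega}(B)$ with the subtree weights, this yields
\[
\Psi^{(k+1)}(B) \;=\; P_{\Omega}(B) \cdot \sum_{\substack{I \subseteq N(B) \\ I \text{ independent under } \sim}} \prod_{B' \in I} \Psi^{(k)}(B').
\]
Termwise application of the inductive hypothesis $\Psi^{(k)}(B') \le \mu(B')$, followed by Pegden's condition~(\ref{a1eqn:pegden-asymm}), delivers $\Psi^{(k+1)}(B) \le \mu(B)$. Since weights are non-negative, $\Psi^{(k)}(B)$ is non-decreasing in $k$ and its monotone limit is exactly $\sum_{\tau : \mathrm{root}(\tau) = B} w(\tau)$, yielding the desired bound.

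The main obstacle is the sibling-independence observation. Without it, the natural recursion would sum over all subsets of $N(B)$ (or, if tree-structures were treated as ordered trees, all ordered tuples, introducing factorial blow-up), and the bound would not line up with the right-hand side of (\ref{a1eqn:pegden-asymm}). Securing this observation from the ``greatest depth'' attachment rule is what allows the inductive sum to match Pegden's independent-set sum \emph{exactly}; once that is in hand, the remainder is a routine induction plus a monotone-limit argument.
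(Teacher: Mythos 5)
Your proof is correct, and the key structural claim you isolate --- that the ``attach to the deepest $\sim$-related node'' rule forces the children of any node labeled $B$ to carry labels forming an independent subset of $N(B)$ --- is exactly Pegden's insight, which is why condition~(\ref{a1eqn:pegden-asymm}) has the independent-set sum rather than the full product $\prod_{B' \in N(B)}(1+\mu(B'))$. Note that the paper itself does not prove this proposition; it is cited from Moser--Tardos and Pegden. Your induction on depth is essentially the standard argument behind Pegden's refinement, whereas Moser and Tardos's original proof of the weaker (classical asymmetric LLL) form proceeds differently, via a coupling with a Galton--Watson branching process whose survival probabilities encode the weights. Two very minor points: your displayed recursion should strictly be an inequality $\Psi^{(k+1)}(B) \leq P_{\Omega}(B)\sum_{I}\prod_{B'\in I}\Psi^{(k)}(B')$ unless you define $\Psi^{(k)}$ as a sum over the (possibly larger) class of all labeled trees satisfying the sibling-independence and $N(\cdot)$-membership constraints rather than only over trees realizable as some $\hat\tau^t$ --- the inequality direction is all you use, so nothing is lost; and the base case only needs $\mu(B) \geq P_{\Omega}(B)$, which follows from~(\ref{a1eqn:pegden-asymm}) since the $I=\emptyset$ term alone contributes $1$. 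Neither affects correctness.
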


In \cite{hss}, Lemma~\ref{wtlemma1} and Proposition~\ref{a1fapp-gen-tree-prop} and were extended to arbitrary events. Given some event $E$ which occurs during the MT algorithm, one can build a ``witness tree'' for it. The tree has a root node, labeled by $E$; one constructs the remainder of the tree in the same manner as we have previously described, going backward in time and inserting nodes labeled by bad-events. These trees have a slightly different form to those analyzed by Moser \& Tardos; their root node is labeled by $E$, and all the other nodes are labeled by bad-events.

Given a tree-structure  $\tau$ rooted in $E$, we say that $\tau$ appears if $\hat \tau^k = \tau$, where $k$ is some time at which $E$ is true during the MT algorithm. The weight of such a tree, whose nodes are labeled by events $E_1, \dots, E_k$ (which are not all necessarily bad-events), is $\prod_{i=1}^k P_{\Omega}(E_i)$. The Witness Tree Lemma applies here as well:
\begin{proposition}[\cite{hss}]
Let $\tau$ be a tree-structure rooted in $E$. The probability that $\tau$ appears is at most $w(\tau)$.
\end{proposition}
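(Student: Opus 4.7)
The plan is to adapt the random-tape / tree-check argument of \cite{moser-tardos}, as generalized in \cite{hss} to non-bad events. Couple the MT execution with an explicit source of randomness: for each variable $X_i$ fix an independent i.i.d.\ sequence $\xi_i(1), \xi_i(2), \ldots$ drawn from the marginal of $\Omega$ on $X_i$, and stipulate that the $j$-th time the algorithm samples or resamples $X_i$, it assigns it the value $\xi_i(j)$. This reproduces the MT distribution, and all subsequent probabilities are over the joint distribution of the tape $(\xi_i(j))_{i,j}$.

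Given a tree-structure $\tau$ rooted at $E$ with nodes labeled by $E_1 = E, E_2, \ldots, E_s$, define a $\tau$-check procedure on the tape: process the nodes of $\tau$ in order of decreasing depth (ending at the root), maintaining for each variable $X_i$ a counter $c_i$; at a node $v$ labeled by event $E_v$, for every variable $X_i$ that $E_v$ depends on, read the next tape entry $\xi_i(c_i)$ and increment $c_i$; the check passes at $v$ if the just-read values make $E_v$ true, and passes overall if it passes at every node. Because the tape entries consumed at different nodes are disjoint and each is an independent draw from the marginals of $\Omega$,
\begin{equation*}
P(\tau\text{-check passes}) \;=\; \prod_{j=1}^s P_\Omega(E_j) \;=\; w(\tau).
\end{equation*}

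It remains to show that if $\tau$ appears in the MT execution then the $\tau$-check passes. Suppose $\hat\tau^k = \tau$, where $k$ is a time at which $E$ is true. By the same leaf-to-root induction as in \cite{moser-tardos}, we can match each non-root node $v_j$ of $\tau$ with the resampling step of MT that produced it; for every variable $X_i$ on which the label of $v_j$ depends, the tape index consumed at $v_j$ by the $\tau$-check equals the tape index used by that resampling in MT, and these indices are distinct across the nodes containing $X_i$. This is the standard consequence of the witness-tree construction rule, which attaches each new node for bad event $B^t$ under a deepest existing node sharing a variable with $B^t$. For the root $v_1$ labeled by $E$, the $\tau$-check consumes, for each variable $X_i$ of $E$, the tape index corresponding to the \emph{current} value of $X_i$ at time $k$; since every descendant of $v_1$ involving $X_i$ was produced by a resampling of $X_i$ at an earlier MT step, the root's tape indices are strictly larger than the descendants', preserving distinctness. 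Then $P(\tau \text{ appears}) \le P(\tau\text{-check passes}) = w(\tau)$.

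The main obstacle, relative to the original Moser--Tardos version, is handling the root node: it is not resampled, so we must justify that the "current-value" tape indices used to evaluate $E$ at time $k$ are disjoint from the resampling indices associated with the child nodes. This is not automatic but falls out of the construction, since any bad event ever attached below $v_1$ was resampled at some step $t < k$, and the tape index of $X_i$ used by $E$ at time $k$ is the one following the latest $X_i$-resampling prior to $k$, which in particular postdates every resampling recorded in $\tau$.
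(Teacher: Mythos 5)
Your proof is correct and follows the same route as the source this proposition is cited from: the paper itself gives no proof (it defers to \cite{hss}), and the argument there is exactly this resampling-table coupling, where appearance of $\tau$ forces the $\tau$-check to pass on variable-disjoint, independent tape entries, giving the bound $w(\tau)=\prod_j P_\Omega(E_j)$. Your handling of the root node (its tape indices postdate those of all descendants because every earlier resampling of a variable of $E$ is recorded as a deeper node) is precisely the extension \cite{hss} makes to the original Moser--Tardos check.
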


In order to state the result of \cite{hss}, it will be convenient to have the following notation: for any event $E$, we define 
\begin{equation}
\label{a1feqn:theta}
\theta(E) = P_{\Omega}(E) \sum_{\substack{\mathcal I \subseteq N(E)\\\text{$\mathcal I$ independent}}} \prod_{B \in \mathcal I} \mu(B)
\end{equation}

Note that 
$$
\theta(E) \leq P_{\Omega}(E)  \prod_{B \sim E} (1 + \mu(B)) \leq P_{\Omega}(E) \exp( \sum_{B \sim E} \mu(B))
$$
for any event $E$, where $\exp(t)$ denotes $e^t$. Also, note that in the symmetric LLL setting, we have $\theta(E) \leq P_{\Omega}(E) \exp(e \cdot p \cdot |N(E)|)$. The asymmetric LLL criterion can be summarized compactly as $\mu(B) \geq \theta(B)$ for all $B$.

\begin{proposition}[\cite{hss}]
\label{hss-result}
Let $E$ be any event. The total weight of all tree-structures with a root node $E$, and the remaining nodes consisting of bad-events, is at most $\theta (E)$.  Hence, the probability that event $E$ occurs in the output of the MT-distribution is at most $\theta(E)$.\footnote{We note that in \cite{hss} a slightly weaker result was proved; this Proposition~\ref{hss-result} follows easily by combining Pegden's analysis \cite{pegden} and Bissacot et al.'s cluster-expansion criterion \cite{bissacot} with the ideas of \cite{hss}}
\end{proposition}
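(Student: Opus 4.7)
The plan is to decompose tree-structures at the root and then reduce to \Cref{a1fapp-gen-tree-prop}, which already handles the bad-event case. Any tree-structure $\tau$ rooted at $E$ (with all remaining nodes labeled by bad events) is determined by (i) the set of children of the root and (ii) the subtrees hanging off of each child. Because $\hat\tau^k$ is built by attaching each new label to the \emph{deepest} feasible node, siblings in a proper witness tree must be pairwise non-$\sim$; moreover, every child of the root is labeled by some $B\in N(E)$. Hence the children of the root form some independent set $\mathcal I\subseteq N(E)$ under $\sim$, and each child $B\in\mathcal I$ is itself the root of a tree-structure with bad-event nodes only. Since $w(\tau) = P_\Omega(E)\cdot\prod_{B\in\mathcal I}w(\tau_B)$ where $\tau_B$ is the subtree rooted at $B$, the sum factorizes.

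First I would write
\[
\sum_{\tau\text{ rooted at }E} w(\tau) \;=\; P_\Omega(E) \sum_{\substack{\mathcal I\subseteq N(E)\\\mathcal I\text{ indep}}} \;\prod_{B\in\mathcal I}\Bigl(\sum_{\tau_B\text{ rooted at }B} w(\tau_B)\Bigr).
\]
Then I would apply \Cref{a1fapp-gen-tree-prop} to each inner sum, replacing $\sum_{\tau_B}w(\tau_B)$ by $\mu(B)$. Comparing with the definition~\eqref{a1feqn:theta} of $\theta(E)$, this yields
\[
\sum_{\tau\text{ rooted at }E} w(\tau) \;\leq\; P_\Omega(E)\sum_{\substack{\mathcal I\subseteq N(E)\\\mathcal I\text{ indep}}} \prod_{B\in\mathcal I}\mu(B) \;=\; \theta(E),
\]
which is the first claim.

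For the second claim, if $E$ holds in the output of the MT algorithm, then $E$ is true at the termination time $T$; running the witness-tree construction backward from a root labeled $E$ at time $T$ yields some appearing tree-structure $\hat\tau^T$ rooted at $E$, whose non-root labels are all bad events (since only bad events are resampled). Hence by the extended Witness Tree Lemma (the unnamed proposition immediately preceding our target), a union bound over all such tree-structures gives
\[
P\bigl(E\text{ in output}\bigr) \;\leq\; \sum_{\tau\text{ rooted at }E} P(\tau\text{ appears}) \;\leq\; \sum_{\tau\text{ rooted at }E} w(\tau) \;\leq\; \theta(E).
\]

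The only mild subtlety is justifying the independent-set structure of the root's children, which is where Pegden's observation enters: one must verify that the witness-tree construction (attaching at the deepest feasible node) forces siblings to be mutually non-$\sim$ rather than only non-equal, so that summing over \emph{independent} sets $\mathcal I$ (the cluster-expansion style bound of Bissacot et al.) is enough. Everything else is the standard factorization argument of Moser--Tardos combined with the extension to arbitrary root events from \cite{hss}.
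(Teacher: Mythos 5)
Your proof is correct and follows exactly the route the paper indicates: the paper does not prove Proposition~\ref{hss-result} itself but attributes it to \cite{hss} with a footnote saying it follows by combining Pegden's deepest-node/independent-siblings observation and the Bissacot et al.\ cluster-expansion bound with the witness-tree argument of \cite{hss}, which is precisely your decomposition of the root's children into an independent subset of $N(E)$, the appeal to Proposition~\ref{a1fapp-gen-tree-prop} for each subtree, and the union bound over appearing trees. (The only nitpick is that your first display is more safely stated as ``$\leq$'' unless you also argue that every choice of independent set and subtrees yields a valid tree-structure, but only that direction is needed, so nothing is lost.)
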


\subsection{A witness tree lemma for internal states}
We now introduce a key lemma which allows us to bound the probability of events occuring in internal states of the MT algorithm. One crucial feature of this lemma is that we can not only compute the probability that $E$ occurs, but we can count the number of times it occurs.
\begin{lemma}
\label{internal-lemma1}
Let $E$ be any event, and let $B \in \mathcal B$. Then
$$
\sum_{t=1}^T P( E(X^t) \wedge B^t = B) \leq \mu(B) \theta(E).
$$

(To clarify the notation, $E(X^t)$ means that event $E$ is true in the configuration $X^t$.)
\end{lemma}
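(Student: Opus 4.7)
My plan is to bound the sum by constructing, for each time $t$ realizing the joint event, a combined witness tree $\hat\pi^t$ that simultaneously certifies $B^t = B$ and $E(X^t)$, then to bound $\sum_t P(\cdots)$ by summing tree weights in the style of Propositions~\ref{a1fapp-gen-tree-prop} and \ref{hss-result}. To build $\hat\pi^t$, I initialize a singleton labeled $E$, forcibly attach a distinguished child labeled $B$ (regardless of whether $B \sim E$), and then process $B^{t-1}, B^{t-2}, \ldots, B^1$ in reverse order, attaching each $B^s$ as a new leaf of the deepest existing node whose label is $\sim B^s$, if any such node exists.

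\textbf{Structure and uniqueness.} Any tree-structure $\pi$ arising this way has the following shape: root labeled $E$; the root's children are $B$ together with an independent set $I \subseteq N(E) \setminus N(B)$ (any non-$B$ root-child must be $\sim E$ in order to have been attached to the root, and cannot be $\sim B$ or else the ``deepest node'' rule would have placed it in $B$'s subtree); and each root-child is the root of a proper Moser--Tardos witness tree (so siblings anywhere in $\pi$ form independent sets under $\sim$). For uniqueness, list the relevant times as $t_1 < t_2 < \cdots$; when constructing $\hat\pi^{t_i}$, each earlier time $t_j < t_i$ contributes an additional $B$-label (at stage $s = t_j$ the tree already contains the distinguished $B$, so $B^{t_j} = B \sim B$ is necessarily attached), so $\hat\pi^{t_i}$ has at least $i$ occurrences of the label $B$. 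Hence distinct times yield distinct trees.

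\textbf{Witness-tree lemma for the combined tree, and summation.} An adaptation of the Moser--Tardos coupling shows that, for any fixed $\pi$ of the above form,
$$
P(\pi \text{ appears as some } \hat\pi^t) \;\leq\; w(\pi) \;:=\; P_\Omega(E) \!\prod_{v \ne \text{root}} \!P_\Omega(\text{label}(v)).
$$
The argument mirrors Lemma~\ref{wtlemma1} and the extension underlying Proposition~\ref{hss-result}: each node's label-event must hold under fresh, independent reads from the randomness table, and forcibly attaching $B$ at the root is harmless because the variables read to evaluate $E(X^t)$ (which combine the just-resampled variables of $B$ with previously-set variables) and the variables read by $B$ and its descendants in $\pi$ remain disjoint by the standard MT bookkeeping. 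Applying Proposition~\ref{a1fapp-gen-tree-prop} to collect the subtree-sums at each child yields
\begin{align*}
\sum_{\pi} w(\pi)
&\leq P_\Omega(E)\cdot \mu(B) \cdot \!\!\sum_{\substack{I \subseteq N(E)\setminus N(B)\\ I\text{ independent}}}\!\!\prod_{B' \in I}\mu(B') \\
&\leq P_\Omega(E)\cdot \mu(B) \cdot \!\!\sum_{\substack{I \subseteq N(E)\\ I\text{ independent}}}\!\!\prod_{B' \in I}\mu(B') \;=\; \mu(B)\,\theta(E).
\end{align*}
Combining uniqueness with the witness-tree bound gives $\sum_t P(E(X^t) \wedge B^t = B) \leq \sum_\pi P(\pi \text{ appears}) \leq \sum_\pi w(\pi) \leq \mu(B)\,\theta(E)$.

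\textbf{Main obstacle.} The step requiring the most care is the witness-tree lemma for the combined tree: one has to verify that the Moser--Tardos and \cite{hss} coupling remains valid when the distinguished child $B$ of the root $E$ is inserted unconditionally, even in the case $B \not\sim E$. Intuitively this causes no trouble because $B$ and $E$ then read disjoint families of variables, but the bookkeeping of which table entry is consumed by which node of $\pi$ must be redone to certify that the reads really are independent and non-overlapping.
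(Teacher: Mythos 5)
Your proposal follows essentially the same route as the paper's proof: the same combined witness-tree construction with a forced child labeled $B$ under the root $E$, the same distinctness argument via counting occurrences of the label $B$, the same appeal to a witness-tree lemma for these combined trees, and the same final bound $\mu(B)\,\theta(E)$ (the paper obtains the weight sum via a surjection from pairs of trees rooted at $E$ and at $B$, while you enumerate the combined tree's structure directly---equivalent bookkeeping). The argument is correct.
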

\begin{proof}

For each time $t$ satisfying $E(X^t)$ and $B^t = B$, one may construct a type of witness tree which we denote $\hat \tau^t$. This is constructed in a similar manner to that of \cite{hss}. We place a node labeled by $E$ at the root and place a child node labeled by $B$ below it. (Note that we do not necessarily have $E \sim B$, and so the $B$ would not necessarily have been placed as a child of $E$ in the standard method for generating witness trees.) We then go backward in time through the execution log of the MT, placing any resampled bad events in the tree (as children of $E$ or $B$ or lower nodes). 

We refer to the set of possible witness trees that can be produced in this fashion as $E/B$-tree-structures.

We note that all the witness trees that are produced in this fashion are distinct; for, in the $k^{\text{th}}$ resampling of $B$, the witness tree $\hat \tau^t$ has $k$ nodes which have label $B$. This implies that
$$
\sum_{t=1}^T [E (X^t) \wedge B^t = B] \leq \sum_{\text{$E/B$-tree-structures $\tau$}} [ \text{$\tau$ appears} ]
$$
where $[E(X^t) \wedge B^t = B]$ is (here and throughout the paper) the Iverson notation, which is one if $E(X^t) \wedge B^t = B$ is true and zero otherwise.

Next, one may show that the witness tree lemma holds for $E/B$-tree-structures. Namely, for each fixed tree-structure $\tau$, we have $P(\text{$\tau$ appears}) \leq w(\tau)$. (The proof of this is nearly identical to Proposition~\ref{hss-result}.) Hence we have
$$
\sum_{t=1}^T P(E (X^t) \wedge B^t = B) \leq \sum_{\text{$E/B$-tree-structures $\tau$}} w(\tau)
$$

So let us consider the total weight of all such $E/B$-tree-structures. We define a mapping $f$ from pairs of tree-structures $\tau_1, \tau_2$ rooted in $E, B$ respectively to an $E/B$-tree $\tau = f(\tau_1, \tau_2)$. This mapping is defined by adding $\tau_2$ as a child of the root node of $\tau_1$. 

This mapping is surjective --- given an $E/B$-tree-structure $\tau$, which has a root node $E$ and a child node $v$ labeled $B$, let $\tau_2$ be the subtree rooted at $v$ and the let $\tau_1 = \tau - \tau_2$; then $f(\tau_1, \tau_2) = \tau$. Furthermore, this mapping has the property that $w( f(\tau_1, \tau_2) ) = w(\tau_1) w(\tau_2)$. Thus, we have that
\begin{align*}
\sum_{\text{$E/B$-tree-structures $\tau$}} w(\tau) &\leq \sum_{\substack{\text{tree-structures $\tau_1$}\\\text{rooted at $E$}}} \sum_{\substack{\text{tree-structures $\tau_2$}\\\text{rooted at $B$}}} w( f(\tau_1, \tau_2)) \\
&=\sum_{\substack{\text{tree-structures $\tau_1$}\\\text{rooted at $E$}}}   \sum_{\substack{\text{tree-structures $\tau_2$}\\\text{rooted at $B$}}}  w(\tau_1) w(\tau_2)
\end{align*}

By Proposition~\ref{hss-result}, we have $\sum_{\text{tree-structures $\tau$ rooted at $E$}} w(\tau) \leq \theta(E)$. By Proposition~\ref{a1fapp-gen-tree-prop}, we have $\sum_{\text{tree-structures $\tau$ rooted at $B$}} w(\tau) \leq \mu(B)$. Hence the total weight of all $E/B$-tree-structures is at most $\mu(B) \theta(E)$.

\end{proof}

\section{Fast search for bad events}
\label{a1fapp-sec:fast}
To implement the MT algorithm, we must search for any bad-events which are currently true (or certify there are none). The simplest way to do this would be to check the entire set $\mathcal B$ in each iteration. This will cost $\Omega(m)$ time per iteration (at least). If the bad-events are provided to us an arbitrary list, this is optimal. However, most applications of the LLL have more bad events than variables, and these bad events are much more structured.

Consider the very first iteration of the MT algorithm, searching for currently-true bad-events. In this case, the variables $X$ are distributed according to $\Omega$, a product distribution. For many problems, one can search random configuration faster (in expectation) than arbitrary configurations. Thus, one should be able to perform the first search step much faster than $\Omega(m)$ time. As the MT algorithm proceeds, the distribution becomes distorted. However, we prove that it does not stray too far from its original distribution. Thus, one can still hope to find bad-events significantly faster on these intermediate distributions than on arbitrary distributions.

For most applications of the MT algorithm, including all those in this paper, the remaining steps of the MT algorithm can be done relatively efficiently. For example, resampling each variable typically takes $O(1)$ time. As the work of resampling variables will always be negligible compared to finding true bad-events, we will ignore this cost throughout.

\subsection{Efficient search algorithms}
One main ingredient of our algorithms is a problem-specific search algorithm $S$ which given an assignment $X$ of the variables, determines all the bad-events currently true on $X$. This search procedure may be randomized, consuming a random source $R$ (which is independent of the random source used to drive the MT algorithm itself). We refer to this as $S(X,R)$.  

In many settings, finding a search algorithm which gives good worst-case bounds can be difficult or impossible. However, we will seek to parametrize the run-time of $S$ so that we can analyze its behavior on distributions drawn from the intermediate stages of MT. We thus define an \emph{event-decomposition} for $S$ to be a set of events $A_{i}$ (not necessarily bad events) and constant terms $c_{i}$, where $i$ ranges over the integers, with the property that 
\begin{equation}
\label{a1feqn:s}
\bE_R [ \text{Time($S(X, R)$)} ] \leq \sum_i c_{i} [A_{i} (X)].
\end{equation}
It is important to note in this definition that the expectation is taken only over the random source $R$ consumed by $S$, \emph{not} on the randomness of the MT process itself.

We can now measure the running time of MT as follows:
\begin{theorem}
\label{a1fapp-thm2}
Given an event-decomposition for $S$ as in (\ref{a1feqn:s}), define $T = \sum_i c_i \theta(A_i)$.
Then, $\bE[\text{run-time of MT}] \leq (1 + \sum_{B \in \mathcal B} \mu(B)) T$.
\end{theorem}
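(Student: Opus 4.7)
The plan is to reduce the theorem to a direct application of Lemma~\ref{internal-lemma1}, using the event-decomposition to express the expected run-time as a weighted sum of occurrence-counts for the events $A_i$ across the entire execution log.

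First I would write the total run-time as a sum over iterations: letting $R_t$ denote the independent random string consumed by the $t^{\text{th}}$ invocation of $S$ and letting $t$ range over all time-steps at which a search is performed (the initial state $X^0$, every intermediate state, and the terminal state $X^T$), the run-time equals $\sum_t \text{Time}(S(X^t, R_t))$. Taking expectations, conditioning on the MT history, and invoking the event-decomposition (\ref{a1feqn:s}) term-by-term, one gets
$$\bE[\text{run-time of MT}] \leq \sum_t \bE\Bigl[\, \sum_i c_i [A_i(X^t)] \,\Bigr] = \sum_i c_i \sum_t P(A_i(X^t)).$$
The only subtlety here is that (\ref{a1feqn:s}) is a statement about a fixed $X$; however, since $R_t$ is independent of $X^t$, the tower property gives the desired factorization cleanly.

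Second, I would bound $\sum_t P(A_i(X^t))$ for each fixed event $A_i$ by splitting off the $t=0$ term. For $t=0$ we have $X^0 \sim \Omega$, so $P(A_i(X^0)) = P_{\Omega}(A_i) \leq \theta(A_i)$ because the empty independent set contributes $1$ to the sum defining $\theta(A_i)$ in (\ref{a1feqn:theta}). For the remaining terms, I would partition by which bad event is resampled at step $t$ and apply Lemma~\ref{internal-lemma1} to each piece:
$$\sum_{t \geq 1} P(A_i(X^t)) = \sum_{B \in \mathcal B} \sum_{t \geq 1} P(A_i(X^t) \wedge B^t = B) \leq \theta(A_i) \sum_{B \in \mathcal B} \mu(B).$$
Adding the $t=0$ contribution yields
$$\sum_t P(A_i(X^t)) \leq \theta(A_i)\Bigl(1 + \sum_{B \in \mathcal B} \mu(B)\Bigr).$$

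Finally, plugging this back in and recognizing $T = \sum_i c_i \theta(A_i)$ gives
$$\bE[\text{run-time of MT}] \leq \Bigl(1 + \sum_{B \in \mathcal B} \mu(B)\Bigr) \sum_i c_i \theta(A_i) = \Bigl(1 + \sum_{B \in \mathcal B} \mu(B)\Bigr) T,$$
as claimed. There is no real obstacle here beyond bookkeeping: the heavy lifting is done by Lemma~\ref{internal-lemma1}, and the theorem is essentially a packaging of that lemma through the event-decomposition abstraction. The mildest care-point is the indexing of the terminal search on $X^T$ (where no $B^T$ is defined), which is cleanly absorbed by treating $X^0$ and the $t \geq 1$ states asymmetrically as above.
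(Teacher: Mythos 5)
Your proposal is correct and follows essentially the same route as the paper's own proof: sum the search cost over all time-steps, bound the $t=0$ term by $P_{\Omega}(A_i) \leq \theta(A_i)$ since $X^0 \sim \Omega$, and bound the $t \geq 1$ terms by partitioning on $B^t = B$ and applying Lemma~\ref{internal-lemma1}, yielding $\bigl(1 + \sum_{B \in \mathcal B} \mu(B)\bigr) T$. Your extra care with the independence of the search randomness $R_t$ and the terminal state is a harmless refinement of the same argument.
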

\begin{proof}
We sum over the times $t = 0, \dots, t-1$ so that
$$
\bE_R[ \text{time} ] \leq \sum_{t=0}^T \bE_R[ \text{Time($S(X^t, R)$)} ] \leq c_i \sum_{t=0}^T \sum_i P(A_i(X^t))
$$

We first consider time $t = 0$. The configuration $X^0$ has exactly the distribution $\Omega$, hence $P(A_i(X^0)) = P_{\Omega}(A_i(X)) \leq \theta(A_i)$.

Next, for each time $t = 1, \dots, T$ we have that
$$
\sum_{t=1}^T P(A_i(X^t)) =  \sum_{B \in \mathcal B}  \sum_{t = 1}^TP(A_i(X^t) \wedge B^t = B).
$$

By Lemma~\ref{internal-lemma1}, this is $\sum_{B \in \mathcal B} \mu(B) \theta(A_i)$. The result follows.
\end{proof}

\subsection{Example: Faster algorithms to construct Ramsey graphs}
\label{a1sec:ramsey}
A classical result in combinatorics is the lower bound on the diagonal Ramsey number $R(k,k) > \frac{\sqrt{2}}{e} k 2^{k/2}$ via the LLL \cite{alon-spencer}. This can be viewed also as an algorithmic challenge: given $k$, two-color the edges of the complete graph $K_n$ for $n =  \lceil \frac{\sqrt{2}}{e} k 2^{k/2} \rceil$, such that no $k$-clique has all $\binom{k}{2}$ edges of the same color.
\begin{proposition}[Follows straightforwardly from MT]
\label{a1ramsey-simple-prop}
For $n =  \lceil \frac{\sqrt{2}}{e} k 2^{k/2} \rceil$, there is an algorithm to construct a two-coloring of $K_n$ avoiding monochromatic $k$-cliques, in expected $2^{k^2/2 + o(k^2)}$ time.
\end{proposition}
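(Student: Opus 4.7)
The plan is to instantiate MT in the standard way for Ramsey lower bounds and then observe that even the naive per-iteration search, multiplied by the expected number of iterations, already achieves the claimed bound.

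\textbf{Setup and LLL verification.} I would use one independent variable $X_e \in \{0,1\}$ per edge of $K_n$, each value taken with probability $1/2$, and introduce a bad-event $B_K$ for every $k$-vertex subset $K$ asserting that all $\binom{k}{2}$ edges inside $K$ are monochromatic. Then $P_{\Omega}(B_K) = p := 2^{1-\binom{k}{2}}$, and $B_K \sim B_{K'}$ exactly when $|K \cap K'| \geq 2$, so $|N(B_K)| \leq \binom{k}{2}\binom{n-2}{k-2} =: d$. The threshold $n = \lceil \frac{\sqrt{2}}{e} k 2^{k/2}\rceil$ is tailored so that $e p d \leq 1$; this is the classical Spencer verification and I would simply quote it. Theorem~\ref{a1fapp-main-thm2} then guarantees that MT terminates with probability $1$ and that each $B_K$ is resampled at most $\mu := ep$ times in expectation.

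\textbf{Counting iterations and per-iteration cost.} The expected total number of resamplings is at most $\sum_K \mu(B_K) \leq \binom{n}{k} \cdot ep$. With $n = \Theta(k \, 2^{k/2})$, Stirling gives $\binom{n}{k} = 2^{k^2/2 + o(k^2)}$, while $ep = 2^{-k^2/2 + o(k^2)}$, so the expected iteration count collapses to $2^{o(k^2)}$. Within each iteration I would use the naive search: enumerate all $\binom{n}{k}$ candidate cliques and check each for monochromaticity in $O(k^2)$ time, at a cost of $2^{k^2/2 + o(k^2)}$ per iteration.

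\textbf{Combining.} The expected total running time is the per-iteration cost times the number of iterations (plus the same cost once for the initial search at $t=0$), i.e.\ $2^{k^2/2 + o(k^2)} \cdot 2^{o(k^2)} = 2^{k^2/2 + o(k^2)}$, as claimed. No step poses a genuine obstacle; the only mildly delicate arithmetic is confirming that the polynomial and $k!$-type overheads from Stirling and from the per-clique check all disappear into the $o(k^2)$ exponent. Note that in this proposition we make no use of the sharper intermediate-state machinery of Lemma~\ref{internal-lemma1} or Theorem~\ref{a1fapp-thm2}; that refinement is exactly what will be needed to go below $2^{k^2/2 + o(k^2)}$ in the subsequent improved algorithm.
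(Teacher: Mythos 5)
Your proposal is correct and matches the paper's own proof: both bound the expected number of resamplings by $\binom{n}{k}\cdot ep$ via Theorem~\ref{a1fapp-main-thm2} and charge each iteration the naive cost of scanning all $\binom{n}{k}$ cliques at $O(k^2)$ per check, giving $2^{k^2/2+o(k^2)}$ total (the paper writes this as $O\bigl(ep\binom{k}{2}m^2\bigr)$ with $m=\binom{n}{k}$, which is the same product). Your explicit verification of the symmetric LLL condition and the remark that no intermediate-state machinery is needed are consistent with, and slightly more detailed than, the paper's one-line treatment.
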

\begin{proof}
For each $k$-clique, there is a bad-event that it is monochromatic; this has probability $p = 2^{1-\binom{k}{2}}$. There are $m = \binom{n}{k} \leq n^k / k!$ cliques, and so the expected number of resamplings is at most $m e p$. For each resampling, we check each $k$-clique, which takes $\binom{k}{2} m$ time. Thus, the total expected time in $O(e p \binom{k}{2} m^2) \leq 2^{k^2/2 + o(k^2)}$.
\end{proof}

Although there are exponentially many bad-events in this case, they have a combinatorial structure and it is not necessary to search each bad-event individually. Rather, we can use a type of branching algorithm to enumerate the cliques. This search algorithm was developed in \cite{harris2} in the context of a similar application of the LLL; however, in that case, it was only necessary to analyze the \emph{initial} configuration.

\begin{proposition}
\label{a1ramsey-search}
There is a deterministic search algorithm $S$ for monochromatic $k$-cliques with an event decomposition
$$
\text{Time}( S(X) ) = n^{O(1)} \negthickspace \negthickspace \sum_{\substack{\text{cliques $I$}\\ |I| \leq k}} [ \text{$I$ monochromatic on $X$} ]
$$
\end{proposition}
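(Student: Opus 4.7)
The plan is to use a recursive branching (depth-first) enumeration of monochromatic cliques, and to charge the running time at each recursive call to the monochromatic clique currently being extended. Since each such partial clique is itself a monochromatic clique of size at most $k$ on $X$, this yields exactly the desired event decomposition.

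More concretely, I would fix an ordering $v_1 < v_2 < \cdots < v_n$ on the vertices and, for each starting vertex $v$ and each color $c \in \{\text{red}, \text{blue}\}$, call a procedure $\textsc{Extend}(I, c)$ with $I = \{v\}$. The procedure $\textsc{Extend}(I, c)$ does the following: if $|I| = k$, report the monochromatic $k$-clique $I$; otherwise, scan all vertices $u > \max(I)$ and, for each such $u$ which is joined to every vertex of $I$ by an edge of color $c$, recursively call $\textsc{Extend}(I \cup \{u\}, c)$. The ordering ensures that each monochromatic clique of color $c$ is visited along exactly one chain of recursive calls, so the set of recursive invocations of $\textsc{Extend}$ is in bijection with pairs $(I, c)$ where $I$ is a monochromatic clique of color $c$ with $1 \leq |I| \leq k$.

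Next, I would bound the local work per recursive call. Inside a single call $\textsc{Extend}(I, c)$ we need to enumerate candidate extensions $u > \max(I)$ and, for each, check that all edges $\{u, w\}$ with $w \in I$ have color $c$. This takes $O(n|I|) = O(nk) = n^{O(1)}$ time, not counting the time spent inside the recursive subcalls. Summing over all recursive invocations and multiplying by the two choices of $c$, the total running time is bounded by
\[
n^{O(1)} \cdot \bigl|\{(I, c) : I \text{ is a monochromatic clique of color } c \text{ on } X,\ |I| \leq k\}\bigr|,
\]
which is at most $n^{O(1)} \sum_{\text{cliques } I,\, |I| \leq k} [I \text{ monochromatic on } X]$, as desired. (The constant 2 from the choice of $c$ is absorbed into $n^{O(1)}$.)

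There is no real obstacle here beyond bookkeeping: the only subtle point is the use of the vertex ordering to avoid counting each monochromatic clique more than once per color, so that the recursion tree maps injectively into the set of monochromatic cliques of size at most $k$. Assuming this, the event decomposition in the statement is immediate, with $A_I$ being the event ``$I$ is monochromatic on $X$'' and $c_I = n^{O(1)}$ uniform in $I$.
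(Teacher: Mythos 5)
Your proposal is correct and follows essentially the same approach as the paper: a branching enumeration that builds up monochromatic cliques one vertex at a time and charges $n^{O(1)}$ work to each monochromatic clique of size at most $k$. The only cosmetic differences are that the paper iterates level-by-level starting from edges (monochromatic $2$-cliques) rather than using a DFS from single vertices with a vertex ordering, but the accounting and the resulting event decomposition are the same.
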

\begin{proof}
We recursively enumerate all $i$-cliques, for $i=2, \dots, k$. Initially, every edge is a monochromatic $2$-clique. Next, for each monochromatic $i$-clique $I$, we test all possible vertices $v$ and check if $I \cup \{v \}$ is also monochromatic. It takes $\binom{i}{2}$ time to check each $i$-clique, so the total time for this process (extending a given $i-1$ clique to $i$-cliques) is at most $O(n \binom{i}{2}) \leq n^{O(1)}$.
\end{proof}

\begin{proposition}
For $n =  \lceil \frac{\sqrt{2}}{e} k 2^{k/2} \rceil$, there is an algorithm to construct a two-coloring of $K_n$ avoiding monochromatic $k$-cliques, in expected $2^{k^2/8 + o(k^2)}$ time.
\end{proposition}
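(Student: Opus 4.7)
The plan is to apply Theorem~\ref{a1fapp-thm2} with the branching search algorithm of Proposition~\ref{a1ramsey-search}. Its event-decomposition has an event $A_I$ (``clique $I$ is monochromatic'') for each clique $I$ of size $i \in \{2,\ldots,k\}$, each with cost $c_I = n^{O(1)}$. By symmetry $\theta(A_I)$ depends only on $i$; call it $\theta_i$. Then Theorem~\ref{a1fapp-thm2} says the expected runtime is at most $\bigl(1 + \sum_{B \in \mathcal B} \mu(B)\bigr) \cdot n^{O(1)} \sum_{i=2}^k \binom{n}{i}\, \theta_i$.

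The key step is to show $\theta_i \leq e \cdot 2^{1-\binom{i}{2}}$. I would start from the standard estimate $\theta(E) \leq P_\Omega(E) \exp\!\bigl(\sum_{B \sim E} \mu(B)\bigr)$ noted in Section~\ref{mt-review-sec}; since $P_\Omega(A_I) = 2^{1-\binom{i}{2}}$, it suffices to prove $\sum_{B \sim A_I} \mu(B) \leq 1$. The bad events $B \sim A_I$ are the monochromatic $k$-cliques sharing an edge with $I$, of which there are at most $\binom{i}{2}\binom{n-2}{k-2}$. Since $\mu(B) = ep$ and the symmetric LLL criterion underlying Proposition~\ref{a1ramsey-simple-prop} gives $ep \cdot \binom{k}{2}\binom{n-2}{k-2} \leq 1$ at this value of $n$, we obtain
$$ \sum_{B \sim A_I} \mu(B) \leq ep \cdot \binom{i}{2}\binom{n-2}{k-2} \leq \binom{i}{2}\big/\binom{k}{2} \leq 1. $$

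What remains is an optimization over $i$. Using $\binom{n}{i} \leq (en/i)^i$ and $n = \Theta(k \cdot 2^{k/2})$, one gets $\log_2\!\bigl(\binom{n}{i}\, 2^{-\binom{i}{2}}\bigr) = ik/2 - i^2/2 + O(i \log k)$, a concave quadratic in $i$ maximized near $i=k/2$ with value $k^2/8 + o(k^2)$. Summing over $i \in \{2,\ldots,k\}$ and absorbing the $n^{O(1)}$ factor gives $T = 2^{k^2/8 + o(k^2)}$. Combined with $\sum_B \mu(B) \leq m \cdot ep = O(2^k/\sqrt{k}) = 2^{o(k^2)}$, we obtain the claimed expected runtime of $2^{k^2/8 + o(k^2)}$.

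The main obstacle is precisely the bound on $\sum_{B \sim A_I} \mu(B)$: the naive estimate gives $O(i^2)$ (since $\mu(B) = ep$ and $|N(A_I)| = \Theta(i^2)\cdot \binom{n-2}{k-2}$), which would make $\theta_i$ grow like $\exp(\Omega(i^2))$ and completely overwhelm the $2^{-\binom{i}{2}}$ factor. Pulling a $1/\binom{k}{2}$ saving out of the LLL criterion reduces this quantity to the dimensionless ratio $\binom{i}{2}/\binom{k}{2} \leq 1$, and is exactly what enables the improvement from $2^{k^2/2}$ to $2^{k^2/8}$.
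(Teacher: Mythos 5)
Your proposal is correct and follows essentially the same route as the paper: apply Theorem~\ref{a1fapp-thm2} to the event-decomposition of Proposition~\ref{a1ramsey-search}, bound $\theta_i \leq P_{\Omega}(A_I)\exp\bigl(\sum_{B \sim A_I}\mu(B)\bigr)$ using the neighborhood count $\binom{i}{2}\binom{n-2}{k-2}$, and then maximize $\binom{n}{i}2^{-\binom{i}{2}}$ near $i = k/2$ to get $2^{k^2/8+o(k^2)}$, with $\sum_B \mu(B) = m e p = 2^{O(k)}$. The only (harmless) difference is that the paper simply carries the factor $\exp\bigl(e p\, i^2 n^{k-2}/(k-2)!\bigr)$ and absorbs it into the $2^{o(k^2)}$, whereas your cleaner claim that $e p \binom{k}{2}\binom{n-2}{k-2} \leq 1$ at exactly $n = \lceil \frac{\sqrt{2}}{e}k 2^{k/2}\rceil$ is off by a $\mathrm{poly}(k)$ factor (the classical symmetric-LLL calculation needs a $(1-o(1))$ slack in $n$, an imprecision the paper's own statement shares); since the resulting correction is only $\exp(O(k^{3/2})) = 2^{o(k^2)}$, this does not affect the conclusion.
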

\begin{proof}
We apply Theorem~\ref{a1fapp-thm2} to the event-decomposition of Proposition~\ref{a1ramsey-search}. We have:
{\allowdisplaybreaks
\begin{align*}
T &= n^{O(1)} \smashoperator{\sum_{\substack{\text{cliques $I$}\\ |I| \leq k}}} \theta(  \text{$I$ monochromatic on $X$} ) \\
&\leq n^{O(1)} \sum_{i=2}^k \sum_{\text{$i$-cliques $I$}} 2^{1 - \binom{i}{2}} \exp(e p |N(I)| ) \\
&\leq n^{O(1)} \sum_{i=2}^k n^i 2^{- \binom{i}{2}} \exp(e p i^2 n^{k-2} / (k-2)! )  \\
%&\leq n^{O(1)} \max_{i \in [2,k]} n^i 2^{- \binom{i}{2}} \exp(e p i^2 n^{k-2} / (k-2)! )  \\
&\leq 2^{k^2/8 + o(k^2)}
\end{align*}
}
Now, $\sum_{B} \mu(B) \leq m e p = 2^{O(k)}$. Hence by Proposition~\ref{a1fapp-thm2} the overall run-time of MT is $2^{k^2/8 + o(k^2)}$.
 \end{proof}

This is a polynomial improvement over Proposition~\ref{a1ramsey-simple-prop}, roughly reducing the time to the fourth root.

Many of our algorithms to search for bad-events have the same flavor as the search for Ramsey graphs: we want to find some structured bad-event, which involves many variables. Instead of seeking to enumerate over the entire set of variables at once, we build up the variables gradually. This leads to a type of branching process. At level $i$ of the process, we have ``guessed'' a set of $i$ variable indices; we then check whether it is possible that there is a bad-event involving them. If we can rule this out, we abort the branching process; otherwise we extend it by trying to add a new variable. We refer to each partial list of variables, which is putatively involved in a true bad-event, as a \emph{story}. For example, in the case of Ramsey graphs, a story is an $i$-clique for $i \leq k$.

\subsection{Depth-first-search Moser-Tardos}
\label{a1fapp-sec:dfs}
As we have seen, the main cost in the MT algorithm is to search for any bad-events which are currently true (or certify there are none). The simple way to do this, as we have discussed in Section~\ref{a1fapp-sec:fast}, is to check the entire set $\mathcal B$ in each iteration. This is rather wasteful; an optimization suggested by Joel Spencer, is to maintain a stack which records all the currently-true bad-events. At the very beginning of the MT algorithm, we scan the entire set $\mathcal B$ to find all the true bad-events. Whenever we resample a bad-event $B$, we only need to check its neighbors to determine whether they became true (and if so, we add them to the stack); we do not need to search the entire space. 

For example, in the symmetric LLL setting, we must expend $O(d)$ work after each each resampling (assuming that we have an adjacency list for the dependency graph and it requires unit time to check a bad-event). As the expected number of resamplings overall is $O(m/d)$, this gives a total expected running time $O(m)$. If the bad-events are simply provided to us as an arbitrary list, this is already optimal. 

We refer to this as a ``depth-first-search" MT. This can potentially improve the runtime of MT by up to a factor of $n$; because instead of needing to re-scan all the bad-events, we only need to scan those affected by the most-recently-resampled variables.

For applications with structured bad-events, we can speed up the depth-first search strategy by taking advantage of the random nature of the MT-distribution. We can hope to design a search algorithm which takes as input a configuration of variables, \emph{and a bad-event $B$}, and lists all of the bad events $B' \sim B$ which hold in it.

\smallskip \noindent \textbf{A key ingredient: data structure $D$.}
One main ingredient of our algorithms is a problem-specific data-structure $D$ which, given a bad event $B$ and a configuration $X$, can determine all the bad events $B' \sim B$ which may be caused to be true by resampling $X$. This data-structure also requires an initialization step, in which given a variable-assignment $X$ we find \emph{all} bad events currently true in it, as well as recording any other information about $X$ needed to use the data structure later. (Initialization is typically much cheaper and simpler than the updating step, and is only performed once, so we mostly ignore it in our analyses.)

In addition, we may want to use a randomized data-structure; we allow $D$ to uses a random bit-string $R$ (which is independent of the randomness used to drive the MT algorithm itself). This leads to the following formulation:
\begin{theorem}
\label{a1fapp-thm3}
Suppose that we are given an event-decomposition $\{c_{B,i}, A_{B,i} \mid B \in \mathcal B \}$ and a randomized data-structure $D$ which satisfies the following condition: 

Suppose that, given a bad-event $B$ and configuration $X$, the data-structure $D(B,X)$ finds all the bad-events which are true on $X$ and are dependent with $B$. Furthermore, for any fixed $B, X$ suppose we have
\begin{align*}
\bE_R \Bigl[ \text{Time}(D(B, X)) \Bigr] \leq \sum_i c_{B,i} [A_{B,i} (X)]
\end{align*}

For each event $B$, define $T_B = \sum_i c_{B,i} \theta(A_{B,i})$.

Then, the expected run-time of the MT algorithm, \emph{exclusive of time required for the initialization steps},  is at most $\sum_{B \in \mathcal B} \mu(B) T_B$.
\end{theorem}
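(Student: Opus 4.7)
The plan is to reduce the claim to a straightforward application of Lemma~\ref{internal-lemma1}, combined with linearity of expectation and the event decomposition hypothesis. The DFS variant of MT performs one call to $D(B^t, X^t)$ per resampling, so the total running time (outside initialization) is bounded by $\sum_{t=1}^{T} \text{Time}\bigl(D(B^t, X^t)\bigr)$. Taking expectation and using the tower property, I would first condition on the MT execution (i.e.\ on the sequence $B^1, \dots, B^T$ and on $X^0, \dots, X^T$), and average over the internal randomness $R$ of $D$. By the event-decomposition hypothesis this inner expectation is at most $\sum_i c_{B^t,i}[A_{B^t,i}(X^t)]$.

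Next I would interchange the finite sum over $i$ with the expectation over the MT process, and rewrite
\[
\bE\Bigl[\,\sum_{t=1}^{T} \sum_i c_{B^t,i} [A_{B^t,i}(X^t)]\Bigr]
= \sum_{B \in \mathcal B} \sum_i c_{B,i} \sum_{t=1}^{T} P\bigl(A_{B,i}(X^t)\wedge B^t=B\bigr),
\]
partitioning resamplings by which bad event $B$ was resampled at each step. Applying Lemma~\ref{internal-lemma1} with $E = A_{B,i}$ bounds the inner sum by $\mu(B)\theta(A_{B,i})$, and collecting the factor $\sum_i c_{B,i}\theta(A_{B,i}) = T_B$ gives exactly the claimed bound $\sum_{B\in\mathcal B}\mu(B)T_B$.

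The main (and essentially only) subtlety is justifying the interchange of the expectation with the sum over $t$ when $T$ is itself a random (possibly unbounded) stopping time; I would handle this by writing $\sum_{t=1}^{T} = \sum_{t=1}^{\infty} [t \le T]$ so that all summands are nonnegative and Tonelli applies, noting that under the asymmetric LLL criterion the finiteness of $\sum_B \mu(B)$ and Theorem~\ref{a1fapp-main-thm} guarantee $T < \infty$ almost surely. A second minor point is being careful about whether $X^t$ denotes the configuration before or after the $t$-th resampling; since Lemma~\ref{internal-lemma1} is stated in terms of pairs $(E(X^t), B^t = B)$, I would simply use the same convention, which matches the DFS algorithm's call pattern of invoking $D(B^t, X^t)$ immediately after performing the $t$-th resampling. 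No deeper combinatorial ingredient beyond Lemma~\ref{internal-lemma1} seems to be needed, so I do not expect a serious obstacle.
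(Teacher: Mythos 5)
Your proposal is correct and follows essentially the same route as the paper's proof: bound the total time by the sum of $\text{Time}(D(B^t,X^t))$ over resamplings, apply the event-decomposition hypothesis, partition by the resampled event $B$, and invoke Lemma~\ref{internal-lemma1} with $E = A_{B,i}$. Your added care about interchanging the expectation with the sum over the random stopping time $T$ (via nonnegativity and Tonelli) is a point the paper's proof passes over silently, but it does not change the argument.
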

\begin{proof}
We sum over time $t = 1, \dots T$:
\begin{align*} 
\bE \Bigl[ \sum_{t=1}^T \text{Time}(D(B^t, X^t)) \Bigr] &= \sum_{t=1}^T  \sum_i c_{B^t,i} P(A_{B^t,i} (X^t)) \\
&= \bE \Bigl[ \sum_{t=1}^T \sum_{B \in \mathcal B} \sum_i c_{B,i} [A_{B,i} (X^t) \wedge B^t = B] \Bigr] \\
&= \sum_{B \in \mathcal B}  \sum_i c_{B,i} \sum_{t=1}^T P(A_{B,i} (X^t) \wedge B^t = B) \\
&\leq \sum_{B \in \mathcal B} \mu(B) \sum_i c_{B,i} \theta(A_{B,i}) \qquad \text{by Lemma~\ref{internal-lemma1}} \\
&\leq \sum_{B \in \mathcal B} \mu(B) T_B
\end{align*}
\end{proof}

\subsection{Example: hypergraph two-coloring}
\label{a1fapp-sec:hyp2col}

We consider a more technically involved example. Suppose we are given a $k$-uniform hypergraph with $m$ hyper-edges, and we wish to find a two-coloring of the vertices so that no edge is monochromatic. For each edge $f$, let $N(f)$ denote the edges which intersect with $f$ (including $f$ itself). If $|N(f)| \leq L \leq 0.17 \sqrt{\frac{k}{\ln k}} 2^k$ for all edges $f$, then MT can be applied to the approach of \cite{radha} to find a good coloring. The analysis of \cite{radha} introduces a separate bad-event for each intersecting \emph{pair} of edges; thus, straightforward analysis would indicate a running time $m L \cdot \text{poly($k$)}$; potentially, a quadratic-time algorithm. (Another variant of that algorithm, given in \cite{cherkashin}, would lead to an analogous result.) We reduce this to $m \log^{O(1)} m$ time.

\textbf{Set-up for the LLL.} We begin by describing a version of the algorithm of \cite{radha} to find such a coloring via the LLL. First, each vertex chooses a color at random. Next, we choose a random ordering of the vertices (equivalently, each vertex independently chooses a random rank $\rho_v \in [0,1]$). For each vertex $v$ in this order, we look for any monochromatic edges of which $v$ is the lowest-ranking vertex. If we find any such edge, we flip the color of $v$. 

It is easy to implement this procedure in time $O(m)$, but the probability that it succeeds can be very low when $m \gg L$. We we will assume that $m \geq \Omega(\sqrt{\frac{k}{\log k}} 2^k)$; otherwise, as shown in \cite{radha}, then this algorithm produces a good coloring with probability $\Omega(1)$.

This procedure fails to produce a valid coloring only if the following occurs. There is some edge $f$, originally colored blue (w.l.o.g.), and vertex $v \in f$ is the lowest-ranking vertex of $f$. There is another edge $f'$, which intersects $f$ in exactly $v$, with the property that all other vertices in $f'$ are either red or have rank lower than $v$. In that case, it is possible that all the originally blue vertices in $f'$ are flipped, becoming red. This type of edge will remain monochromatic in the final coloring.

Each vertex has two variables associated with it: its (original) color and its rank $\rho_v$. We use the MT algorithm to select both values.

We will translate this into the LLL framework in a somewhat unusual way. We define a bad event 
$B^{\text{blue}}(f,f')$ to mean that the above event occurred \emph{and the minimum-ranking vertex in $f$ had rank $\leq R$}, where $R = \frac{\ln k}{2 k}$. We define a bad event $B^{\text{blue}}(f)$ to mean that edge $f$ was originally blue and \emph{all vertices in it had rank $> R$}. We similarly define $B^{\text{red}}(f)$ and $B^{\text{red}}(f, f')$. Note that the algorithm fails iff at least one of the four types of bad events occurs. The reason we are distinguishing the two cases of the minimum-ranking vertex in $f$, is that when this rank is large, then fixing $f$ will typically break many $f'$; so it is not beneficial to take a union-bound over all such $f'$.

We now use the asymmetric LLL. For an event $B(f)$, we assign $\mu(B(f)) = \sqrt{e} p_1$ and
for an event $B(f,f')$ we assign $\mu(B(f,f')) = e p_2$, where $p_1 = P_{\Omega}(B(f)), p_2 = P_{\Omega}(B(f,f'))$.

Let us first compute $p_1$. For an event $B^{\text{blue}} (f)$, it must occur that all the vertices in $f$ are blue and have rank $> R$; this occurs with probability $p_1 = 2^{-k} (1-R)^k$.

Next, let us compute $p_2$. Suppose $f, f'$ intersect in $v$. For an event $B^{\text{blue}} (f,f')$, it must occur that all vertices in $f$ are blue; this occurs with probability $2^{-k}$. All the vertices in $f$, other than $v$, must have rank exceeding that of $v$; this occurs with probability $(1-\rho_v)^{k-1}$. All the vertices in $f'$, other than $v$, must be either red or have rank less than $v$; this occurs with probability $(1/2 + 1/2 \rho_v)^{k-1}$. Hence, integrating over $\rho_v \in [0,R]$, we have
\begin{align*}
p_2 &\leq \int_{\rho_v = 0}^{R} d \rho_v ~2^{-k} (1-\rho_v)^{k-1} (1/2 + 1/2 \rho_v)^{k-1} \\
&= 2^{1-2 k} \int_{\rho_v = 0}^{R} d \rho_v~(1-\rho_v)^{k-1} (1 + \rho_v)^{k-1} \\
&\leq 2^{1-2 k} R
\end{align*}

Finally, we need to analyze the dependency. Consider an edge $f$; let us define
$$
t = \prod_{B} (1 + \mu(B))
$$
where $B$ ranges over all bad events touching $f$. One can verify there are at most $2 L$ events of type $B(f')$ (one for each color) and at most $4 L^2$ events of $B(f', f'')$ (either $f'$ or $f''$ could touch $f$, and there are two possible colors). Hence we have 
$$
t \leq (1 + \sqrt{e} p_1)^{2 L} (1 + e p_2 )^{4 L^2} \leq \exp(2 L \sqrt{e} p_1 + 4 L^2 e p_2)
$$

The LLL criterion is now 
$$
p_1 \sqrt{e} \geq p_1 t \qquad p_2 e \geq p_2 t^2
$$
which can be seen to be satisfied for $L \leq 0.17 \sqrt{\frac{k}{\ln k}} 2^k$ and $k$ sufficiently large. In this case also we have $t \leq O(1)$.

\textbf{A data-structure to find bad-events.} Now that we have formulated this problem for the LLL, we come to the core algorithmic challenge: finding bad-events efficiently. For this, we will need a data-structure $D$ to track the following information: for each vertex $v$, we use a doubly-linked list to enumerate all \emph{monochromatic} edges which contain $v$.

For any edge $f$ and vertex-coloring $X$, we let $A(X,f)$ be the event that $f$ is monochromatic on $X$.
\begin{proposition}
\label{data-structure-event}
The data-structure $D$ allows us to find bad-events with an event-decomposition
$$
D(B,X) \leq k^{O(1)} \sum_{f \sim B} \Bigl( \sum_{g \in N(f)} 1 + \sum_{g' \in N(g)} \bigl( [A(g,X)] + [A(g',X)] \bigr) \Bigr)
$$
\end{proposition}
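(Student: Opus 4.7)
The plan is to describe a branching search that, starting from the (at most two) edges appearing in $B$, explores neighbors at distance at most two in the intersection graph of edges, and uses $D$ to avoid enumerating non-monochromatic edges at the deeper level.

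More precisely, for each edge $f$ in the support of $B$ (so $f \sim B$), we use the edge adjacency list to enumerate every $g \in N(f)$. For each such $g$ we perform three sub-steps, each costing $k^{O(1)}$ of bookkeeping per object touched: (i) test in $O(k)$ time whether $B^{\text{blue}}(g)$ or $B^{\text{red}}(g)$ currently holds (checking monochromaticity and the rank condition); (ii) if $g$ is monochromatic on $X$, enumerate every $g' \in N(g)$ and check whether $B^{\text{blue}}(g,g')$ or $B^{\text{red}}(g,g')$ holds; (iii) to pick up the remaining two-edge bad events where the edge touching $f$ is the \emph{second} coordinate, query $D$ at each vertex $v \in g$ to list the monochromatic edges through $v$, and for every such $g'$ test whether $B^{\text{blue}}(g',g)$ or $B^{\text{red}}(g',g)$ holds.

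To verify completeness, suppose $B' \sim B$ holds on $X$. If $B' = B(h)$ for some edge $h$, then $h$ must share a vertex with some $f$ in the support of $B$, so $h \in N(f)$ and $B'$ is examined in step (i) with $g = h$. If $B' = B(h,h')$, then the defining clauses require $h$ to be monochromatic and $h'$ to intersect $h$, and additionally $h \cup h'$ must meet the support of $B$. If the intersecting edge is $h$, we recover $B'$ in step (ii) with $(g,g') = (h,h')$, paying exactly when $A(g,X)$ holds; otherwise $h' \in N(f)$ and we recover $B'$ in step (iii) with $(g,g') = (h',h)$, paying exactly when $A(g',X)$ holds (since then $h = g'$ is monochromatic, so it appears on the list $D$ returns when queried at any vertex of $g \cap g'$).

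Summing the per-object costs gives the event decomposition: step (i) contributes the $\sum_{g \in N(f)} 1$ term, step (ii) contributes the $\sum_{g' \in N(g)} [A(g,X)]$ term (since it is only executed when $g$ is monochromatic), and step (iii) contributes the $\sum_{g' \in N(g)} [A(g',X)]$ term after absorbing the at-most-$k$-fold overcounting (a monochromatic $g' \in N(g)$ is returned once for each $v \in g \cap g'$) into the $k^{O(1)}$ factor. The main delicate point will be step (iii): one must be sure that using $D$ to enumerate only monochromatic $g'$'s does not miss any true $B(g',g)$, which is automatic because monochromaticity of $g'$ is a necessary condition for that bad event.
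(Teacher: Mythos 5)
Your proposal is correct and follows essentially the same route as the paper: enumerate $N(f)$ for each $f$ in the support of $B$ (paying the $\sum_{g \in N(f)} 1$ term), branch to $N(g)$ only when $g$ is monochromatic (the $[A(g,X)]$ term), and use the per-vertex monochromatic-edge lists to enumerate first coordinates of events $B(g',g)$ whose second coordinate touches $f$ (the $[A(g',X)]$ term, with the $k$-fold overcounting absorbed into $k^{O(1)}$), exactly as in the paper's ``critical observation.'' The only (harmless) difference is that you make the completeness check explicit while omitting the explicit accounting of the $O(k)$-per-edge cost of updating the linked lists, which is dominated by the $\sum_{g \in N(f)} 1$ term anyway.
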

\begin{proof}
To simplify the notation, we write $f \sim B$ if $f$ is involved in $B$; that is, if $B$ is of the form $B(f)$ or $B(f,f')$.

First, we consider the cost to update the list of monochromatic edges. If an edge $f$ was originally monochromatic and is resampled, we delete it from the $k$ corresponding vertex-lists; that takes time $O(k)$. If an edge $f$  becomes monochromatic, we add it to the $k$ corresponding lists, again in time $O(k)$. The only edges which can change their status are those intersecting $B$, and so this is at most $\sum_{f \sim B} k$.

Next, we show how to find the bad events caused by resampling some edge $f$. To find an event of type $B(g)$ affected by $f$, we simply loop over all the monochromatic edges $g$ intersecting $f$, and check if they also satisfy the property that $\rho(w) \geq R$ for all $w \in g$; this takes time $\sum_{g \in N(f)} k^{O(1)}$.

Next, we search for events $B(g, g')$ in the configuration $X$, where $g \in N(f)$: we begin by looping over all edges $g \in N(f)$. If $g$ is monochromatic on $X$, we loop over all $g' \in N(g)$ and check whether $B(g,g')$ is true on $X$. The total work for this is
$$
k^{O(1)} \Bigl( \negthickspace \negthickspace  \sum_{g \in N(f)} \negthickspace \negthickspace  1 + [A(g,X)] |N(g)| \Bigr)
$$

Finally, consider how to find an event $B(g, g')$, where now $g' \in N(f)$. We begin by looping over $g' \in N(f)$; for each such edge $g'$, we want to find any edges $g$ where $B(g,g')$ is true. Let $G(g')$ denote the edges $g \in N(g')$ which are monochromatic on $X$.  We make the critical observation we can use our data-structure to enumerate, for each $v \in g'$, all the monochromatic edges including $v$, and so each $g \in G(g')$ is listed at most $k$ times. Thus, the total work to enumerate $G(g')$ is at most $k |G(g')|$; this is potentially much smaller than $N(g')$. Hence, the work for this step is
$$
k^{O(1)} \Bigl(  \sum_{g' \in N(f)} \negthickspace  1 + |G(g')| \Bigr)
$$

Putting all these terms together, we have that the total work expended searching for bad-events caused by resampling $f$ is at most
\begin{align*}
\text{Time} &\leq  k^{O(1)} \Bigl( \sum_{g \in N(f)} 1 + |G(g)| + [A(g,X)] |N(g)|  \Bigr) \\
&= k^{O(1)} \Bigl( \sum_{g \in N(f)} 1 + \sum_{g' \in N(g)} \bigl( [A(g,X)] + [A(g',X)] \bigr) \Bigr)
\end{align*}

Summing over all $f \sim B$, we have that
$$
D(B, X) \leq k^{O(1)} \sum_{f \sim B} \Bigl( \sum_{g \in N(f)} 1 + \sum_{g' \in N(g)} \bigl( [A(g,X)] + [A(g',X)] \bigr) \Bigr)
$$
\end{proof}

\begin{proposition}
The expected total time for the MT algorithm to find a coloring is at most $m k^{O(1)}$.
\end{proposition}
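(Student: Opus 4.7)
The plan is to apply Theorem~\ref{a1fapp-thm3} to the event-decomposition of Proposition~\ref{data-structure-event}, bounding each $\theta$ and the resulting sum using the LLL parameters already established in Section~\ref{a1fapp-sec:hyp2col}. The key intermediate quantity is $\theta(A(g, X))$ where $A(g, X)$ is the event that edge $g$ is monochromatic. Since this depends only on the color variables of vertices of $g$, we have $P_{\Omega}(A(g, X)) = 2^{1-k}$, and the bad events sharing a variable with $A(g,X)$ are those touching some vertex of $g$ --- a set no larger than the neighborhood of a single bad event $B(f)$ (since $|N(g)| \leq L$, at most $2L$ events of type $B(\cdot)$ and $4L^2$ of type $B(\cdot,\cdot)$). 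Thus the cluster-expansion product $\prod_{B' \sim A(g,X)} (1 + \mu(B'))$ is bounded by the same quantity $t \leq O(1)$ used in the LLL verification, giving $\theta(A(g, X)) \leq O(2^{-k})$. For the constant event $1$, one has trivially $\theta(1) = 1$.

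With these bounds in hand, I would next assemble $T_B$. Using $|N(f)|, |N(g)| \leq L$, the displayed event-decomposition yields
\[
T_B \;\leq\; k^{O(1)} \cdot |\{f : f \sim B\}| \cdot L \cdot \bigl(1 + L/2^k\bigr).
\]
Since $|\{f : f \sim B\}| \leq 2$ and $L/2^k \leq 0.17\sqrt{k/\log k}$, this gives $T_B \leq k^{O(1)} L$ uniformly in $B$. Finally, I sum $\mu(B)\, T_B$ over the two types of bad events. The $O(m)$ single-edge events $B(f)$, each with $\mu(B(f)) = \sqrt{e}\, p_1 = O(2^{-k})$, contribute at most $m \cdot 2^{-k} \cdot k^{O(1)} L = m k^{O(1)}$, since $L/2^k \leq k^{O(1)}$. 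The $O(mL)$ pair events $B(f, f')$, each with $\mu(B(f,f')) = e\, p_2 = O(2^{-2k} \log k / k)$, contribute at most $mL \cdot 2^{-2k}\, k^{O(1)} L = m k^{O(1)}$, since $L^2/2^{2k} \leq O(k/\log k)$ absorbs the extra $\log k / k$ factor. Adding the one-time $O(mk)$ cost of scanning the hyperedges to seed the monochromatic-edge lists of $D$ during initialization gives the overall bound $m k^{O(1)}$.

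The main obstacle is purely arithmetic bookkeeping: the factors $\mu \sim 2^{-k}$, $\theta \sim 2^{-k}$, and $L \sim 2^k$ balance out only narrowly, so care is needed to confirm that the pair events --- which are a factor $L$ more numerous than the single-edge events --- still produce a clean $m k^{O(1)}$ bound rather than an extra $L$ factor. No new conceptual ingredient beyond Theorem~\ref{a1fapp-thm3} and the cluster-expansion estimate of $t$ is required.
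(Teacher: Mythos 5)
Your proof is correct and follows essentially the same route as the paper: apply Theorem~\ref{a1fapp-thm3} to the event-decomposition of Proposition~\ref{data-structure-event}, bound $\theta(A(g,X)) \leq P_{\Omega}(A(g,X)) \cdot t = O(2^{-k})$ via the cluster-expansion quantity $t$, conclude $T_B \leq L k^{O(1)}$, and sum $\mu(B) T_B$ over both event types using $L 2^{-k} \leq 0.17\sqrt{k/\ln k}$. You are in fact slightly more explicit than the paper about the pair events $B(f,f')$ and the initialization cost, which the paper dispatches with ``a similar argument applies.''
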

\begin{proof}
We apply Theorem~\ref{a1fapp-thm3} to the event-decomposition of Proposition~\ref{data-structure-event}.  For any bad-event $B(f)$, we have
\begin{align*}
T_{B(f)} \leq  k^{O(1)} \Bigl( L + \sum_{g \in N(f), g' \in N(g)} \theta(A(g)) + \theta(A(g')) \Bigr)
\end{align*}

For any edge $g$, we have $P_{\Omega}(A(g)) = 2^{-k}$ and so $\theta(A(g)) \leq P_{\Omega}(A(g)) \times t \leq O(2^{-k})$. Thus, we have that
\begin{align*}
T_{B(f)} &\leq  k^{O(1)} \Bigl( L + \sum_{g \in N(f), g' \in N(g)} O(2^{-k}) + O(2^{-k}) \Bigr) \leq  k^{O(1)} ( L + 2^{-k} L^2 ) \leq L k^{O(1)}
\end{align*}

Hence, the total expected work for this bad event $B(f)$, over the entire execution of MT, is at most $\mu(B(f)) T_B \leq p_1 \sqrt{e} L k^{O(1)} \leq k^{O(1)}$; summing over all edges $f$ gives a total time of $m k^{O(1)}$.

A similar argument applies to estimate $T_{B(f,f')} \leq m k^{O(1)}$ and to bound the time required to initialize the data structure. Recalling that $k = \log^{O(1)} m$, this proves the theorem.
\end{proof}

\section{Latin transversals}
\label{a1fapp-sec:latin}
Suppose we are given an $n \times n$ matrix $A$, in which each cell is assigned a color. Suppose that each color appears at most $\Delta \leq (27/256) n$ times in the matrix. We wish to select a permutation $\pi \in S_n$ with the property that no color appears twice, that is, there are no distinct $x, x'$ with the property that $A(x, \pi(x)) = A(x', \pi(x'))$. Such a permutation is referred to as a \emph{Latin transversal}; see \cite{bissacot,erdos-spencer,lllperm} for some of the long history behind this and related notions. 

One can apply the Lopsided LLL to the probability space defined by a random permutation. In this context, a bad-event is that we have $\pi(x) = y \wedge \pi(x') = y'$ where $A(x,y) = A(x',y')$. In \cite{erdos-spencer}, it is shown that two events are dependent for this probability space (in the sense of the lopsided LLL) iff they overlap in a row or column of the matrix.

In \cite{lllperm}, a variant of the MT algorithm was presented for finding such permutations in polynomial time. The algorithm is somewhat complicated to describe, but the basic idea of this algorithm is that one can resample bad-events by performing \emph{random swaps} of the relevant permutation entries. These random swaps play the same role as a resampling in the usual MT algorithm.

Although this algorithm and its analysis are much more complicated than the standard MT algorithm, one can still develop witness trees and show that  Witness Tree Lemma holds. This implies that all the results about the MT-distribution do as well. This is one of the key advantages of the proof-technique developed in \cite{lllperm}; later works, such as \cite{achlioptas} and \cite{harvey}, have developed substantially simpler and more general proofs of the convergence of the swapping MT algorithm, but these approaches do not extend to the MT-distribution results.

\begin{theorem}
\label{a1fapp-thm-latin}
Suppose each color appears at most $\Delta \leq (27/256) n $ times in the matrix $A$. Then there is an algorithm to find a Latin transversal in expected time $O(n)$ assuming that we have fast read access to the matrix, namely:
\begin{enumerate}
\item[(A1)] The entries of $A$ allow random-access reads.
\item[(A2)] The colors of $A$ can be represented as bit-strings of length $O(\log n)$.
\item[(A3)] Our algorithm can perform elementary arithmetic operations on words of size $O(\log n)$ in time $O(1)$. 
\end{enumerate}

Note that the input size to the problem is $\Theta(n^2)$. 
\end{theorem}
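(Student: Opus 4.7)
My approach is to run the swapping MT algorithm of \cite{lllperm} in depth-first-search style (Section~\ref{a1fapp-sec:dfs}), with a data structure that queries $A$ only at the $n$ cells $(x,\pi(x))$ on the current ``diagonal.'' Since the total number of resamples will be $O(n)$ and each resample costs $O(1)$, plus $O(n)$ for initialization, the overall time is $O(n)$.

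First, I fix the LLL set-up as in \cite{erdos-spencer,lllperm}: for each pair of distinct cells $(x,y),(x',y')$ with $A(x,y)=A(x',y')$, let $B_{(x,y),(x',y')}$ be the bad event $\pi(x)=y\wedge\pi(x')=y'$, with $P_{\Omega}(B)=1/(n(n-1))$ under the uniform-permutation measure. The cluster-expansion criterion of \cite{bissacot}, specialized to this setting, is known to be satisfied for $\Delta\leq(27/256)n$ by a symmetric choice of weights with $\mu(B)=\Theta(1/n^2)$. Counting bad events by color, there are at most $\tfrac{1}{2}\sum_c k_c(k_c-1)\leq\tfrac{1}{2}\Delta n^2=O(n^3)$ of them, so $\sum_B\mu(B)=O(n)$, giving an $O(n)$ bound on the expected number of resamples.

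The data structure $D$ will maintain: (i) the array $\pi$; (ii) for each color $c$ present on the current diagonal, a doubly-linked list $L_c$ of positions $x$ with $A(x,\pi(x))=c$, keyed by a hash table on colors (which is $O(1)$ expected per lookup using assumptions (A2)--(A3)); and (iii) a stack $\mathcal S$ of colors $c$ with $|L_c|\geq 2$, each such entry witnessing a currently-true bad event. Initialization samples $\pi^0$ by Fisher--Yates, reads the $n$ diagonal entries of $A$, and builds (ii)--(iii) in $O(n)$ time. Locating a currently-true bad event is then $O(1)$ by peeking at $\mathcal S$ (popping stale entries on access). The swapping rule of \cite{lllperm} modifies $\pi$ at only $O(1)$ positions per resample; for each such $x$ we read the new value $A(x,\pi(x))$, move $x$ between its old and new color lists, and push/pop $\mathcal S$ if a list's size crosses $2$. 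Hence $D$ admits an event-decomposition with $\sum_i c_{B,i}=O(1)$ uniformly in $B,X$.

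The main obstacle will be justifying that Theorem~\ref{a1fapp-thm3}, and hence Lemma~\ref{internal-lemma1}, transfers to the swapping MT rather than the independent-resampling MT. This is the part that requires care but ultimately goes through: the proof of Lemma~\ref{internal-lemma1} used only the Witness Tree Lemma and Proposition~\ref{a1fapp-gen-tree-prop}, both of which are established for swapping witness trees in \cite{lllperm}, and the surjective map $f$ combining trees rooted at $E$ and $B$ carries over verbatim, as does the bound via $\theta(E)\mu(B)$. Applying this swapping analogue of Theorem~\ref{a1fapp-thm3} to our $O(1)$ event-decomposition yields expected resampling cost $\sum_B\mu(B)\cdot O(1)=O(n)$, which combined with the $O(n)$ initialization proves the theorem.
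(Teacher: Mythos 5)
Your proposal is correct, and its skeleton matches the paper's: the same permutation-LLL formulation with a bad event for each equal-colored pair of cells, the swapping MT algorithm of \cite{lllperm} with $\mu(B)=\Theta(n^{-2})$ and hence $\sum_B \mu(B)=O(n)$ expected resamplings, an $O(n)$ initialization, and a hashing-on-colors data structure over the $n$ diagonal cells $(x,\pi(x))$. Where you genuinely diverge is the data structure and, consequently, which tool carries the cost analysis. The paper buckets diagonal cells by a pairwise-independent hash $H(A(x,\pi(x)))\in[n]$ and, after each resampling, scans the whole bucket of each changed cell testing for true color equality; that scan's cost is configuration-dependent (it includes every diagonal cell genuinely sharing the new color, potentially many of them), and this is exactly why the paper invokes its internal-state machinery --- the event-decomposition of Theorem~\ref{a1fapp-thm3}, resting on Lemma~\ref{internal-lemma1}, together with the bounds $\theta(\pi'(x_1)=y_1'\wedge\pi'(x_3)=y_3)\leq P_\Omega(\cdot)\,w(x_1,y_1')w(x_3,y_3)$ and $w(x,y)=O(1)$ --- to show those scans cost $O(1)$ per resampling in expectation over the MT randomness. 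Your structure instead keys exact colors (chained hashing with key comparison), stores only per-color multiplicity lists, and extracts a true bad event from a lazily-cleaned stack of colors of multiplicity at least two; this is valid because a currently-true bad event is precisely a repeated color on the diagonal, and it makes the per-resampling cost $O(1)$ in expectation over the hash randomness alone, uniformly in the configuration. As a consequence, your closing appeal to a swapping analogue of Theorem~\ref{a1fapp-thm3} is harmless but unnecessary: with a uniform $O(1)$ decomposition there are no nontrivial events $A_{B,i}$ to control, and the $O(n)$ bound already follows from the per-event resampling bound $\mu(B)$ of \cite{lllperm} (so you need the swapping witness-tree machinery only to the extent that its convergence theorem does). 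In short, your route is more elementary for this particular theorem --- it sidesteps the MT-distribution analysis of intermediate states entirely --- while the paper's bucket-scan analysis is the template that generalizes to situations where the search cost cannot be made uniformly constant and must instead be charged against $\theta(\cdot)$ on internal states.
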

\begin{proof}
Each bad event $B$ has probability $p = \frac{1}{n(n-1)}$. It is shown in \cite{lllperm} that the asymmetric LLL criterion holds with these parameters and that $\mu(B) = O(p)$ for any bad-event $B$. For any $x, y \in [n]$ and any bad-event $B$, we say that $B$ \emph{involves} $x$ or $y$ if $B$ contains a bad-event containing $\pi(x) = y'$ or containing $\pi(x') = y$. We define $w(x,y) = \prod_{\text{$B$ involves $x$ or $y$}} (1 + \mu(B))$.

We can enumerate such events as follows: there are $2 n-1$ choices for the first cell involving column $x$ or row $y$, and $\Delta \leq O(n)$ choices for the other cell with the same color. So there are $O(n^2)$ such bad events, and for each such bad event $B$ we have $\mu(B) = O(n^{-2})$, so in total $w(x,y) = O(1)$.

Now consider the following data-structure $D$. We first choose some pairwise-independent hash function $H$, uniformly mapping the labels of colors to the set $[n]$ \cite{carter}. We will maintain a list, for each $t \in [n]$, of all pairs $(x,y)$ with $\pi(x) = y$ and $H(A(x,y)) = t$. These can be maintained with a doubly-linked list for each element $t \in [n]$ in the range of $H$. We will update this structure during the execution of the Swapping Algorithm; for example, if $\pi(x) = y$ and we resample to a new permutation $\pi'$ with $\pi'(x) = y'$, we would remove the pair $(x,y)$ from the list corresponding to $H(A(x,y))$ and add the pair $(x, y')$ to the list corresponding to $H(A(x,y'))$. It is not hard to see how to add and remove pairs from their appropriate list in constant time. 

Now consider the work required in a single step of $D(B, X)$. The operation of adding and removing pairs from their corresponding linked-lists takes $O(1)$ time. The costly operation is that, for each affected position $x$ in the permutation, we must loop over all pairs $x, x'$ with $H(A(x, \pi(x))) = H(A(x', \pi(x')))$ and test whether $A(x, \pi(x)) = A(x', \pi(x'))$. If the latter holds, then we have detected a new bad event. 

Thus, suppose we resample $B = (\pi(x_1) = y_1) \wedge (\pi(x_2) = y_2)$, obtaining the new permutation $\pi'$. There are four positions in the permutation $\pi'$ that differ from $\pi$, and we must test each of these to see if there are new bad events. Thus, the time to update $D$ is given by
\begin{align*}
\sum_{y_1' \in [n]} \sum_{\substack{x_3 \neq x_1\\ y_3 \neq y_1'}} \Bigl[ \pi'(x_1) = y'_1 \wedge \pi'(x_3) = y_3 \wedge H(A(x_1, y_1')) = H(A(x_3, y_3)) \Bigr] + \cdots
\end{align*}
(Here, we have only written one of the four summands, corresponding to new bad events involving $\pi(x_1) = y_1'$. The other three summands are analogous, and will have the same cost.)

By $2$-independence of $H$, we have that the \emph{expected time} to update $D$ from a bad-event $B$ is 
\begin{align*}
\sum_{y_1' \in [n]} \sum_{\substack{x_3 \neq x_1\\ y_3 \neq y_1'}} \Bigl[ \pi'(x_1) = y'_1 \wedge \pi'(x_3) = y_3 \Bigr] \times \Bigl( 1/n + \bigl[ A(x_1, y'_1) = A(x_3, y_3) \bigr] \Bigr) + \cdots
\end{align*}

This expectation is taken over the hash function $H$, \emph{not} on any of the random choices during the MT algorithm. Thus, the permutations $\pi, \pi'$, should be viewed as fixed values and not random variables.

We can now apply Theorem~\ref{a1fapp-thm3} to calculate:
\begin{align*}
T_B &= \sum_{\substack{y_1', x_3 \neq x_1, y_3 \neq y_1'}}  \theta (\pi'(x_1) = y_1' \wedge \pi'(x_3) = y_3)  \Bigl(1/n + [A(x_1,y'_1) = A(x_3, y_3)] \Bigr) \\
&\leq \sum_{\substack{y_1', x_3 \neq x_1, y_3 \neq y_1'}}  P_{\Omega} (\pi'(x_1) = y_1' \wedge \pi'(x_3) = y_3) w(x_1,y_1') w(x_3, y_3)  \bigl(1/n + [A(x_1,y'_1) = A(x_3, y_3)] \bigr)
\end{align*}

Using the fact that there are at most $\Delta n = O(n^2)$ values of $y'_1, x_3, y_3$ with $A(x_1,y'_1) = A(x_3, y_3)$, and our bounds $w(x,y) \leq O(1)$, we calulate that this $T_B \leq O(1)$.

Thus, the expected running time of MT is 
\begin{align*}
\sum_B \mu(B) T_B  &\leq O(1) \negthickspace \negthickspace \negthickspace \negthickspace \sum_{\substack{x,y,x', y' \\ A(x,y) = A(x', y')}} \negthickspace \negthickspace \negthickspace \negthickspace  \mu(x,y,x',y') = O(n).
\end{align*}

A similar calculation shows an $O(n)$ time to initialize $D$.
\end{proof}

\section{Non-repetitive vertex coloring: from exponential to polynomial}
\label{a1fapp-sec:exp-to-poly}
So far, we have examined problems in which good data structures can lead to polynomial improvements in the MT runtime. However, Theorems~\ref{a1fapp-thm2},  \ref{a1fapp-thm3} are much more powerful, and can indeed transform exponential-time algorithms to polynomial-time ones. We will consider a series of related problems based on \emph{non-repetitive vertex coloring} of graphs. These represent some of the few remaining cases in which the LLL provides a proof of existence, but for which we do not know corresponding polynomial-time algorithm. 
%We will remedy this for these problems.

%\label{a1fapp-sec:nonrep-new-results}
Given a graph $G$, we seek to color its vertices so that no color sequence appears repeated in any vertex-simple path; i.e., there is no simple path colored $x x$, where $x$ can denote any nonempty sequence of colors. How many colors are needed in order to ensure such a coloring exists? This is known as the \emph{Thue number} $\pi(G)$ of $G$, motivated by Thue's classical result that $\pi$ is at most $3$ for paths of any length \cite{thue:1906}.\footnote{There are a few variants on this definition such as whether the edges or vertices are colored, and whether each has its own palette of colors or whether there is a common palette. For concreteness, we color vertices from a common palette; all of our bounds would apply to the other scenarios as well. We assume that the graph $G$ is simple with $2 \leq \Delta \leq n - 1$. }

The problems of finding non-repetitive colorings and Thue numbers have been studied extensively in a variety of contexts.
In \cite{alon2003}, it was shown via the LLL that for any graph $G$ with maximum degree $\Delta$, $\pi(G) = O(\Delta^2)$. The original constant term in that paper was not tight; a variety of further papers such as \cite{gryt1,gryt2,haranta} have brought it down further. The best currently-known bound is that $\pi(G) \leq (1 + o(1)) \Delta^2$ \cite{dujmovic}. 
The analysis of \cite{dujmovic} does not use the LLL; it uses a \emph{non-constructive} Kolmogorov-complexity argument which is somewhat complicated and specialized to the graph-coloring problem. 

While the MT resampling framework applies to this problem, the key bottleneck is to either find a bad event (a path with repeated colors), or to certify that none such exists. In this case, the number of bad events is exponentially large; more seriously, it is NP-hard to even detect whether a given coloring has a repeated color sequence \cite{marx:discrete09}. So, in this situation it is \emph{intractable} to find a data-structure for finding bad-events with good \emph{worst-case} run-time bounds.

In \cite{hss}, a constructive algorithm was introduced using $C= \Delta^{2+\epsilon}$ colors (i.e., if a slack $\Delta^{\epsilon}$ is allowed). The basic idea of \cite{hss} is to apply the MT algorithm, but to ignore the long paths. This algorithm succeeds in finding a good coloring with high probability, \footnote{We say an event occurs with high probability (abbreviated whp) if it occurs with probability $1 - n^{-\Omega(1)}$.} and the running time is $n^{O(1/\epsilon)}$ -- polynomial time for fixed $\epsilon$. This cannot be amplified to succeed with probability 1, as it is not clear how to test whether the output of the algorithm is a good coloring. Thus, it is a Monte Carlo, but not a Las Vegas, algorithm. 

\subsection{New results}
We present the first polynomial-time coloring that shows $\pi(G) \leq (1 + o(1)) \Delta^2$; furthermore, our algorithm is Las Vegas. 
Until this work, no Las Vegas algorithms were known for this problem where the number of colors $C$ is \emph{any} function of $\Delta$, and no Monte Carlo algorithms were known where $C = \phi \Delta^2$ for $\phi$ any fixed constant. We also develop the first-known $ZNC$ (parallel Las Vegas) versions of such results. 

As another application, Section~\ref{a1fapp-sec:higher-order-Thue} considers a generalization of non-repetitive colorings, introduced in \cite{alon-grytczuk}, to avoid \emph{$k$-repetitions}. That is, given an integer parameter $k \geq 2$, we aim to color the vertices to avoid the event that a sequence of colors  $x x \dots x$ appears on a vertex-simple path, with the string $x$ occurring $k$ times.  (Standard non-repetitive coloring corresponds to $k=2$.) The best type of result achievable in polynomial time using \cite{hss} is a coloring using $O(\Delta^{2 + \epsilon})$ colors, for any desired \emph{constant} $\epsilon > 0$. 
Theorem~\ref{a1fthm:higher-order-thue} gives a Monte Carlo algorithm to find a coloring using 
$C = \Delta^{1 + \frac{1+\epsilon}{k-1}} + O(\Delta^{2/3 + \frac{1+\epsilon}{k-1}})$ colors and which avoids any $k$-repetitions, running in $n^{O(1/\epsilon)}$ (i.e., polynomial) time.

A second type of generalization of non-repetitive colorings comes from work of \cite{krieger}, which considered when it is possible to avoid nearly-repeated color sequences; that is, a sequence of colors $x y$ where the Hamming distance of $x$ and $y$ is small. The work of \cite{krieger} considered the problem for coloring paths. In Section~\ref{a1fapp-sec:col-seq} while we extend this to general graphs. This presents new algorithmic challenges as well.

\subsection{Non-repetitive vertex coloring}
\label{a1fapp-sec:thue-2}

\begin{proposition}
\label{non-rep-prop1}
There is some constant $\phi > 0$, such that for any graph $G$ of maximum degree $\Delta$, there is a non-repetitive vertex coloring with $C = \Delta^2 + \phi \Delta^{5/3}$ colors.
\end{proposition}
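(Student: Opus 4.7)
The plan is to prove existence by a direct application of the asymmetric LLL (Theorem~\ref{a1fapp-main-thm}) with carefully tuned parameters. Take the product probability space in which each vertex is independently colored uniformly from $[C]$. For each simple path $P = v_1 v_2 \dots v_{2k}$ (with $k \geq 1$), introduce the bad event $B_P$ that $c(v_i) = c(v_{k+i})$ for all $i \in [k]$; then $P_\Omega(B_P) = C^{-k}$ and $B_P \sim B_{P'}$ exactly when $P,P'$ share a vertex. Avoiding all $B_P$ gives a non-repetitive coloring.

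I would assign weights $\mu(B_P) = (\alpha/C)^k$, where $\alpha > 1$ is a slack parameter of the form $\alpha = 1 + \gamma \Delta^{-1/3}$ to be chosen later, and use the sufficient condition $\mu(B) \geq P_\Omega(B) \prod_{B' \in N(B)}(1+\mu(B'))$ from the footnote of Theorem~\ref{a1fapp-main-thm}. A standard counting argument bounds the number of simple paths of length $2j$ through a fixed vertex by $2j \Delta^{2j-1}$ (pick the position of the vertex on the path, then extend outward), so the number of length-$2j$ paths meeting $P$ is at most $2k \cdot 2j \Delta^{2j-1} = 4jk\,\Delta^{2j-1}$. Using $\log(1+t)\leq t$ and summing the geometric series, with $z := \alpha \Delta^2/C$,
\[
\sum_{B' \sim B_P} \mu(B') \;\leq\; \sum_{j\geq 1} 4jk\,\Delta^{2j-1}\,(\alpha/C)^j \;=\; \frac{4k}{\Delta}\cdot\frac{z}{(1-z)^2},
\]
provided $z<1$. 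The LLL inequality $(\alpha/C)^k \geq C^{-k}\exp\!\bigl(\tfrac{4k z}{\Delta(1-z)^2}\bigr)$ reduces (after taking $k$-th roots) to a condition independent of $k$:
\[
\log \alpha \;\geq\; \frac{4z}{\Delta(1-z)^2}.
\]

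Now I would substitute $C = \Delta^2 + \phi \Delta^{5/3}$ and $\alpha = 1 + \gamma \Delta^{-1/3}$ and expand asymptotically. Then $z \to 1$ with $1 - z = (\phi-\gamma)\Delta^{-1/3} + O(\Delta^{-2/3})$ and $\log \alpha = \gamma \Delta^{-1/3}+O(\Delta^{-2/3})$, so the inequality becomes, to leading order, $\gamma(\phi-\gamma)^2 \geq 4$. Optimizing the left-hand side over $\gamma \in (0,\phi)$ at $\gamma = \phi/3$ yields the threshold $4\phi^3/27 \geq 4$, i.e.\ any constant $\phi > 3$ works for all sufficiently large $\Delta$. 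For the finitely many small values of $\Delta$ not covered by the asymptotic analysis, inflating $\phi$ by a universal constant lets the bound $C = \Delta^2 + \phi \Delta^{5/3}$ absorb the trivial estimate $C \leq \Delta^2 + O(\Delta^2)$ that holds unconditionally.

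The substantive step is really just the parameter balance in the last paragraph: the slack available in $\log \alpha$ scales linearly in the extra colors, but the geometric sum blows up like $(1-z)^{-2}$, so the $5/3$ exponent is precisely what makes both sides of $\log\alpha \gtrsim 1/(\Delta(1-z)^2)$ of the same order $\Delta^{-1/3}$. No witness-tree or algorithmic machinery is needed for existence; the path-counting bound $2j\Delta^{2j-1}$ and the convergence of $\sum_j j z^j$ are the only combinatorial inputs.
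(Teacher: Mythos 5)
Your proposal is correct and takes essentially the same route as the paper: a direct application of the asymmetric LLL to path bad-events, a $2j\Delta^{2j-1}$ count of paths through a vertex, a geometric series in $z=\alpha\Delta^2/C$, and the same balance of slack $\approx\Delta^{-1/3}$ against the $(1-z)^{-2}$ blow-up that forces the $5/3$ exponent. The differences -- you use color-matching events of probability $C^{-k}$ and the plain neighborhood-product form of the criterion, while the paper uses atomic events with a fixed repeated color sequence (probability $C^{-2l}$, weight $\alpha^{2l}$) and Pegden's independent-set form -- are cosmetic for this existence statement (the atomic formulation matters later for the paper's $\theta$-estimates and algorithms).

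One caveat: your fallback for small $\Delta$ invokes a ``trivial estimate'' $\pi(G)\le \Delta^2+O(\Delta^2)$ ``that holds unconditionally,'' but no such trivial bound exists -- $\pi(G)=O(\Delta^2)$ is itself the nontrivial LLL theorem you are in the middle of proving. The fix stays inside your own computation: your condition is exactly $\log(1+\gamma u)\ \ge\ \frac{4u(1+\gamma u)(1+\phi u)}{(\phi-\gamma)^2}$ with $u=\Delta^{-1/3}\in(0,2^{-1/3}]$, and for, say, $\gamma=1$ and $\phi$ a sufficiently large absolute constant this holds for \emph{all} $\Delta\ge 2$ (the right-hand side is $O(u/\phi)$ uniformly on this range), so no separate small-$\Delta$ case is needed -- matching the paper's ``routine algebra shows this for $\phi$ sufficiently large.''
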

\begin{proof}
We show this via the LLL. A bad-event in this context is some vertex-simple path with a repeated color sequence, of length $2 l$. We define $\mu(B) = \alpha^{2 l}$ for all such events, where $\alpha$ is a parameter to be determined. Our convention is that each color sequence gives rise to a distinct bad-event; thus, all bad-events are \emph{atomic} and have probability $C^{-2 l}$.

Now consider a fixed vertex $v$, and let us consider the sum $\mu(v)$ over all bad-events $B$ which involve vertex $v$. Such bad-events have the following form: There is a path of length $2 l$, of which $v$ is the $t^{\rm th}$ vertex for some $t = 0, \dots, l -1$ (by reversing the path, one can assume without loss of generality $v$ comes in the initial half); the first $l$ vertices have some pattern of colors, and the final $l$ vertices have also this pattern. 

Summing over all possible values of $t, l$, all $\Delta^{2 l - 1}$ paths, and all possible $C^l$ color patterns, we have 
\begin{align*}
\mu(v) &\leq \sum_{l = 1}^{\infty} \sum_{t = 1}^{l} C^{l} \Delta^{2 l - 1} \alpha^{2 l} \\
&\leq \frac{\alpha^2 C \Delta}{(1 - \alpha^2 C \Delta^2)^2} \qquad \text{for $\alpha^2 C \Delta^2 < 1$}
\end{align*}

To show that the asymmetric LLL criterion holds, consider some bad-event $B$ defined by a path $v_0, \dots, v_{2 l - 1}$.  Its probability is $C^{-2 l}$. Its independent sets of neighbors can be determined by, for each $i = 0, \dots, 2 l - 1$,  selecting zero or one bad-events involving $v_i$. Thus, we have that
$$
\sum_{\substack{I \subseteq N(B) \\ \text{$I$ independent}}} \prod_{B' \in I} \mu(B') \leq \prod_{i=0}^{2 l - 1} (1 + \mu(v_i)) \leq (1 + \frac{\alpha^2 C \Delta}{(1 - \alpha^2 C \Delta^2)^2})^{2 l}.
$$ 

Thus, the LLL criterion becomes
$$
\alpha^{2 l} \geq C^{-2 l}(1 + \frac{\alpha^2 C \Delta}{(1 - \alpha^2 C \Delta^2)^2})^{2 l}
$$
which is satisfied for all $l \geq 1$ iff
\begin{equation}
\label{a1fnon-rep-eqn}
\alpha C \geq 1 + \frac{\alpha^2 C \Delta}{(1 - \alpha^2 C \Delta^2)^2}
\end{equation}

Set $\alpha = (\sqrt{C} (\Delta + \Delta^{2/3}))^{-1}$; routine algebra shows that (\ref{a1fnon-rep-eqn}) holds for $\phi$ sufficiently large.
\end{proof}

The challenge is to turn this exisential proof into an efficient algorithm. The key bottleneck is to search for some true bad event; we will do so via Theorem~\ref{a1fapp-thm3}. The following intermediate result will be useful. (Recall the definition of $\theta$ from (\ref{a1feqn:theta})

\begin{proposition}
\label{theta-e-prop}
Suppose we have any event $E$ of the form $\chi(v_1) = c_1 \wedge \chi(v_2) = c_2 \wedge \dots \wedge \chi(v_k) = c_k$, where $v_1, \dots, v_k$ are distinct vertices and $c_1, \dots, c_k$ are color labels. Then we have that
$$
\theta(E) \leq \alpha^k
$$
where $\alpha =  (\sqrt{C} (\Delta + \Delta^{2/3}))^{-1}$.

Suppose we have any event $E'$ of the form $\chi(v_1) = \chi(u_1) \wedge \dots \wedge \chi(v_k) = \chi(u_k)$, where $v_1, \dots, v_k, u_1, \dots, u_k$ are distinct vertices. Then we have
$$
\theta(E') \leq \beta^k
$$
where $\beta = (\Delta + \Delta^{2/3})^{-2}$.
\end{proposition}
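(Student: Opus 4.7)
The plan is to apply the defining formula $\theta(E) = P_\Omega(E)\sum_{\mathcal I \subseteq N(E),\ \mathcal I\text{ indep}}\prod_{B'\in\mathcal I}\mu(B')$ and reduce everything to the per-vertex weight bound already established inside the proof of Proposition~\ref{non-rep-prop1}. The essential input from that proof is the quantity
$$\mu(v) \;=\; \sum_{B\,\text{involves }v}\mu(B) \;\leq\;\frac{\alpha^2 C\Delta}{(1-\alpha^2 C\Delta^2)^2},$$
together with the fact that the parameters were chosen precisely so that inequality~\eqref{a1fnon-rep-eqn} holds, i.e.\ $\alpha C\geq 1+\mu(v)$.

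For the first bound, I would first observe $P_\Omega(E)=C^{-k}$, since the $v_i$ are distinct and each color is chosen independently and uniformly. To bound the sum over independent subsets $\mathcal I\subseteq N(E)$, note that any $B'\in N(E)$ must share at least one of the vertices $v_1,\dots,v_k$ with $E$, and that within an independent $\mathcal I$ no two members can touch the same vertex. So one can assign each $B'\in\mathcal I$ to some vertex $v_i$ it involves, yielding the same ``at most one bad event per $v_i$'' counting used in Proposition~\ref{non-rep-prop1}:
$$\sum_{\mathcal I\subseteq N(E)}\prod_{B'\in\mathcal I}\mu(B')\;\leq\;\prod_{i=1}^{k}\bigl(1+\mu(v_i)\bigr)\;\leq\;(\alpha C)^{k}.$$
Combining the two factors gives $\theta(E)\leq C^{-k}(\alpha C)^k=\alpha^k$ as desired.

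For the second bound, the event $E'$ fixes relations among the $2k$ distinct vertices $v_1,u_1,\dots,v_k,u_k$. Each individual equality $\chi(v_i)=\chi(u_i)$ holds with probability $1/C$, and because the vertex pairs are disjoint these equalities are independent under $\Omega$, so $P_\Omega(E')=C^{-k}$. The neighborhood analysis is identical except that now $E'$ touches $2k$ distinct vertices, so the same independent-set argument gives
$$\sum_{\mathcal I\subseteq N(E')}\prod_{B'\in\mathcal I}\mu(B')\;\leq\;\prod_{j=1}^{2k}(1+\mu(w_j))\;\leq\;(\alpha C)^{2k},$$
where $w_1,\dots,w_{2k}$ runs over the vertices $v_i,u_i$. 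Hence $\theta(E')\leq C^{-k}(\alpha C)^{2k}=(C\alpha^2)^k$, and a direct substitution of $\alpha=(\sqrt{C}(\Delta+\Delta^{2/3}))^{-1}$ gives $C\alpha^2=(\Delta+\Delta^{2/3})^{-2}=\beta$, yielding $\theta(E')\leq\beta^k$.

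There is no real obstacle here — the only subtlety worth flagging is correctly translating ``assign each $B'\in\mathcal I$ to a vertex of $E$'' into the product $\prod_i(1+\mu(v_i))$, and then reusing the fact, implicit in the verification of~\eqref{a1fnon-rep-eqn}, that $1+\mu(v_i)\leq\alpha C$. Once that is set up, both inequalities drop out by telescoping.
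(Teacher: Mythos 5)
Your proposal is correct and follows essentially the same route as the paper: compute $P_\Omega(E)=C^{-k}$, bound the independent-set sum by $\prod_i(1+\mu(v_i))\leq(\alpha C)^k$ using the per-vertex accounting from Proposition~\ref{non-rep-prop1} and the verified LLL inequality~(\ref{a1fnon-rep-eqn}). The only (immaterial) difference is in the second part, where the paper reduces $E'$ to $C^k$ atomic events of the first type via a union bound over color tuples, while you compute $\theta(E')\leq C^{-k}(\alpha C)^{2k}$ directly; both give $(C\alpha^2)^k=\beta^k$.
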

\begin{proof}
The event $E$ has probability $P_{\Omega}(E) = C^{-k}$. To form an independent set of neighbors of $E$, one may select, for each $i = 1, \dots, k$, one or zero path including $v_i$. We have already computed this sum in Proposition~\ref{non-rep-prop1}, and so we have that the sum over all such independent sets is at most $\prod_{i=1}^k (1 + \mu(v_i)) \leq (1 + \frac{\alpha^2 C \Delta}{(1 - \alpha^2 C \Delta^2)^2})^k$.

Because the LLL criterion is satisfied, we have that this is at most $(\alpha C)^k$. Thus, overall we have
$$
\theta(E) \leq C^{-k} \times (\alpha C)^k = \alpha^k
$$

The bound on $E'$ follows by taking a union bound over all possible colors $c_1, \dots, c_k$ and computing the probability that $\chi(v_1) = c_1 = \chi(u_1) \wedge \dots \wedge \chi(v_k) = c_k = \chi(u_k)$.
\end{proof}

In Theorem~\ref{a1fnon-rep-thm}, we will show via Theorem~\ref{a1fapp-thm3} that the coloring can be found in $O(n^2)$ time using the DFS MT algorithm. As a warm-up exercise, we begin with a slightly weaker result; we use Theorem~\ref{a1fapp-thm2} to produce the coloring in $\text{poly}(n)$ time.  

\begin{theorem}
\label{a1fnon-rep-thm0}
The coloring of Proposition~\ref{non-rep-prop1} can be found in expected time $O(n^3 \Delta^{4/3})$.
\end{theorem}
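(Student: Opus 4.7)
The plan is to apply Theorem~\ref{a1fapp-thm2} using a search procedure $S$ that enumerates candidate bad events by ``growing both halves'' of a potentially-repeated path simultaneously. The key observation is that any bad event of length $2l\geq 4$ corresponds to a pair of vertex-simple paths $U=(u_0,\ldots,u_{l-1})$ and $V=(v_0,\ldots,v_{l-1})$, vertex-disjoint as a set, with $\chi(u_i)=\chi(v_i)$ for $i=0,\ldots,l-1$ and closing edge $u_{l-1}\sim v_0$. Length-$2$ bad events (monochromatic edges) are handled separately in $O(n\Delta)$ time.

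I first bound $\sum_{B\in\mathcal{B}}\mu(B)$. There are at most $n\Delta^{2l-1}$ directed vertex-simple paths of length $2l$, and for each the first half's color sequence freely determines the atomic bad event, giving at most $n\Delta^{2l-1}C^l$ bad events of length $2l$. Since $\mu(B)=\alpha^{2l}$ with $\alpha^2 C=(\Delta+\Delta^{2/3})^{-2}$, a short computation gives $\alpha^{2l}\cdot n\Delta^{2l-1}C^l = n\Delta^{-1}(1+\Delta^{-1/3})^{-2l}$. Summing this geometric series over $l\geq 1$ yields $\sum_B\mu(B)=O(n\Delta^{-2/3})$.

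Next I describe $S$ and bound $T$. The search enumerates all $n^2$ ordered starting pairs $(u_0,v_0)$ with $O(1)$ work each (a deterministic $O(n^2)$ cost), recursing only when $\chi(u_0)=\chi(v_0)$. From any fertile level-$k$ pair --- two vertex-disjoint paths of length $k$ on which all $k$ color equalities $\chi(u_i)=\chi(v_i)$ hold --- it tries each of the $\Delta^2$ neighbor extensions $(u_k,v_k)$ with $u_k\sim u_{k-1}$, $v_k\sim v_{k-1}$, discarding any that collide with prior vertices, and recurses whenever $\chi(u_k)=\chi(v_k)$; at every fertile pair it additionally tests whether $u_{k-1}\sim v_0$ so the concatenation forms a genuine bad event. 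The event-decomposition charges $c_k=O(\Delta^2)$ to each vertex-disjoint topological pair of length-$k$ paths, conditioned on the event $A_k$ that all $k$ color equalities on the $2k$ distinct vertices hold. There are at most $n^2\Delta^{2(k-1)}$ such topological pairs, and Proposition~\ref{theta-e-prop} (the $E'$ case) gives $\theta(A_k)\leq\beta^k$ with $\beta=(\Delta+\Delta^{2/3})^{-2}$, so $T\leq O(n^2)+\sum_{k\geq 1}\Delta^2\cdot n^2\Delta^{2(k-1)}\beta^k = O(n^2)\bigl(1+\sum_{k\geq 1}(\Delta^2\beta)^k\bigr)$. Since $\Delta^2\beta=(1+\Delta^{-1/3})^{-2}<1$, the geometric series sums to $O(\Delta^{1/3})$, yielding $T=O(n^2\Delta^{1/3})\leq O(n^2\Delta^2)$.

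Combining via Theorem~\ref{a1fapp-thm2}, the expected runtime is $(1+\sum_B\mu(B))\cdot T=O(n/\Delta^{2/3})\cdot O(n^2\Delta^2)=O(n^3\Delta^{4/3})$, as claimed. The main subtlety is arranging $S$ so that color constraints take effect from level $1$ onward: if instead we enumerated one half's paths freely before imposing any color equality, the topological count $\Theta(n\Delta^{l-1})$ would blow up geometrically in $l$ with no compensating $\theta$ decay, and the whole approach would collapse. Restricting to cross-vertex-disjoint pairs throughout ensures the $2k$ vertices in $A_k$ are distinct so that Proposition~\ref{theta-e-prop} applies directly; any attempted extension producing a collision is discarded for free.
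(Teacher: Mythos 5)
Your proposal is correct and follows essentially the same route as the paper: apply Theorem~\ref{a1fapp-thm2} to a branching search that grows the two halves of a putative repetitively-colored path in lockstep, enforcing the color equalities at every level, and bound the resulting event-decomposition via Proposition~\ref{theta-e-prop} together with $\sum_B \mu(B)$. The one genuine (and welcome) refinement is that you do not fix the half-length $l$ in advance but detect completed bad events via the closing edge $u_{k-1} \sim v_0$ at each level, which avoids the paper's outer enumeration over $l$ and in fact yields a sharper bound ($T = O(n^2 \Delta^{1/3})$, total $O(n^3)$) than the stated $O(n^3 \Delta^{4/3})$; only minor implementation details (e.g., $O(1)$ collision checks via a marked-vertex array, which in any case would only cost a summable factor of $k$) are left implicit.
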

\begin{proof}
We construct a search algorithm to find bad-events which are currently true. We suppose that $C \leq n$, as otherwise this is trivial (assign each vertex a distinct color)

To begin, we sort all the neighborhoods of every vertex by color. As the number of colors is $O(n)$, then this step can be implemented in $O(n^2)$ time.

Now, suppose we want to find a vertex sequence $v_0, \dots, v_{2 l -1}$ of length $2 l$, where $l$ is fixed. We construct a branching process for $i = 0, \dots, l-1$, wherein in stage $i$ we enumerate over possible values for $v_i, v_{i+l}$. In order for these correspond to a bad-event, it must be that $\chi(v_i) = \chi(v_{i+l})$. Furthermore, $v_i, v_{i+l}$ must be neighbors of $v_{i-1}, v_{i+l-1}$ respectively (unless $i = 0$). Finally, all the vertices $v_0, \dots, v_{2 l - 1}$ must be distinct.

Because we have sorted the adjacency lists of all the vertices by color, then for $i > 0$ and a fixed sequence $v_0, \dots, v_{i-1}, v_l, \dots, v_{i+l-1}$ one can enumerate over $v_i, v_{i+l}$ in time
$$
\sum_{v_i \in N(v_{i-1})} (1 + \sum_{v_{i+l} \in N(v_{i+l-1})} [ \chi(v_{i+l}) = \chi(v_i)] )
$$

(In this sum, and all the sums we encounter, we enforce the requirement that the vertices are distinct; we do not write this explicitly in simplify the notation.)

Summing over all possible choices for $v_0, \dots, v_{i-1}, v_l, \dots, v_{i+l-1}$, the overall time is given by
$$
\sum_{v_0, \dots, v_{i-1}, v_l, \dots, v_{i+l-1}} [\chi(v_0) = \chi(v_l) \wedge \dots \chi(v_{i-1}) = \chi(v_{i+l-1})] \sum_{v_i \in N(v_{i-1})} (1 + \sum_{v_{i+l} \in N(v_{i+l-1})} [ \chi(v_{i+l}) = \chi(v_i)] )
$$

Similarly, for $i = 0$, we can do this in time
$$
\sum_{v_0} (1 + \sum_{v_{}} [ \chi(v_{l}) = \chi(v_0)] )
$$

Thus, summing over $i = 0, \dots, l-1$ and $l = 0, \dots, n$, we have an event decomposition of the form
\begin{align*}
\text{Time} &\leq n^2 + \sum_{l=0}^n \sum_{v_0} (1 + \sum_{v_{l}} [ \chi(v_{l}) = \chi(v_0)] ) \\
&\qquad + \sum_{i=1}^{l-1} \sum_{v_0, \dots, v_{i-1}, v_l, \dots, v_{i+l-1}} [\chi(v_0) = \chi(v_l) \wedge \dots \chi(v_{i-1}) = \chi(v_{i+l-1})] \\
& \qquad \qquad \Bigl(  \sum_{v_i \in N(v_{i-1})} (1 + \sum_{v_{i+l} \in N(v_{i+l-1})} [ \chi(v_{i+l}) = \chi(v_i)] ) \Bigr)
\end{align*}

We evaluate $T$ as in Theorem~\ref{a1fapp-thm2}. For each value of $l$, the term $\sum_{v_0} (1 + \sum_{v_{l}} [ \chi(v_{l}) = \chi(v_0)] )$ contributes $n + \sum_{v_0, v_l} \theta( \chi(v_0) = \chi(v_l) )$; by Proposition~\ref{theta-e-prop}, the latter has value at most $n^2 \beta$.

Similarly, each of the terms 
$$
\sum_{v_0, \dots, v_{i-1}, v_l, \dots, v_{i+l-1}} [\chi(v_0) = \chi(v_l) \wedge \dots \chi(v_{i-1}) = \chi(v_{i+l-1})] \sum_{v_i \in N(v_{i-1})} (1 + \sum_{v_{i+l} \in N(v_{i+l-1})} [ \chi(v_{i+l}) = \chi(v_i)] )
$$
contributes $n^2 \Delta^{2 i-1} \beta^i + n^2 \Delta^{2 i} \beta^{i+1}$.

Summing over $l, i$, we have
\begin{align*}
T &\leq n^2 + \sum_{l=0}^n ( n^2 \beta + \sum_{i=0}^{l-1} n^2 \Delta^{2 i - 1} \beta^i + n^2 \Delta^{2 i} \beta^{i+1})  \\
&\leq O(n^2) (1 + \sum_{l=0}^n  \sum_{i=0}^{l-1} \beta^{i} \Delta^{2 i +1} ) \\
&\leq O(n^2) (1 + n \sum_{i=0}^{\infty} \beta^{i} \Delta^{2 i +1} ) \\
&= O(n^2 \Delta^{4/3})
\end{align*}

Next, observe that the total sum of $\mu(B)$ over all $B \in \mathcal B$ is at most $\sum_v \sum_{\text{$B$ involves $v$}} \mu(B) \leq n \alpha C \leq O(n)$. Thus, the overall time is at most $(1 + \sum_B \mu(B)) T \leq O(n) \times O(n^2 \Delta^{4/3})$.
\end{proof}

We want to emphasize the intuition here, which is that searching for a repetitive coloring in the intermediate configurations of the MT algorithm is very similar for searching for a repetitive coloring in a completely random configuration. One could compute the expected running time of this branching algorithm on such a random coloring. This would give identical formulas, with the only difference being that all instances of $\alpha$ in the above proof would be replaced by the slightly smaller value $C^{-1}$, the probability that a given vertex has a given color.

We next improve on this by using depth-first search for MT, as well as being slightly more careful in our search algorithm.
\begin{theorem}
\label{a1fnon-rep-thm}
The coloring of Proposition~\ref{non-rep-prop1} can be found in expected time $O(n^2)$.
\end{theorem}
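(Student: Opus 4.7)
The plan is to replace the global-search Theorem~\ref{a1fapp-thm2} used in Theorem~\ref{a1fnon-rep-thm0} by the DFS-MT framework of Theorem~\ref{a1fapp-thm3}, combined with a \emph{pinned} version of the branching search from Theorem~\ref{a1fnon-rep-thm0}. The high-level intuition: in Theorem~\ref{a1fnon-rep-thm0} the branching is rooted at every vertex $v_0$ in the graph, contributing an outer $\sum_{v_0}$ factor of $n$; in the DFS framework, after a bad event $B$ is resampled, we only need to look for newly-true bad events sharing a vertex with $V(B)$, so that $\sum_{v_0}$ is replaced by $\sum_{v \in V(B)}$, a factor of $|V(B)|$ rather than $n$.

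First I would keep the initialization of Theorem~\ref{a1fnon-rep-thm0}: sort every vertex's adjacency list by color in $O(n^2)$ time, and additionally maintain a stack of currently-true bad events. The stack can be initialized by running the branching search once on $X^0$; since $X^0 \sim \Omega$ and we only need $\sum_i c_i P_\Omega(A_i)$ rather than $\sum_i c_i \theta(A_i)$, the initial scan fits in $O(n^2)$ expected time.

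Next I would define the data structure update $D(B,X)$: given a resampled bad event $B$ corresponding to a path of length $2 l_B$, run the branching of Theorem~\ref{a1fnon-rep-thm0} from each $v \in V(B)$, but with the starting vertex pinned to $v$. For each putative half-length $l$ and position $j \in \{0,\ldots,l-1\}$ of $v$ in the bad event, we build the path pair-by-pair outward from $v$, using the color-sorted adjacency lists to look up matching neighbors in $O(1+\text{matches})$ time per extension. Applying Theorem~\ref{a1fapp-thm3} together with Proposition~\ref{theta-e-prop} (so that each partial path with $i$ color matches contributes $\theta \leq \beta^i$), the event decomposition is exactly the one from Theorem~\ref{a1fnon-rep-thm0}, but with the starting-vertex sum eliminated. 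The resulting bound has the form $T_B = O(|V(B)|) \cdot T^*$, where the per-vertex factor $T^*$ collapses via the same geometric series $\sum_{i} (\beta \Delta^2)^i \cdot \Delta = O(\Delta^{4/3})$ that appears in Theorem~\ref{a1fnon-rep-thm0}.

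Finally I would combine this with the LLL bound from Proposition~\ref{non-rep-prop1}. Since $\alpha$ is chosen so that $\alpha C \geq 1 + \mu(v)$ is barely satisfied, the slack is $\alpha C - 1 = O(\Delta^{-1/3})$, giving $\sum_v \mu(v) = n \mu(v) \leq n(\alpha C - 1)$ of appropriate order. Combining,
\[
\sum_{B \in \mathcal B} \mu(B) T_B \;\leq\; O(T^*) \sum_B |V(B)| \mu(B) \;=\; O(T^*) \sum_v \mu(v),
\]
and the two factors multiply to $O(n^2)$ after invoking the tight LLL constants (the $\Delta^{4/3}$ from the per-vertex cost is cancelled by the $\Delta^{-1/3}$ LLL slack and an extra $\Delta^{-1}$ from the length-weighted $\mu(B) = \alpha^{2l}$ sum). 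Adding the $O(n^2)$ cost of initialization yields the claimed bound.

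The main obstacle is the per-vertex bookkeeping: I must justify that pinning the starting vertex to $v$ really removes a full factor of $n$ from Theorem~\ref{a1fnon-rep-thm0}'s $T$ (as opposed to only reducing some of the sub-sums), and that the outer sums $\sum_l \sum_j$ over path lengths and positions do not reintroduce it. The key points are (i) for $l \gtrsim \Delta^{1/3}$ the geometric factor $(\beta \Delta^2)^l$ decays, so long paths contribute diminishingly to the per-vertex cost, and (ii) the $\sum_l l \, \mu(B)$ appearing in $\sum_B |V(B)| \mu(B) = 2 \sum_v \mu(v)$ can be evaluated directly from Proposition~\ref{non-rep-prop1} rather than via a looser $\sum_l 1 \cdot \mu$ bound. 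Getting these constants to combine cleanly into $O(n^2)$ is the main technical step.
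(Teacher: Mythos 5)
Your high-level plan (DFS-MT via Theorem~\ref{a1fapp-thm3}, pinning the search at the vertices of the resampled path, reusing Proposition~\ref{theta-e-prop}) matches the paper's strategy, but two quantitative claims at the heart of your accounting do not hold as stated. First, pinning the starting vertex $v \in V(B)$ does \emph{not} eliminate the full factor of $n$ from Theorem~\ref{a1fnon-rep-thm0}'s decomposition: the root of that branching is a \emph{pair} of vertices $(v_t, v_{t+l})$, and while $v_t$ is pinned to $v$, its color-matching partner $v_{t+l}$ is unconstrained apart from $\chi(v_{t+l})=\chi(v)$ and can be anywhere in the graph. Its $\theta$-weight is about $n\beta \approx n\Delta^{-2}$, so the per-vertex cost is genuinely $\Theta\bigl(\Delta + n\,\mathrm{poly}(1/\Delta)\bigr)$; the paper obtains $T_v \leq O(\Delta + n\Delta^{-1/3}) = O(n)$, not your claimed $T^* = O(\Delta^{4/3})$. (The final $O(n^2)$ survives because $\sum_B |V(B)|\,\mu(B) = \sum_l n\Delta^{2l-1}C^l\alpha^{2l}\cdot 2l = O(n)$, so $O(n)$ per vertex suffices; but your cancellation numerology, with $\Delta^{4/3}$ cancelled against the LLL slack, is based on the wrong per-vertex bound.)

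Second, the issue you flag as ``the main obstacle'' --- the outer enumeration over the half-length $l$ and the position $j$ of $v$ --- is precisely where your scheme breaks, and your proposed fix does not repair it. If you restart a fresh branching for each putative $(l,j)$, then each such restart incurs at least the level-zero cost (locating the color-matching partner candidates, expected weight $\Theta(1 + n\beta)$), and this cost does not decay with $l$; only the deeper branching levels enjoy the $(\beta\Delta^2)^r$ decay. Since $l$ ranges up to $\Theta(n)$ and $j$ up to $l$, and long bad events cannot simply be skipped in a Las Vegas implementation, this reintroduces polynomial overhead per resampled vertex (and the same problem makes your claimed $O(n^2)$ initialization scan via the Theorem~\ref{a1fnon-rep-thm0} search fail: that search costs roughly $n^3\Delta^{-2/3}$ even on the initial product-distributed configuration). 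The paper's proof needs an extra idea exactly here: it branches forward and backward from the pinned vertex and its color-matching partner \emph{without committing to $l$ or $t$ in advance}, inferring them only when a repetition is detected, so the sum over branching depths $(r_1,r_2)$ replaces the sum over $(l,j)$ and the geometric decay applies to the whole cost. Without that device (or an equivalent sharing of the level-zero work across all $(l,j)$), the proposal does not give $O(n^2)$.
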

\begin{proof}
We assume throughout that $\Delta \leq \sqrt{n}$, as otherwise this is trivial (simply assign each vertex a unique color).

We will maintain a data structure $D$ in which we maintain the adjacency list of each vertex sorted by color. This costs $O(n^2)$ to initialize.

Suppose we are given a bad-event $B$, which is a path of vertice $w_0, \dots, w_{2 k-1}$ which is repetitively colored. In order to apply the depth-first-search MT algorithm, we must update $D$ identify any bad-events involving any vertices $w_0, \dots, w_{2k-1}$. We shall first show how, given a single vertex $v$, one can update $D$ identify any bad-events events involving $v$. We shall construct an event-decomposition such that
$$
\text{Time for vertex $v$} \leq \sum_{\text{events $E$}} c_{v,E} [E(\chi)] 
$$
where $\chi$ is the coloring after resampling $B$ and $c_{v,E}$ are non-negative constants. 

For each such vertex $v$, let us define
\begin{equation}
\label{tv-eqn}
T_v = \sum_{\text{events $E$}} c_{v,E} \theta(E)
\end{equation}

Then by Theorem~\ref{a1fapp-thm3}, we have
$$
T_B \leq T_{w_1} + T_{w_2} + \dots + T_{w_{2 k-1}}
$$

So, in order to bound $T_B$, it suffices to show an upper bound on $T_v$, for a given vertex $v$.

Thus, suppose we are given a configuration and a fixed vertex $v$, and we wish to update $D$ and determine if $v$ participates in any paths with repeated colors.  We begin by updating the sorted adjacency lists for each neighbor of $v$; this takes time $O(\Delta^2)$. 

Next, say that $v$ participates in a repeated path $v_0, \dots, v_{2 l -1}$ of length $2 l$, and occurs in position $t < l$. For the moment, let us suppose that $t=0$ and $l$ is fixed. To emphasize the position of $v$ in the list, we write $v_t = v = v_0$.

We will use a branching process similar to Theorem~\ref{a1fnon-rep-thm0}, in which a story corresponds to a list of distinct vertices $v_0, v_1, \dots, v_i, v_t, v_{l+1}, \dots, v_{l+i}$ for some $i = 0, \dots, l$. 

We begin by looping over the vertex in position $l$, restricting the search to vertices $v_l$ which has the same color as $v_0$. We also loop over all neighbors $v_1, v_{l+1}$ of $v_0, v_l$ respectively. Again, if they have the same color (and also $v_1 \neq v_{l+1}$), then we continue the search otherwise we abort. 
We continue this process, looping over pairs of vertices $v_2, \dots, v_{l-1}, v_{l+2}, \dots, v_{2 l -1}$. At each stage of this branching process, we insist that the colors in the path are repeated up to that point, and all vertices are distinct. At the end, we examine if the resulting path corresponds to a bad event. We can do a similar procedure if $t \neq 0$; we begin by guessing vertices $v_{t+1}, \dots, v_{l-1}, v_{t+l}, \dots, v_{2 l-1}$ and then branch backward on $v_{t-1}, \dots, v_0, v_{l+t-1}, \dots, v_l$.

As in Theorem~\ref{a1fnon-rep-thm0}, we can perform this enumeration in overall time
\begin{equation}
\label{tv-eqn2}
\Delta \times \Bigl( \sum_{v' \neq v} [\chi(v') = \chi(v)]  +  \smashoperator{\sum_{\substack{v' \neq v\\w \in N(v), w' \in N'(v)}}} [\chi(v') = \chi(v) \wedge \chi(w) = \chi(w')] \Bigr)
\end{equation}
where here the terms $w, w'$ indicate potential candidates for $v_1, v_{l+1}$ and $v'$ is a potential candidate for $v_l$.

By Proposition~\ref{theta-e-prop}, the overall contribution of this expression is at most to (\ref{tv-eqn}) is at most
$$
\Delta \times \Bigl( \sum_{v' \neq v} (\Delta + \Delta^{2/3})^{-2}   + \smashoperator{\sum_{\substack{v' \neq v\\w \in N(v), w' \in N'(v)}}} (\Delta + \Delta^{2/3})^{-4} \Bigr)
$$
which is $O(n \Delta^{-1})$.

Continuing in this way, we see that the $r^{\text{th}}$ level of this branching process has overall contribution to (\ref{tv-eqn}) of $O(n \Delta^{2 r +1} \beta^{r+1})$.

With a little thought, one can see that it is not necessary to specify a fixed value of $l, t$ for this branching. Once one specifies the initial vertex $v_t$ (without necessarily knowing $t$) and the corresponding vertex $v_{t+l}$ (again, without necessarily knowing $l$), one merely has to decide how many steps to branch forward/backward from these two vertices. If at some point during this branching process one detects a repeated color sequence, one can then infer the corresponding $t, l$. 

If one branches $r_1$ forward steps and $r_2$ backward steps, then the contribution of the resulting work factor to $T_v$ is similarly 
$$
O(n \Delta^{2(r_1 + r_2) + 1} \times \beta^{r_1 + r_2 + 1})
$$
Summing over $r_1, r_2$, one has the total work for $v$ is at most 
$$
T_v \leq \Delta + O( \sum_{r_1 = 0}^{\infty} \sum_{r_2 = 0}^{\infty} n \Delta^{2(r_1 + r_2) + 1} \times \beta^{r_1 + r_2 + 1})
$$ a simple calculations shows this is at most $O(\Delta + n \Delta^{-1/3}) \leq O(n)$.

This bound on $T_v$ yields a bound on $T_B$ for any bad-event $B$ which is a path of length $2 l$:
$$
T_B \leq 2 l \times O(n)
$$

Summing over all such bad events, we have
\begin{align*}
&\sum_{B} \mu(B) T_B  \leq \sum_{l=1}^{\infty} n \Delta^{2 l - 1} C^l \alpha^{2 l} \times 2 l \times O(n) \leq O(n^2)
\end{align*}

\end{proof}

\subsection{Parallel algorithm for the Thue number}
Moser \& Tardos introduced in \cite{moser-tardos} a generic parallel form of their resampling algorithm. This algorithm can be summarized as follows:
\begin{enumerate}
\item[1.] Draw $X_1, \dots, X_n$ from $\Omega$.
\item[2.]  Repeat while there is some true bad event:
\begin{enumerate}
\item[3.] Choose (arbitrarily) maximal independent set $I$ of currently-true bad events $B \in \mathcal B$.
\item[4.] Resample all the bad-events $B \in I$ in parallel.
\end{enumerate}
\end{enumerate}

As shown in \cite{harris-haeupler}, this algorithm will terminate with high probability after $O( \frac{\log n}{\epsilon} )$ rounds, as long as we satisfy a slightly stronger form of the LLL criterion, namely we satisfy it with $\epsilon$-slack. That is, for each bad-event $B$ we require
$$
\mu(B) \geq (1+\epsilon) \theta(B)
$$
for some $\epsilon > 0$. Furthermore if we can detect the currently-true bad-events in time $O(\log^2 n)$, then the overall running time is $O(\frac{\log^3 n}{\epsilon})$. 

In order to turn this into an efficient randomized algorithm, it suffices to enumerate at each stage all currently-true bad-events, using polylogarithmic time and polynomial space. (This automatically implies that there are a polynomial number of true bad-events, and so a maximal independent set of them can be found efficiently via Luby's algorithm.)\footnote{Alternatively, \cite{moser-tardos} shows that the parallel algorithm terminates after $O(\frac{\log \sum_{B \in \mathcal B} \mu(B)}{\epsilon})$ iterations, and one may show directly in this case that this is $O(\frac{\log n}{\epsilon})$. The analysis of \cite{harris-haeupler} shows this directly without needing to compute $\sum_B \mu(B)$.}

\begin{proposition}
\label{a1fnon-rep-prop2}
There is a constant $\phi > 0$ such that any graph $G$ of maximum degree $\Delta$ can be $C$-colored to avoid repetitive vertex-colorings as long as
$C \geq \Delta^2 + \phi \Delta^{2}/\log \Delta$. 
Furthermore, such a coloring can be found in ZNC (Las Vegas NC): the algorithm terminates successfully with probability 1 after expected time $O(\log^4 n)$ using $\text{poly}(n)$ processors.
\end{proposition}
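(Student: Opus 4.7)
The proof combines two ingredients: (i) an LLL analysis with $\epsilon = \Omega(1/\log n)$ slack, obtained by revisiting the calculation of Proposition~\ref{non-rep-prop1} with the enlarged palette $C \geq \Delta^2 + \phi \Delta^2/\log\Delta$, and (ii) a parallel implementation of the branching search from Theorem~\ref{a1fnon-rep-thm} that fits into the parallel Moser-Tardos framework.

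For (i), set $\alpha = (1+\delta)/C$ with $\delta$ to be chosen. The strengthened LLL condition $\mu(B) \geq (1+\epsilon)\theta(B)$ for every $B$ reduces, following the proof of Proposition~\ref{non-rep-prop1} and factoring out the dependence on the path-length, to the single inequality $\alpha C \geq \sqrt{1+\epsilon}\bigl(1 + \frac{\alpha^2 C \Delta}{(1-\alpha^2 C\Delta^2)^2}\bigr)$, whose tightest incarnation occurs at $l = 1$. Because $C \geq \Delta^2(1+\phi/\log\Delta)$ gives $\Delta^2/C = 1 - \Theta(1/\log\Delta)$, one can pick $\delta, \epsilon$ both of order $c/\log\Delta$ for a small enough constant $c = c(\phi) > 0$: the denominator $1 - \alpha^2 C \Delta^2 = \Theta(1/\log\Delta)$ so the fraction is $O(\log^2\Delta/\Delta) = o(\epsilon)$, and the inequality holds. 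Since $\Delta \leq n - 1$, this delivers $\epsilon = \Omega(1/\log n)$, and the parallel MT analysis of \cite{moser-tardos,harris-haeupler} then guarantees termination w.h.p.\ in $O(\log n/\epsilon) = O(\log^2 n)$ rounds.

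The main obstacle is (ii): in each round we must enumerate \emph{all} currently-true bad events -- vertex-simple paths of \emph{a priori} unbounded length that are repetitively colored -- in $O(\log^2 n)$ parallel time on $\text{poly}(n)$ processors. My plan is to parallelize the branching search of Theorem~\ref{a1fnon-rep-thm} via a pointer-jumping/matrix-squaring strategy: maintain the list of ``matched walk pairs'' of length $r$ (pairs of equal-length walks with matching colors and appropriate adjacency), and produce matched pairs of length $2r$ by concatenation, using parallel Boolean matrix multiplication on the indicator of compatible endpoints. Crucially, the size of the live set remains polynomial at every depth: applying Lemma~\ref{internal-lemma1} together with Proposition~\ref{theta-e-prop} (exactly as in the $T_v$ analysis of Theorem~\ref{a1fnon-rep-thm}) bounds the expected total work across all depths by a polynomial, because the slack $\epsilon$ keeps the per-step survival factor $(1+\epsilon)\Delta^2/C$ strictly below $1 - \Omega(1/\log n)$; moreover $O(\log n)$ doublings suffice to cover all path lengths up to $n$. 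Each doubling executes in $O(\log n)$ parallel time, for a total detection cost of $O(\log^2 n)$ per round, with vertex-distinctness verified by a single parallel postprocessing pass. Combining this with Luby's MIS algorithm on the enumerated true bad events yields $O(\log^2 n)$ time per round and hence $O(\log^4 n)$ expected total time on $\text{poly}(n)$ processors. The Las Vegas character is immediate: exactness of the detection subroutine certifies termination, and should a particular attempt exceed its expected round count we simply restart, preserving the expected running time by a standard geometric argument.
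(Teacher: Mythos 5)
Your high-level plan coincides with the paper's: a slackened LLL calculation giving $\epsilon = \Omega(1/\log n)$, $O(\log n/\epsilon)$ rounds of parallel MT, a doubling-style parallel search whose live set is kept polynomial in expectation via the $\theta$-bounds, Luby's MIS on the detected bad events, and restarts to preserve the Las Vegas guarantee. However, there is a genuine gap in your detection step: you cannot defer vertex-distinctness to ``a single parallel postprocessing pass.'' The estimate you invoke, $\theta\bigl(\chi(v_1)=\chi(u_1)\wedge\cdots\wedge\chi(v_k)=\chi(u_k)\bigr)\leq\beta^k$ from Proposition~\ref{theta-e-prop}, is valid only when $v_1,\dots,v_k,u_1,\dots,u_k$ are \emph{distinct} vertices, and this hypothesis is exactly what keeps the live set polynomial. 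If intermediate ``matched walk pairs'' may share vertices, the bound collapses regardless of the coloring: every pair consisting of two copies of the same walk is trivially colour-matched, and there are $n\Delta^{r-1}$ such pairs of length $r$, which is super-polynomial once $r$ exceeds a constant multiple of $\log n/\log\Delta$, i.e.\ after only $O(\log\log n)$ doublings. The repair is the paper's condition (B3): enforce distinctness (and that both halves are vertex-simple paths) at \emph{every} stage of the doubling. This is sound because the two halves of a distinct-vertex path pair are again distinct-vertex path pairs, so no true bad event is ever discarded, and with distinctness in force your survival-factor argument does give an expected live-set size of $\text{poly}(n)$ at every depth.

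Two smaller points need attention as well. First, an expected polynomial bound does not by itself yield a poly-processor NC round: you must fix a concrete polynomial cap (Markov's inequality makes exceeding it unlikely), detect overflow of the live set, and abort-and-restart; your restart rule is keyed only to the round count, not to the live-set size, whereas the paper aborts precisely when $V^t_l$ exceeds its polynomial threshold. Second, literal doubling of a single length $r\mapsto 2r$ (or Boolean matrix squaring, which records reachability rather than the list of pairs) only produces power-of-two lengths; you must carry all lengths $k\leq 2^l$ at stage $l$ (or concatenate pieces of unequal lengths), and at the end also check the adjacency between the last vertex of the first half and the first vertex of the second half so that the concatenation is a single vertex-simple path. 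With these repairs your argument becomes essentially the paper's proof.
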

\begin{proof}
Along the same lines as Theorem~\ref{a1fnon-rep-thm}, a sufficient condition for the parallel MT algorithm with $\epsilon$ slack is
\begin{equation}
\label{a1fnon-rep-eqn2}
C \alpha - \frac{\alpha^2 C \Delta}{(1 - \alpha^2 C \Delta^2)^2} - n T C^T \Delta^{2 T} \alpha^{2 T} \geq 1 + \epsilon
\end{equation}
and this is satisfied for $\alpha = (\Delta^2 + \frac{\phi \Delta^2}{2 \log \Delta})^{-1}$. 

For $\phi, x$ sufficiently large, the LHS of (\ref{a1fnon-rep-eqn2}) is a decreasing function of $\Delta$, hence reaches its minimum value at $\Delta = n$. At this point, one can observe that (\ref{a1fnon-rep-eqn2}) is satisfied for $\epsilon = \Omega(1/\log n)$. Thus MT terminates after $O(\log^2 n)$ iterations whp.

Our task becomes to develop a branching process for finding currently-true bad-events, whose expected number of active stories is bounded by a polynomial and whose running time is polylogarithmic.

We will use a branching which proceeds through $l = 1, 2, \dots, \log_2 n$ rounds. At each round $l$, we enumerate all sets of vertices $v_0, \dots, v_{k-1}, w_0, \dots, w_{k-1}$ which satisfy the following conditions:
\begin{enumerate}
\item[(B1)] $k \leq 2^l$
\item[(B2)] $\chi(v_0) = \chi(w_0), \dots, \chi(v_{k-1}) = \chi(w_{k-1})$
\item[(B3)] $v_0, \dots, v_{k-1}, w_0, \dots, w_{k-1}$ are distinct.
\item[(B4)] $v_0, \dots, v_{k-1}$ and $w_0, \dots, w_{k-1}$ are paths.
\end{enumerate}
To extend the set of stories from stage $l$ to stage $l+1$, we use the following observation: if $v_0, \dots, v_{k-1}, w_0, \dots, w_{k-1}$ satisfy these conditions at stage $l+1$, then $v_0, \dots, v_{k/2-1}, w_0, \dots, w_{k/2-1}$ and $v_{k/2}, \dots, v_{k-1}, w_{k/2}, \dots, w_{k-1}$ both satisfy these conditions (separately) for stage $l$. Thus, we may build the set of all stories satisfying these conditions recursively by pairing stories at stage $l$ and checking if they survive to stage $l+1$.

Furthermore, we see that if there are $V_l^t$ stories satisfying these conditions at each time $t$ and stage $l$, then for each $l$ this pairing requires time $V_l^2 \text{poly}(n)$ and time $O(\log n)$. Thus, if we show that $V_l \leq \text{poly}(n)$ for each $l = 0, \dots, \log_2 n$ then this shows that this process can be implemented using $O(\log^2 n)$ time and $\text{poly}(n)$ processors.

Next, we claim that it suffices to show that $\bE[V^t_l] \leq \text{poly}(n)$. For, suppose that $\bE[V^t_l] \leq n^r$. Then by Markov's inequality we have that whp $V^t_l \leq n^r \times T \times \log_2 n \times n^{100}$. Furthermore, one may easily detect if $V_l$ exceeds this bound; if so, we abort the algorithm and start from scratch.

Finally, we turn to estimating $\bE[V^t_l]$. Given any fixed sequence $v_0, \dots, v_{k-1}, w_0, \dots, w_{k-1}$ satisfying (B1), (B3), (B4), we may slightly modify the proof of Proposition~\ref{theta-e-prop} to see that the probability that it satisfies (B2) as well is at most $\beta^k$ for
$$
\beta = C \alpha^2
$$

Now, in a manner similar to Theorem~\ref{a1fnon-rep-thm0}, we may take a union bound over all $k = 1, \dots, 2^l$ and all vertices $v_0, \dots, v_{k-1}, w_0, \dots, w_{k-1}$ satisfying (B1), (B3), (B4) to see that $\bE[V^t_l] \leq \text{poly}(n)$.

Thus, the overall expected running time is $O(\frac{\log^3 n}{\epsilon}) = O(\log^4 n)$ using a polynomial number of processors.
\end{proof}

\subsection{Higher-order Thue numbers}
\label{a1fapp-sec:higher-order-Thue}
Recall the notion of \emph{$k$-repetitions} introduced in \cite{alon-grytczuk}. That is, given a parameter $k$, we want to avoid the event that a sequence of colors  $x x \dots x$ appears on a vertex-simple path, with the string $x$ occurring $k$ times. 

It is not hard to extend the analysis of Theorem~\ref{a1fnon-rep-thm} to obtain an algorithm for $k$-Thue number as follows:
\begin{theorem}
For some constant $\phi > 0$, there is a Las-Vegas algorithm which takes as input a graph $G$ and parameter $k$, and produces a vertex coloring with $C =  \Delta^{1 + \frac{1}{k-1}} + \phi \Delta^{2/3 + \frac{1}{k-1}}$ colors which avoids $k$-repetitions. This algorithm runs in expected time $n^{k+O(1)}$.
\end{theorem}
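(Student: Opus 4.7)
The plan is to adapt the proof of Theorem~\ref{a1fnon-rep-thm0} to the $k$-repetition setting, first generalizing Propositions~\ref{non-rep-prop1} and~\ref{theta-e-prop} to this broader class of bad events, and then building a branching search analogous to the one used for ordinary non-repetitive colorings. The atomic bad events are now vertex-simple paths $v_0,v_1,\dots,v_{kl-1}$ with $\chi(v_i)=\chi(v_{i+l})=\cdots=\chi(v_{i+(k-1)l})$ for every $i=0,\dots,l-1$. Each such event has probability $C^{-kl}$, and I would assign $\mu(B)=\alpha^{kl}$. Summing over all bad events touching a fixed vertex $v$ and accounting for position within the path gives
\[
\mu(v) \leq \sum_{l=1}^{\infty} kl \cdot C^{l} \Delta^{kl-1} \alpha^{kl} = \frac{k\gamma}{\Delta(1-\gamma)^{2}}, \qquad \gamma := C\Delta^{k}\alpha^{k},
\]
and as in Proposition~\ref{non-rep-prop1}, the asymmetric LLL criterion reduces to $\alpha C \geq 1 + \mu(v)$. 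Writing $\alpha = (1 + c\Delta^{-1/3})/C$ with $C = \Delta^{1+1/(k-1)} + \phi\Delta^{2/3+1/(k-1)}$ gives $\gamma = (1+c\Delta^{-1/3})^{k}/(1+\phi\Delta^{-1/3})^{k-1}$; routine algebra shows that for $\phi$ a sufficiently large constant (depending on $k$) and a suitable choice of $c$ with $kc<(k-1)\phi$, both $\gamma<1$ and $\alpha C \geq 1+\mu(v)$ hold, so the criterion is satisfied.

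Next I would generalize Proposition~\ref{theta-e-prop} to the ``$k$-column'' event $E' = \bigwedge_{j=1}^{m}\bigl(\chi(v_{j,0})=\cdots=\chi(v_{j,k-1})\bigr)$ on $km$ distinct vertices. This event has probability $C^{-(k-1)m}$ (one free color per column), and bounding the sum over independent neighbor sets by $\prod_{v\in E'}(1+\mu(v)) \leq (\alpha C)^{km}$ from the criterion yields
\[
\theta(E') \leq C^{-(k-1)m}(\alpha C)^{km} = \beta^{m}, \qquad \beta := C\alpha^{k} = \gamma/\Delta^{k}.
\]

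With these bounds in hand, the search algorithm becomes a column-by-column branching entirely analogous to the path extension in Theorem~\ref{a1fnon-rep-thm0}. After sorting each adjacency list by color (in $O(n^{2})$ time), a story at level $i$ consists of $k(i+1)$ distinct vertices arranged in $i+1$ columns of $k$, with colors equal within each column and with each row a simple path in $G$; one extends by guessing the next column using the sorted lists to enumerate only color-matching neighbors of the previous column. The event decomposition then contributes at level $i$ an expected amount at most $n^{k}\Delta^{ki}\beta^{i+1}$ (an $n^{k}$ factor from the free first column, $\Delta^{ki}$ from the path extensions, and $\beta^{i+1}$ from the $\theta$-bound on the $i+1$ column-equality events), so by Theorem~\ref{a1fapp-thm2},
\[
T \leq \sum_{l=1}^{n}\sum_{i=0}^{l-1} O\bigl(n^{k}\Delta^{ki}\beta^{i+1}\bigr) = O\!\left(\frac{n^{k+1}\beta}{1-\gamma}\right) = O\bigl(n^{k+1}\Delta^{1/3-k}\bigr),
\]
since $\Delta^{k}\beta=\gamma$ and $1-\gamma = \Theta(\Delta^{-1/3})$. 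Combined with $\sum_{B}\mu(B) \leq O(n)$, this yields total expected runtime $O(n^{k+2}\Delta^{1/3-k}) = n^{k+O(1)}$, and the algorithm is Las Vegas because MT terminates iff the exhaustive branching search reports no bad events.

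The main obstacle will be the tight simultaneous balancing of the LLL criterion: because $\mu(v)$ contains a factor $1/(1-\gamma)^{2}$ with $\gamma$ only just below $1$, the available slack in $\alpha C - 1 - \mu(v)$ is $\Theta(\Delta^{-1/3})$, so one must choose $c$ and $\phi$ so that both $\gamma < 1$ and $\alpha C - 1 \geq \Omega(\Delta^{-1/3})$ hold at the correct order with $k$-dependent constants. Once this is resolved, the remainder follows the template of Theorem~\ref{a1fnon-rep-thm0} directly, the exponent $k$ in $n^{k+O(1)}$ arising because an initial column of a $k$-repetition already involves $k$ free vertices.
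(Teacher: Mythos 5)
Your proposal is correct and follows exactly the route the paper intends: the paper omits a proof for this theorem (deferring to ``extend the analysis of Theorem~\ref{a1fnon-rep-thm}''), and your derived LLL criterion $C\alpha \geq 1 + k\gamma/(\Delta(1-\gamma)^2)$ with $\gamma = C\Delta^k\alpha^k$ is precisely the condition (\ref{a1flll-k-thue-eqn}) that the paper states explicitly in the proof of Theorem~\ref{a1fthm:higher-order-thue}, while your column-by-column branching with the $\theta(E')\leq\beta^m$ bound mirrors Proposition~\ref{theta-e-prop} and Theorem~\ref{a1fnon-rep-thm0}. One small point: the theorem asserts a \emph{universal} constant $\phi$, and your hedge that $\phi$ may depend on $k$ is unnecessary --- the worst case of the criterion occurs at $k=2$ (e.g.\ taking $c=1$, one needs roughly $((k-1)\phi-k)^2\geq k$, which holds for all $k\geq 2$ once $\phi\geq 2+\sqrt{2}$), exactly as the paper notes in the proof of the $\epsilon$-relaxed version.
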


For any fixed value of $k$, this is a polynomial-time algorithm. But developing an algorithm whose running time scales with $k$, presents new algorithmic challenges. Note that the approach of \cite{hss}, which is based on finding a ``core'' set of bad events which can be checked quickly, will not work here --- for, the work required to check even the color sequences of length $1$ (the simplest class of bad event), is already $n \Delta^k$, which can be super-polynomial time.

Our main result here is: 
\begin{theorem}
\label{a1fthm:higher-order-thue}
For some constant $\phi > 0$, there is an algorithm with the following properties. It takes as input a graph $G$, a parameter $k$, and a parameter $\epsilon$. It runs in expected time $n^{O(1/\epsilon)}$, and produces a vertex coloring with $C = \Delta^{1 + \frac{1+\epsilon}{k-1}} + \phi \Delta^{2/3 + \frac{1+\epsilon}{k-1}}$ colors, which avoids any $k$-repetitions whp. That is, there is no vertex-simple path in which a color sequence is repeated $k$ times. Note that this is \emph{not} a Las-Vegas algorithm.
\end{theorem}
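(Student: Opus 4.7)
The plan is to combine the branching search of Theorem~\ref{a1fnon-rep-thm0}, generalized to $k$-fold repetitions, with the \cite{hss} technique of restricting MT to a polynomial-sized sub-collection of bad events, thereby dodging the NP-hardness of detecting $k$-repetitions exactly. First I set up the LLL: for each vertex-simple path of length $kl$ whose color sequence has the form $x^{k}$ with $|x| = l \geq 1$, I introduce a bad event $B$ with weight $\mu(B) = \alpha^{kl}$, mirroring Proposition~\ref{non-rep-prop1}. A routine generalization of the computation there gives $\mu(v) \leq \Delta^{-1} \sum_{l \geq 1} l\,(C \Delta^{k}\alpha^{k})^{l}$, and the asymmetric LLL criterion reduces to a requirement of the form $\alpha\, C^{(k-1)/k} \geq 1 + \mu(v)$, generalizing (\ref{a1fnon-rep-eqn}). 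Choosing $\alpha$ appropriately, the extra factor $\Delta^{\epsilon/(k-1)}$ hidden in $C = \Delta^{1 + (1+\epsilon)/(k-1)} + \phi \Delta^{2/3 + (1+\epsilon)/(k-1)}$ provides multiplicative $(1+\epsilon)$-slack in this criterion; equivalently, $C \Delta^{k} \alpha^{k} \leq c\, \Delta^{-\epsilon}$ for some absolute constant $c$.

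Next I fix a depth threshold $L = \Theta(\log n / (\epsilon \log \Delta))$ and restrict MT to resample only the \emph{short} bad events, those with block length $l \leq L$. At each step I search for a true short bad event using a branching that extends $k$ sub-paths in parallel: at level $r$ I maintain $k$ disjoint length-$r$ vertex sequences whose colors agree position-by-position, and extend each sub-path by one vertex, giving branching factor $\Delta^{k}$ per level. A direct generalization of Proposition~\ref{theta-e-prop} shows that the event ``the $k$ sub-paths agree in $r$ colors'' has $\theta$-value at most $(c \Delta^{-\epsilon})^{r}$ times the trivial structural probability, so the level-$r$ contribution to the quantity $T$ of Theorem~\ref{a1fapp-thm2} is $O(n \Delta^{-\epsilon r})$, geometrically decaying. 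Summing over $r \leq L$ yields a polynomial $T$; the worst-case branching tree has at most $\Delta^{kL} = n^{O(1/\epsilon)}$ nodes, and the expected number of resamplings of short bad events is polynomial by the LLL on the restricted set, so the overall expected running time is $n^{O(1/\epsilon)}$.

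For correctness I apply Proposition~\ref{hss-result} with $\mu$ defined on the restricted set of bad events --- this is valid because the witness-tree argument is insensitive to which bad events drive resamplings, as long as $\mu$ satisfies the LLL criterion on the resampled set, which it does with $\epsilon$-slack. The probability that any specific long bad event $B$ of block length $l > L$ appears in the output is bounded by $\theta(B) \leq \alpha^{kl}$; union-bounding over all such $B$ gives a tail $(n/\Delta) \sum_{l > L}(c \Delta^{-\epsilon})^{l}$ that is $o(1)$ by our choice of $L$, so whp the output has no $k$-repetition of any length. The main obstacle --- and the reason the algorithm is only Monte Carlo --- is that verifying the absence of $k$-repetitions is itself NP-hard \cite{marx:discrete09}, which precludes a rerun-on-failure strategy that would turn the algorithm into a Las Vegas one.
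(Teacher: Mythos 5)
Your LLL set-up, the truncation to bad events of block length $l \leq L = \Theta(\frac{\log n}{\epsilon \log \Delta})$, and the union-bound argument that long bad events are avoided whp all match the paper's proof (modulo a convention slip: with atomic bad events of probability $C^{-kl}$ and weight $\alpha^{kl}$, the criterion is $\alpha C \geq 1 + \mu(v)$, not $\alpha C^{(k-1)/k} \geq 1+\mu(v)$; the latter would actually fail for $\alpha \approx C^{-1}$). The genuine gap is in the search procedure, which is the entire algorithmic content of this theorem. Your branching maintains $k$ sub-paths of length $r$ extended in parallel. But the $k$ segments of a $k$-repetition are consecutive pieces of one vertex-simple path: segment $j$ begins at a neighbor of the \emph{last} vertex of segment $j-1$, and that last vertex is not determined until level $r = l$. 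So when you initialize the parallel branching you cannot enforce any adjacency between segments; the starting vertices of segments $2, \dots, k$ are constrained only to share the color of $v_0$. The number of level-$1$ stories is therefore about $n^k \cdot \theta(\text{$k$ vertices share a color}) \approx n^k C^{-(k-1)}$, and more generally the level-$r$ contribution is of order $n^k \Delta^{k(r-1)} C^{-(k-1)r}$, not the $O(n \Delta^{-\epsilon r})$ you claim --- you have dropped a factor of $n^{k-1}$. Since $C$ depends only on $\Delta$ (which may be $O(1)$ while $k$ grows), this is $n^{\Omega(k)}$, i.e., exactly the $n^{k+O(1)}$ bound of the preceding theorem that Theorem~\ref{a1fthm:higher-order-thue} is designed to beat. (For $k=2$ this loss is a single harmless factor of $n$, which is why the generalization from Theorem~\ref{a1fnon-rep-thm0} looks innocent.)

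The missing idea is to make the length cutoff $L$ do double duty. The paper first \emph{guesses the entire color sequence} $c_0, \dots, c_{l-1}$ --- at most $C^l \leq C^L = n^{O(1/\epsilon)}$ choices, which is precisely where the $n^{O(1/\epsilon)}$ in the run-time comes from --- and then branches along a \emph{single} path $v_0, v_1, \dots$ of length $kl$, where every vertex's color is now forced to equal $c_{i \bmod l}$. By Proposition~\ref{theta-e-prop}, $\theta(\chi(v_0)=c_0 \wedge \dots \wedge \chi(v_i)=c_i) \leq \alpha^{i+1}$, so the level-$i$ contribution per color sequence is $n\Delta^i\alpha^{i+1}$ with $\Delta\alpha < 1$, giving $n^{O(1)}$ per sequence and $n^{O(1)} C^L = n^{O(1/\epsilon)}$ overall via Theorem~\ref{a1fapp-thm2}. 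Your correctness argument for the truncated algorithm and your explanation of why the result is only Monte Carlo are fine; it is the enumeration strategy that must be replaced.
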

\begin{proof}
Suppose we are given a fixed $\epsilon > 0$. As in Theorem~\ref{a1fnon-rep-thm}, for any bad-event $B$ of length $k l$, we set $\mu(B) = \alpha^{k l}$, where $
\alpha = \bigl( \Delta^{1 + \frac{1+\epsilon}{k-1}} + \frac{\phi}{2} \Delta^{2/3 + \frac{1+\epsilon}{k-1}} \bigr)^{-1}
$
Now observe that for $\phi > 0$, we have $\alpha^k C \Delta^k < 1$, so the LLL criterion reduces to
\begin{equation}
\label{a1flll-k-thue-eqn}
C \alpha \geq 1 + \frac{k \alpha^k C \Delta^{k-1}}{(1 - \alpha^k C \Delta^k)^2}
\end{equation}

The LHS of (\ref{a1flll-k-thue-eqn}) can be written as a function of $\Delta, k, \phi$, and a parameter $v = \Delta^{\epsilon/(k-1)}$.  By routine calculus, we see that this is indeed satisfied, for all $k, \Delta$, for $\phi$ sufficiently large. (The worse case comes when $k$ is small, $v = 1$, and $\Delta \rightarrow \infty$). Routine calculations show that this satisfies the LLL criterion for $\phi$ sufficiently large.

The remaining task is to find any bad events which are true in a current configuration. To begin, we will simply ignore any color-sequences whose length is greater than some threshold $L = x (\frac{\log n}{\epsilon \log \Delta})$ for some sufficiently large constant $x$. We claim that, even though we do not check these events explicitly, the probability that any such bad event ever becomes true, is negligible. For the probability that there is such a long path is at most $\sum_{\text{$B$ has length $l \geq L$}} \theta(B) \leq \sum_{l = L}^{\infty} n C^l \Delta^{k l} \alpha^{k l}$; routine analysis shows that this is $n^{-\Omega(1)}$. So we only need to check the shorter sequences.

Now, suppose we wish to check for a $k$-repetition involving a color sequence of length $l$. As we are not attempting to determine exactly the exponent of $n$, we will simplify our task by using Theorem~\ref{a1fapp-thm2}, searching the entire graph for repeated color sequences. We will also simply enumerate over the exact value of the length $l$ of the path, rather than attempting to handle all values of $l$ simultaneously. These simplifications are both wasting work but only by a factor of $n^{O(1)}$. 

We begin by guessing the full $l$-long color sequence. Once this color sequence $c_0, \dots, c_{l-1}$ is fixed, we use a branching process; a story at stage $i$ consists of the vertices $v_0, \dots, v_i$ in order, which agree with the color sequence (that is, $v_i$ has color $c_{i \text{mod l}}$). 

Let us consider the overall cost of this branching process. At the $i^{\text{th}}$ level of this process, we must enumerate over colors sequences $c_1, \dots, c_l$ and possibilities for the vertices $v_0, \dots, v_i$. Thus, we may write the cost as
$$
\text{Cost of $i^{\text{th}}$ level} \leq \sum_{c_0, \dots, c_{l-1}} \sum_{\substack{v_0, v_1 \in N(v_0), v_2 \in N(v_1), \dots \\ \text{$v_0, \dots, v_i$ distinct}}} [\chi(v_0) = c_0 \wedge \chi(v_1) = c_1 \wedge \dots ]
$$

This event-decomposition is in the appropriate form to apply Theorem~\ref{a1fapp-thm2}. By Proposition~\ref{theta-e-prop} (using a different definition of $\alpha$), we have $\theta( \chi(v_0) = c_0 \wedge \chi(v_1) = c_1 \wedge \dots \wedge \chi(v_i) = c_i) \leq \alpha^{i+1}$. As there are $C^l$ choices for the colors $c_0, \dots, c_{l-1}$ and $n \Delta^i$ choices for the vertices $v_0, \dots, v_i$, the total contribution of this expression is at most $n \Delta^i \alpha^{i+1}$. Thus, summing from $i = 0, \dots, k l$, we see that overall we have that the overall cost to find bad-events of length $l$ is at most $C^l \sum_{i=0}^{k l} n \Delta^i \alpha^{i+1} \leq n^{O(1)} C^l$.

As we are only examining color sequences of length at most $L$, the expected work overall is at most $T \leq n^{O(1)}  C^L \leq n^{O(1/\epsilon)}$.

It is notable in this proof that we need to combine the method of \cite{hss}, which is based on identifying a core subset of bad events, with the fast-search method of Theorem~\ref{a1fapp-thm2}. In this application, the large bad events cannot be searched efficiently; searching the small ``easy'' bad events efficiently takes exponential time in general but is polynomial time on the random configurations presented during the MT algorithm.
\end{proof}

\subsection{Approximately-repeated color sequences}
\label{a1fapp-sec:col-seq}
In \cite{krieger}, the idea of non-repeated color sequences was generalized to avoiding $\rho$-similar color sequences, for some parameter $0 < \rho \leq 1$. If $x, y$ are two color-sequences of length $l$, we say that $x, y$ are $\rho$-similar if $x, y$ agree in at least $\lceil \rho l \rceil$ positions. 
When $\rho = 1$, of course, this simply means that $x = y$. Hence the problem of coloring the graph to avoid $\rho$-similar color sequences generalizes the problem of non-repetitive coloring.  Although the work of \cite{krieger} considered the problem for color sequences alone, this generalization has not been studied in the context of graph coloring. It presents new algorithmic challenges as well. We present the following result:
\begin{theorem}
\label{a1fthm:rho-similar}
There is some constant $\phi > 0$ with the following property. For all $\rho \in (0,1]$ and any graph $G$ with maximum degree $\Delta$, there is a coloring that avoids $\rho$-similar sequences, with
$$C = \rho^{-1} (1-\rho)^{1 - 1/\rho}  (\Delta^2 + \phi \Delta^{11/6})^{1/\rho}$$
colors. Furthermore, such a coloring can be found in expected time $n^{O(1)}$.
\end{theorem}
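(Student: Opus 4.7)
The plan is to parallel Proposition~\ref{non-rep-prop1} and Theorem~\ref{a1fnon-rep-thm0}, with the extra ingredient that bad events are now parameterized by both a vertex-simple path and an ``agreement-pattern'' subset.

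\textbf{LLL setup.} For each vertex-simple path $v_0,\dots,v_{2l-1}$ and each subset $S\subseteq\{0,\dots,l-1\}$ with $|S|=s:=\lceil\rho l\rceil$, I would introduce the bad event $B=B(v_0,\dots,v_{2l-1};S)$ asserting $\chi(v_i)=\chi(v_{l+i})$ for all $i\in S$; by a union bound, these events cover the event that the two halves of the path are $\rho$-similar. Then $P_{\Omega}(B)=C^{-s}$. Assign $\mu(B)=\alpha^s$ with $\alpha=\beta/C$ where $\beta^{\rho}=1+\phi'\Delta^{-1/6}$ for a constant $0<\phi'<\phi$ to be tuned. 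Summing over bad events containing a fixed vertex $v$ gives
\[
\mu(v) \;\leq\; \Delta^{-1}\sum_{l\geq 1} 2l\binom{l}{\lceil\rho l\rceil}\bigl(\Delta^{2}\alpha^{\rho}\bigr)^{l},
\]
and the entropy bound $\binom{l}{\lceil\rho l\rceil}\leq(\rho^{\rho}(1-\rho)^{1-\rho})^{-l}$ together with the stated value of $C$ makes the ratio inside the parentheses equal to $\beta^{\rho}/(1+\phi\Delta^{-1/6})\leq 1-(\phi-\phi')\Delta^{-1/6}/2$, so the geometric series converges and $\mu(v)=O(\Delta^{-2/3})$. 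Independent sets of neighbors of $B$ are indexed by picking at most one bad event containing each of the $2l$ path-vertices, so Pegden's criterion (\ref{a1eqn:pegden-asymm}) reduces to $(\alpha C)^{s}\geq(1+\mu(v))^{2l}$, i.e.\ $\beta^{\rho/2}\geq 1+\mu(v)$ (using $s\geq\rho l$). This holds for $\phi$ large enough, since $O(\Delta^{-2/3})\ll\phi'\Delta^{-1/6}/2$.

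\textbf{Algorithm.} Apply the MT algorithm. To find a currently-true bad event, after pre-sorting every adjacency list by color I would enumerate the length $2l$ (truncated at $L=O(\log n/\log\Delta)$, as the total $\theta$-weight of longer events is $n^{-\Omega(1)}$), a pivot position $t$, and the subset $S$; then branch in lockstep, extending to $v_i\in N(v_{i-1})$ and $v_{l+i}\in N(v_{l+i-1})$ at stage $i$, pruning as soon as the constraint $i\in S\Rightarrow\chi(v_i)=\chi(v_{l+i})$ is violated. The running time is controlled by Theorem~\ref{a1fapp-thm2} together with a $\rho$-analog of Proposition~\ref{theta-e-prop}: a partial story carrying $s_i$ agreement constraints has $\theta$-weight at most $(\alpha^{2}C)^{s_i}$, so each branching level contributes a factor of $\Delta^{2}(\alpha^{2}C)^{\rho}=\Delta^{2}\beta^{2\rho}/C^{\rho}$, which is strictly below $1$ thanks to the $\Delta^{11/6}$ slack. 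Summing over $(l,t,S)$, the combined factor $\binom{l}{\lceil\rho l\rceil}(\Delta^{2}/C^{\rho})^{l}$ telescopes to $(1+\phi\Delta^{-1/6})^{-l}$, yielding polynomial work per iteration; combined with $\sum_{B}\mu(B)=O(n)$, the overall expected MT run-time is $n^{O(1)}$.

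\textbf{Main obstacle.} The central difficulty is the extra $\binom{l}{\lceil\rho l\rceil}$ combinatorial factor introduced by the agreement-pattern $S$: it degrades the LLL inequality by exactly the entropy factor $(\rho^{\rho}(1-\rho)^{1-\rho})^{-l}$, producing the $\rho^{-1}(1-\rho)^{1-1/\rho}$ coefficient in $C$. Compounding this, the dependency weight $(1+\mu(v))$ is raised to the $2l$-th power while $\mu(B)=\alpha^{s}$ only supplies $s\approx\rho l$ powers of $\alpha$, forcing the $1/\rho$ exponent on $\Delta^{2}$ and weakening the slack exponent from $5/3$ (in Proposition~\ref{non-rep-prop1}) to $11/6$. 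Calibrating $\phi'$ against $\phi$ so that the convergence of $\mu(v)$ and the slack $\beta^{\rho/2}-1$ are simultaneously controlled is the delicate point of the verification.
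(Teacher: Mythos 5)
Your LLL set-up and the resulting existence claim are essentially the paper's: you union over agreement patterns, the binomial coefficient is absorbed by the entropy factor $(\rho^{\rho}(1-\rho)^{1-\rho})^{-l}$ (which is where the coefficient $\rho^{-1}(1-\rho)^{1-1/\rho}$ in $C$ comes from), and the per-vertex sum is a geometric series with ratio $1-\Theta(\Delta^{-1/6})$, whence $\mu(v)=O(\Delta^{-2/3})$ and the criterion closes for large $\phi$. That part is sound, modulo the usual rounding care with $\lceil\rho l\rceil$.

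The algorithmic half has a genuine gap, and it is exactly the obstacle the paper singles out as the main difficulty. First, your truncation is miscalibrated: the total $\theta$-weight of bad events of half-length $l$ decays only like $\bigl(\tfrac{\beta^{\rho}}{1+\phi\Delta^{-1/6}}\bigr)^{l}=(1-\Theta(\Delta^{-1/6}))^{l}$ -- the series converges, but barely, which is the whole point of the $\Delta^{11/6}$ slack -- so to make the tail $n^{-\Omega(1)}$ you need $L=\Omega(\Delta^{1/6}\log n)$, not $O(\log n/\log\Delta)$. At that scale, explicitly enumerating the pattern $S$ costs $\binom{l}{\lceil\rho l\rceil}=2^{\Theta(l)}=n^{\omega(1)}$ for $\rho$ bounded away from $0$ and $1$, and, more fundamentally, your lockstep branching does not keep the intermediate story count polynomial: for a fixed $S$ the level-$i$ factor is $\Delta^{2}$ when $i\notin S$ and only becomes $\Delta^{2}(\alpha^{2}C)$ when $i\in S$, so the quantity $\Delta^{2}(\alpha^{2}C)^{\rho}<1$ you quote is only an \emph{amortized} per-level factor. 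If the agreeing positions of a true bad event all sit at the end of the path, no pruning occurs for the first $(1-\rho)l$ levels and the story count grows like $\Delta^{2i}$ before collapsing, so the sum $\sum_i c_i\theta(A_i)$ in Theorem~\ref{a1fapp-thm2} is super-polynomial. The paper's proof is built around precisely this issue: it does not fix $S$ at all (stories are pruned by \emph{counting} agreements against the threshold $\lceil\rho t\rceil$), and it tries all $l$ cyclic starting offsets $a$, showing by a prefix-minimum argument on $r_i=s_i-\rho' i$ that for the minimizing offset the true bad path satisfies the prefix-agreement condition at every stage and hence survives, while the expected number of surviving stories stays $n^{O(1)}$ at every level. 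Without this reordering device (or a substitute for it), your search procedure is not polynomial-time.
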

\begin{proof}
Define the usual entropy function $h = h(\rho) = -(1 - \rho) \ln(1 - \rho) - \rho \ln \rho$. 

We can enumerate the bad events as follows. If we have a sequence $s$ of $2 l$ vertices, and a $l$-dimensional binary vector $w$ which has Hamming weight $H(w) = \lceil \rho l \rceil$, we define the bad event $B_{w,s}$ which is that vertices $s_i, s_{i+l}$ have the same color for all indices $i$ which $w_i = 1$. It is not hard to see that there is an $\rho$-similar vertex sequence iff there is some $w, s$ where the bad event $B_{w,s}$ occurs. (We can further insist that the vector $w$ has $w_1 = 1$; this gives slightly better bounds but does not change the asymptotics).

Set $\mu(B) = \alpha^{2 l}$ for a bad-event of length $2 l$, where $\alpha = e^{-h/\rho} (\Delta^2 + \frac{\phi}{2} \Delta^{11/6})^{-1/\rho}$

Let us count the bad events involving a vertex $v$. We enumerate this as follows. There are $(2 l) \Delta^{2 l - 1}$ paths involving vertex $v$. We must check a vector $w \in \{0, 1 \}^l$ which has a $1$ in the position corresponding to vertex $v$; this gives us $\binom{l-1}{\lceil \rho l \rceil - 1}$ further choices. Then there are $C^{\lceil \rho l \rceil}$ choices for the color sequence shared by $x, y$. Any such event has probability $\alpha^{2 \lceil \rho l \rceil}$. Summing over all $l$ gives us a total contribution of 
{\allowdisplaybreaks
\begin{align*}
\smashoperator[r]{\sum_{\text{$B$ involves $v$}}} \mu(B) &\leq \sum_{l = 1}^{\infty} (2 l) \Delta^{2 l -1} \binom{l-1}{\lceil \rho l \rceil - 1} \alpha^{2 \lceil \rho l \rceil} C^{\lceil \rho l \rceil}\\
&=\sum_{k = 1}^{\infty} (\alpha^2 C)^k \sum_{l=\lceil k/\rho \rceil}^{\lceil (k+1)/\rho \rceil - 1} (2 l) \Delta^{2 l -1} \binom{l-1}{k - 1} \\
\iffalse
&\leq \sum_{k = 1}^{\infty} (\alpha^2 C)^k \sum_{q=0}^{\infty} \frac{2 (k+1)}{\rho} \Delta^{2 l - q - 1} \binom{\frac{k+1}{\rho} - 1}{k-1} \\
&\leq \sum_{l = 1}^{\infty} (2 l) \Delta^{2 l -1} e^{h l} C^{\rho l} \alpha^{2 \rho l} \\
\fi
& \leq \frac{2 \alpha^{2 \rho} \Delta e^h}{ (1 - \alpha^{2 \rho} C^{\rho} \Delta^2 e^h)^2 }
\end{align*}
}
Hence the asymmetric LLL criterion for avoiding such $\rho$-similar edge colors reduces to
$$
C \alpha -  \frac{2 \alpha^{2 \rho} C^{\rho} \Delta e^h}{ (1 - \alpha^{2 \rho} C^{\rho} \Delta^2 e^h)^2 } \geq 1
$$

Routine calculus shows that the LHS is decreasing in $\rho$. So the worst case is when $\rho = 1$; then simple calculus shows that this is satisfied for $\phi$ sufficiently large.

We now come to the main algorithmic challenge: finding a bad event (if any are currently true). One might naively expect to apply the branching process of Theorem~\ref{a1fnon-rep-thm}: first choose the first and middle vertex in the path. Then branch on the vertices, aborting the search early if the color sequence so far has too many disagreements. To see why this naive branching process does not give a polynomial-time algorithm, observe that we will not be able to remove \emph{any} stories in the early stages of the branching, because we might have a color sequence $x y$ in which the agreeing positions all come at the \emph{end}. Thus, the collection of stories will increase exponentially before collapsing exponentially. Although the number of final stories is relatively small, the intermediate story counts can become large. We want the agreeing positions to come fast enough to keep the number of stories small throughout.

We will branch on the color sequence starting not from the vertices at positions $0, l$ (the first and middle vertex in the path), but rather starting at positions $i, l+i$ for some well-chosen $i = 0, \dots, l-1$. At the $t^{\rm{th}}$ stage of the branching process, we will branch on the vertices at positions $i+t, l+i+t \text{ modulo $2 l$}$. Here, $t = 0$ corresponds to the initial choice of vertices, and $t = 1$ corresponds to choosing the first edge emanating from them.  At stage $t$ of the branching, we insist that the number of agreeing positions seen so far, is at least $\lceil t \rho \rceil$; otherwise we remove that possibility from the branching process.

To summarize, we use the following algorithm to find bad color sequences of length $2 l$:
\begin{enumerate}
\item[1.] For $a = 0, \dots, l-1$ repeat the following:
\begin{enumerate}
\item[2.] Initialize with a single, null story.
\item[3.] For $t = 0, \dots, l-1$ do the following:
\begin{enumerate}
\item[4.] For each story in the stack, count the number of positions at which the color sequences agree so far. If this number is smaller than $\lceil \rho t \rceil$, remove the story from the stack.
\item[5.] For each story remaining in the stack, choose the vertex at positions $(a+t) \text{ modulo $2 l$}$ and $(l + a+t) \text{ modulo $2 l$}$. 
\iffalse
Typically there are $\Delta^2$ possibilities for these two vertices; except when $t = 0$ there are $n^2$ possibilities and at $t = l-a$ there are $n \Delta$ possibilities.
\fi Extend each story in all valid ways.
\end{enumerate}
\end{enumerate}
\end{enumerate}

We will first show that the running time for this algorithm is polynomially bounded. Let us fix some value of $a, t$, and consider the expected number of surviving stories. These must correspond to vertex paths of length $t$ whose color sequences agree on at least $\lceil \rho t \rceil$ positions. There are $\Delta^{2 t} n^{O(1)}$ choices for the vertices. For a fixed path, we can bound the probability that they agree on $\lceil \rho t \rceil$ positions as at most 
\begin{align*}
%&P(\text{vertex sequence agrees on $\geq \lceil \rho t \rceil$ positions}) \\
&\binom{t}{\lceil \rho t \rceil} c^{\lceil \rho t \rceil} \alpha^{2 \lceil \rho t \rceil} \leq n^{O(1)} e^{h t} c^{\rho t} \alpha^{2 \rho t} \leq n^{O(1)} \Bigl( \frac{\Delta^2 + \phi \Delta^{11/6}}{(\Delta^2 + (\phi/2) \Delta^{11/6})^2} \Bigr)^t
\end{align*}

Hence, the total expected number of stories for given $a, t$ is at most
$$
n^{O(1)} \Delta^{2 t-1} \Bigl( \frac{\Delta^2 + \phi \Delta^{11/6}}{(\Delta^2 + (\phi/2) \Delta^{11/6})^2} \Bigr)^t \leq n^{O(1)}
$$

Next, we must show that any bad event will indeed be discovered by this branching process. For, suppose $x, y$ are color sequence of length $l$ which agree on $\rho' l \geq \lceil \rho l \rceil$ positions. For $i = 1, \dots, l$ define $s_i$ to be the total number of agreements in positions $1, .., i$; for $i$ outside this range, define $s_i := s_{i \text{ mod } l}$. We also define the parameter $r_i = s_i - \rho' i$. Because $x, y$ agree on exactly $\rho' l$ positions, the sequence $r$ is periodic with period $l$.

We claim that for the value of $a$ in the range $1, \dots, l$ which minimizes $r_a$, then the color sequence $x y$ will survive the corresponding branching process. For, suppose at stage $t$, we lose $x y$. This implies that the total number of agreements between stages $a, a + t$ is strictly less than $\lceil \rho t \rceil \leq \rho' t$.  This implies that $s_{t+a} < s_{a} + \rho' t$ and hence $r_{t+a} < r_t$, contradicting minimality of $a$.
\end{proof}

\section{Partially avoiding bad events}
\label{a1fapp-sec:partial-avoid}
When the LLL condition is satisfied, then it is possible to select the variables so that \emph{no} bad events occur.  Alternatively, if one simply selects the underlying variables from $\Omega$ directly, then each bad event $B$ occurs with probability $P_{\Omega}(B)$. However, there can be a middle ground. As described in \cite{hss} even when the LLL condition is violated, one can use the MT-distribution to select the variables so that many fewer bad events occur than one would expect from $\Omega$. For example, if in the symmetric LLL setting we have $e p d = \alpha$, for $\alpha \in [1, e]$, then one can show that it is possible to cause at most $(1 + o(1)) m p e \ln(\alpha)/\alpha$ events to occur; here $o(1)$ is parameter which decreases with the dependency $d$ \cite{hss}. 

The result of \cite{hss} is based on the following idea: select each event to be a ``core event'' independently with probability $q$. These core events will not be allowed to occur; the non-core events are ignored. Each core event has on average $d q$ core neighbors. For $d$ sufficiently large, one can apply Chernoff bounds and the MT algorithm to ensure that the number of core neighbors is close to $d q$. Now, apply the MT algorithm a second time to avoid the core events, and show that in the MT distribution the non-core events have a high probability of being avoided.

While the method of \cite{hss} is intriguing, it suffers from a few shortcomings. First, the result is asymptotic; there is a second-order term, which is difficult to compute explicitly, and only goes away as $d \rightarrow \infty$. Second, this algorithm may be computationally expensive; the first application of the LLL, in particular, may dominate the second, ``real'' application, and may even be exponential time. Third, one obtains only gross bounds on the total number of true bad events; one cannot easily get more detailed information on the average behavior of a particular bad event.

In this section, we give new bounds and algorithms for partially avoiding bad events, which avoid these problems. In many cases, these algorithms are faster than the Moser-Tardos algorithm itself. The basic idea parallels \cite{hss}, in that we mark each bad event $B$ as core with probability $q(B)$. However, instead of using two separate LLL phases, we combine them into a single one. 

Recall the definition of $\theta(\cdot)$ from (\ref{a1feqn:theta}). 

\begin{theorem}
\label{a1falpha-resamp-thm1}
Suppose we are given a mapping $\mu: \mathcal B \rightarrow [0, \infty)$. Then there is an algorithm, which we refer to as the \emph{Truncated Moser-Tardos Algorithm}, whose output distribution $\Omega'$ on the underlying variables $X_1, \dots, X_n$ has the property
\begin{equation}
\label{a1fy1}
\forall B \in \mathcal B, P_{\Omega'} (B) \leq \max(0, \theta(B)-\mu(B))
\end{equation}
This algorithm has the same running-time behavior as other Moser-Tardos applications. In particular, the expected number of resamplings of a bad event is $\mu(B)$. 
(Note that the LLL criterion is simply that the RHS of (\ref{a1fy1}) is equal to zero.)
\end{theorem}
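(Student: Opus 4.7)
The algorithm I would propose is the natural ``core-marking'' variant of MT: for each $B \in \mathcal B$, flip an independent coin and mark $B$ as \emph{core} with probability $q(B) = \min(1, \mu(B)/\theta(B))$ (setting $q(B)=1$ when $\theta(B) = 0$); then draw $X \sim \Omega$ and run the MT loop restricted to the (random) set of core bad events, terminating when no core bad event is currently true. This differs from \cite{hss} in that we fold the marking into a single MT execution rather than running two separate LLL phases, and the choice of $q(B)$ is calibrated so that we spend exactly ``$\mu(B)$ worth'' of the total ``$\theta(B)$ budget'' on resamplings.

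For the resampling count, I would use a witness-tree analysis that absorbs the marking probabilities into the tree weights. For a fixed tree-structure $\tau$ rooted at $B$ with all other nodes labeled by bad events, the probability that $\tau$ appears as a witness tree in the restricted execution is bounded by $\prod_{v \neq \text{root}} q(\mathrm{label}(v)) \cdot w(\tau)$ (by the Witness Tree Lemma conditional on $Z$, together with the necessary condition that every node be core), together with a factor $q(B)$ for the root itself being core. Summing over $\tau$ and applying Proposition~\ref{a1fapp-gen-tree-prop} with the modified ``probabilities'' $q \cdot P_\Omega$ — which satisfy $\mu(B) \geq q(B)\theta(B)$ by construction, i.e.\ the appropriate LLL criterion holds — gives total weight $\leq \mu(B)/q(B)$. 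Multiplying by $q(B)$ yields expected resamplings $\leq \mu(B)$.

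For the output bound, observe that $P_{\Omega'}(B \mid Z_B = 1) = 0$ since the algorithm terminates only when no core bad event is true. Conditioning instead on $Z_B = 0$, I would apply the same modified witness-tree computation using Proposition~\ref{hss-result} in place of Proposition~\ref{a1fapp-gen-tree-prop}: the root of an ``output witness tree'' is labeled by the event $B$ (weight $P_\Omega(B)$, but no core factor), and every other node must be core. The resulting sum is at most $\mu(B)/q(B)$, and multiplying by $\Pr(Z_B = 0) = 1 - q(B)$ gives
$$ P_{\Omega'}(B) \leq (1 - q(B)) \cdot \mu(B)/q(B) = \mu(B)/q(B) - \mu(B) = \max(0,\,\theta(B) - \mu(B)), $$
where the last equality uses the definition of $q(B)$.

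The main obstacle is justifying that the Witness Tree Lemma, together with Propositions~\ref{a1fapp-gen-tree-prop} and~\ref{hss-result}, transfer correctly to the randomly restricted subsystem. Concretely, one must show that conditioning on a realization of $Z$ does not spoil the witness-tree bound $P(\tau \text{ appears} \mid Z) \leq w(\tau)$ for trees whose nodes are all core, and then that inserting the marginal factors $\prod q(v)$ recovers a valid ``modified-weight'' LLL sum whose value is $\mu(B)/q(B)$. The choice $q(B) = \mu(B)/\theta(B)$ is exactly what makes this recovery tight; any other choice either violates the effective LLL on the core subsystem or loses the equality that produces the clean bound $\theta(B) - \mu(B)$.
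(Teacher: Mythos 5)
There is a genuine gap, and it lies exactly where you flagged the ``main obstacle'': your algorithm flips each coin $Z_B$ \emph{once} and then runs MT on the fixed core set, but your witness-tree accounting charges a factor $q(\mathrm{label}(v))$ \emph{per node}. A tree-structure can contain many nodes with the same label $B'$, and under one-time marking these nodes all share the single coin $Z_{B'}$; the correct marginal factor is $\prod_{\text{distinct labels}} q(B')$, not $\prod_{\text{nodes}} q(\mathrm{label}(v))$, and the larger (distinct-label) weights are not controlled by Proposition~\ref{a1fapp-gen-tree-prop} or Proposition~\ref{hss-result} under the modified criterion $\mu(B)\geq q(B)\theta(B)$. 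This is not a removable technicality: already for a single self-dependent bad event $B$ with $P_{\Omega}(B)=0.9$ and $\mu(B)=1$ we have $\theta(B)=1.8$, $q(B)=5/9$, and your algorithm, conditioned on $Z_B=1$, performs an expected $P_\Omega(B)/(1-P_\Omega(B))=9$ resamplings, so its overall expected number of resamplings is $5>\mu(B)$; the claimed bound ``expected resamplings $\leq\mu(B)$'' fails for the one-shot-marking algorithm. (This is also why \cite{hss}, which uses one-time marking, needs large $d$ and a concentration argument.)

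The fix is to make the marking part of the product space rather than a one-time preprocessing step: introduce a new variable $Y(B)\sim\text{Bernoulli}(q(B))$ for each $B$, define $B'\in\mathcal B'$ as the event $B\wedge(Y(B)=1)$, and run ordinary MT on $\mathcal B'$, so that $Y(B)$ is \emph{resampled} every time $B'$ is resampled. Then $P_{\Omega}(B')=q(B)P_{\Omega}(B)$, the dependency structure of $\mathcal B'$ is that of $\mathcal B$, and the choice $q(B)=\min(1,\mu(B)/\theta(B))$ makes $\mathcal B'$ satisfy the asymmetric criterion of Theorem~\ref{a1fapp-main-thm} with the weights $\mu$; Theorem~\ref{a1fapp-main-thm} gives the resampling bound $\mu(B)$ directly, and Proposition~\ref{hss-result} applied to the event $B\wedge(Y(B)=0)$ gives $P_{\Omega'}(B)\leq(1-q(B))\theta(B)=\max(0,\theta(B)-\mu(B))$, which is (\ref{a1fy1}). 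With this modification your two calculations go through verbatim and no new ``modified-weight'' witness-tree lemma is needed; the per-node $q$ factors are then legitimate precisely because each node carries an independent fresh draw of $Y$.
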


\begin{proof}
Given our original set of bad events $\mathcal B$, we define a new binary variable $Y(B)$ for each bad event, which is Bernoulli-$q(B)$ and which represents that $B$ is ``core". We introduce a new set of bad events $\mathcal B'$, defined as follows: for each bad event $B \in \mathcal B$, we define $B' \in \mathcal B'$ to be the event that $B$ is true and $Y(B) = 1$, where we define $q(B) = \min(1, \frac{\mu(B)}{\theta(B)})$. The truncated MT algorithm for $\mathcal B$ is then defined by running the MT algorithm for $\mathcal B'$.

It is not hard to see that the set of bad events $\mathcal B'$ satisfies the asymmetric LLL criterion with the weighting function $\mu$. 

Now, consider a bad event $B$. In order for $B$ to occur in the output, it must be the case that $Y(B) = 0$. Thus, we have that $P_{\Omega'}(B) = P_{\Omega'}(B \wedge (Y(B) = 0))$. We now apply Proposition~\ref{hss-result} so that $P_{\Omega'}(B \wedge (Y(B) = 0)) \leq \theta(B \wedge (Y(B) = 0)) = \theta(B) P_{\Omega}(Y(B) = 0) = (1 - q(B)) \theta(B)$.  By our choice of $q(B)$, this is $\max(0, \theta(B)-\mu(B))$.
\end{proof}

This specializes easily to the symmetric setting by setting $\mu(B) = (e/\alpha)^{1/d} - 1$ for all $B$:
\begin{corollary}
\label{a1fsym-cor-most}
Suppose each bad event $B$ has $P_{\Omega}(B) \leq p, |N(B)| \leq d$; and suppose that $epd \leq \alpha$ for $\alpha \in [1,e]$.
Then one can efficiently construct from a probability space $\Omega'$ in which each bad event $B$ occurs with probability at most $\frac{\ln \alpha}{d}$. The expected number of total resamplings is $O(m/d)$ to draw from $\Omega'$.
\end{corollary}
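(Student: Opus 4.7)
\textbf{Proof plan for Corollary~\ref{a1fsym-cor-most}.}
The plan is to invoke Theorem~\ref{a1falpha-resamp-thm1} with the constant weighting suggested in the statement, namely $\mu(B) = \mu := (e/\alpha)^{1/d} - 1$ for every $B \in \mathcal{B}$, and then verify the two desired conclusions (the per-event probability bound and the total resampling bound) by direct calculation.

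First I would bound $\theta(B)$. Using the sufficient-condition inequality noted after (\ref{a1feqn:theta}) (that $\theta(E) \leq P_{\Omega}(E) \prod_{B' \sim E}(1+\mu(B'))$), together with $|N(B)| \leq d$ and our constant choice of $\mu$, I get
\[
\theta(B) \;\leq\; p\,(1+\mu)^d \;=\; p \cdot \frac{e}{\alpha}.
\]
The hypothesis $epd \leq \alpha$ then yields $\theta(B) \leq 1/d$. Next I would lower-bound $\mu$: writing $(e/\alpha)^{1/d} = e^{(1-\ln\alpha)/d}$ and applying $e^x \geq 1+x$ with $x = (1-\ln\alpha)/d \geq 0$ (valid since $\alpha \leq e$), I obtain $\mu \geq (1-\ln\alpha)/d$. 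Subtracting gives the desired per-event bound
\[
\theta(B) - \mu(B) \;\leq\; \frac{1}{d} - \frac{1-\ln\alpha}{d} \;=\; \frac{\ln\alpha}{d},
\]
and plugging into Theorem~\ref{a1falpha-resamp-thm1} yields $P_{\Omega'}(B) \leq \max(0,\theta(B)-\mu(B)) \leq \ln(\alpha)/d$.

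For the running time, Theorem~\ref{a1falpha-resamp-thm1} guarantees that the expected number of resamplings of each $B$ equals $\mu$, so the total expected number of resamplings is $m\mu$. Using $x = (1-\ln\alpha)/d \leq 1/d$ and the elementary bound $e^x - 1 \leq x e^x \leq e/d$ (for $d \geq 1$), I get $\mu = O(1/d)$, so the total is $O(m/d)$, as required. The main ``obstacle'' is really just the algebraic choice of $\mu$: one has to recognize that setting $(1+\mu)^d = e/\alpha$ exactly balances the factor $e/\alpha$ against $epd \leq \alpha$ to expose the clean cancellation $\theta(B) - \mu(B) \leq \ln(\alpha)/d$; everything else is a direct substitution into Theorem~\ref{a1falpha-resamp-thm1}.
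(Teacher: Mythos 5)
Your proposal is correct and matches the paper's intended argument exactly: the paper itself states that Corollary~\ref{a1fsym-cor-most} follows from Theorem~\ref{a1falpha-resamp-thm1} by setting $\mu(B) = (e/\alpha)^{1/d}-1$, and your verification of the bounds $\theta(B) \le p(1+\mu)^d = pe/\alpha \le 1/d$, $\mu \ge (1-\ln\alpha)/d$, and $\mu = O(1/d)$ correctly fills in the routine calculations the paper leaves implicit.
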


\subsection{Applications} 
As an example of the asymmetric form of Theorem~\ref{a1falpha-resamp-thm1}, consider $k$-SAT instances where each variable may appear in up to $L$ clauses in total (positively or negatively). Applying the Lopsided LLL, it is shown in \cite{gst} that $L \leq \frac{2^{k+1}}{e (k+1)}$ implies that the instance is satisfiable. We prove that this can be relaxed so that the instance is partially satisfiable.\footnote{One may verify that Theorem~\ref{a1falpha-resamp-thm1} holds for the variable-assignment LLLL, in which bad-events are dependent iff they disagree on a variable.}
\begin{theorem}
\label{a1fk-sat-most}
Suppose we have a $k$-SAT instance with $m$ clauses, in which each variable appears in up to $L \leq \frac{\alpha 2^{k+1}}{e k} - 2/k$
clauses (in total, either positively or negatively), for $\alpha \in [1,e]$. Then we can construct in expected time $m \log^{O(1)} m$ a truth assignment whose expected number of satisfied clauses is at least $m (1 - 2^{-k} e \ln(\alpha)/\alpha)$.
\end{theorem}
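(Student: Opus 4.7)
The plan is to apply the Truncated Moser--Tardos algorithm of Theorem~\ref{a1falpha-resamp-thm1} in the variable-assignment LLLL framework for $k$-SAT. I would cast the $m$ clauses as bad events $\mathcal{B}$, each with $P_{\Omega}(B) = 2^{-k}$, where two bad events are lopsided-dependent iff their clauses share a variable on which they have opposite literals. Since each variable occurs in at most $L$ clauses, the inclusive lopsided neighborhood satisfies $|N(B)| \leq k(L-1)+1$. The footnote already justifies using Theorem~\ref{a1falpha-resamp-thm1} in the variable-assignment LLLL setting, so we obtain an output distribution $\Omega'$ with $P_{\Omega'}(B) \leq \max(0, \theta(B) - \mu(B))$ for any prescribed weight function $\mu$; summing over $B$ and using linearity of expectation gives the expected count of unsatisfied clauses, so the task reduces to choosing $\mu$ such that $\theta(B) - \mu(B) \leq 2^{-k} e \ln(\alpha)/\alpha$.

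For this, I would use a uniform weighting $\mu(B) = \mu_0$ chosen in the spirit of Corollary~\ref{a1fsym-cor-most}: set $\mu_0 = (e/\alpha)^{1/N} - 1$, where $N$ denotes the effective (lopsided) neighborhood size. The crude bound $\theta(B) \leq P_{\Omega}(B)(1+\mu_0)^{N}$ from the text immediately after (\ref{a1feqn:theta}) then evaluates to $\theta(B) \leq 2^{-k} \cdot e/\alpha$. Writing $\ln(e/\alpha) = 1 - \ln\alpha$, the target inequality becomes $\mu_0 \geq 2^{-k}(e/\alpha)(1-\ln\alpha)$; using the elementary estimate $(e/\alpha)^{1/N} - 1 \geq (1-\ln\alpha)/N$, this further reduces to $N \leq \alpha 2^{k}/e$. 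The hypothesis $L \leq \tfrac{\alpha 2^{k+1}}{ek} - 2/k$ rearranges to $kL+2 \leq \alpha 2^{k+1}/e$, which exceeds the required bound by a factor of two. Bridging this factor-of-two gap --- by invoking Bissacot's cluster-expansion refinement of the product bound on the independent-set sum in (\ref{a1feqn:theta}), which for $k$-SAT exploits the fact that the neighbors of $B$ associated to a single literal of $B$ are themselves heavily interdependent --- is, in my view, the main technical obstacle of the proof.

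For the run-time, Theorem~\ref{a1falpha-resamp-thm1} bounds the expected number of resamplings by $\sum_B \mu(B) = m\mu_0 = O(m/(kL))$. I would implement the Truncated MT via the depth-first-search strategy of Section~\ref{a1fapp-sec:dfs}, maintaining for each variable a doubly linked list of currently-unsatisfied clauses containing it. After each resampling of a clause, only the $O(kL)$ neighboring clauses need to be re-examined, at $O(k)$ work each, for a per-resampling cost of $O(k^2 L)$. Multiplying by the expected resampling count yields total expected work $O(mk)$, which fits within the claimed $m\log^{O(1)} m$ bound in the relevant regime where $k = \operatorname{polylog}(m)$. The initialization cost (one pass through the input) is comparable, and the additional Bernoulli variables $Y(B)$ introduced in the proof of Theorem~\ref{a1falpha-resamp-thm1} add only constant overhead per clause.
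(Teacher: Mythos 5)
Your reduction of the theorem to the per-clause inequality $\theta(B)-\mu(B)\leq 2^{-k}e\ln(\alpha)/\alpha$ via Theorem~\ref{a1falpha-resamp-thm1} and linearity of expectation is the same as the paper's, and your run-time accounting is essentially fine (the paper additionally disposes of the case $m\leq 2^{k-1}$ by a plain random assignment, which is what justifies treating $k$ as $O(\log m)$). However, the factor-of-two gap you flag is a genuine gap, and the fix you propose does not work. Under the uniform assignment the bound $|N(B)|\leq k(L-1)+1$ is essentially tight: all $L-1$ other occurrences of a variable of $B$ may carry the opposite sign. Bissacot's cluster expansion cannot recover the missing factor of two here, because the lopsided neighbors of $B$ that conflict with $B$ on a fixed variable all \emph{agree} with one another on that variable; under the disagreement-based lopsided dependency they are therefore not pairwise adjacent (unless they happen to clash on some other shared variable, which nothing forces), so restricting the sum in (\ref{a1feqn:theta}) to independent sets excludes essentially nothing among them.

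The paper closes the gap by a different mechanism: the ``counter-intuitive'' biased assignment of \cite{gst}. If variable $i$ occurs $\delta_i l_i$ times positively and $(1-\delta_i)l_i$ times negatively, it is set true with probability $1/2 - x(\delta_i-1/2)$, with $x = Lz/2$ and the uniform weighting $\mu(B)=z$. With this bias one shows that the quantity $-z + P_{\Omega}(B)\exp\bigl(\sum_{B'\sim B} z\bigr)$ is maximized when every variable of $B$ is balanced (appearing $L/2$ times with each sign), in which case $P_{\Omega}(B)=2^{-k}$ and the lopsided neighborhood has size only $1+kL/2$ --- precisely the factor of two you are missing. Choosing $z=\frac{2\ln\bigl(2^{k+1}/(2+kL)\bigr)}{2+kL}$ then yields $-z+2^{-k}\exp\bigl(z(1+kL/2)\bigr)=2^{-k}e\ln(\alpha)/\alpha$ under the stated hypothesis on $L$. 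Without this biasing step (or some substitute argument controlling unbalanced variables), your approach only establishes the theorem with $L\lesssim \frac{\alpha 2^{k}}{ek}$, i.e.\ half the claimed bound, so the ``main technical obstacle'' you identify is indeed unresolved in your proposal and is resolved in the paper by changing the sampling distribution, not by sharpening the independent-set sum.
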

\begin{proof}
We assume that $m \geq 2^{k-1}$ as otherwise a randomly chosen solution will satisfy all the clauses with probability $1/2$, and the result follows trivially.

\iffalse
As the number of satisfied clauses is an integer in the range $\{1, \dots, m \}$, it suffices to show that the expected number of falsified clauses is $\leq 2^{-k} e \ln \alpha/\alpha - \text{poly}(1/m)$. In this case, by repeating our algorithm for $\text{poly}(m)$ independent trials, one has a constant probability that the actual number of falsified clauses is $\leq 2^{-k} e \ln \alpha/\alpha$.
\fi

Suppose a variable $x_i$ appears in $l_i$ clauses; of these occurrences, it appears $\delta_i l_i$ positively and $(1-\delta_i) l_i$ negatively. Then, following the counter-intuitive choice described in \cite{gst}, we set variable $i$ to be T with probability $1/2 - x (\delta_i - 1/2)$, where $x \in [0,1]$ is a well-chosen parameter.

We set $\mu(B) = z$ for all bad events $B$, where $z$ is a parameter to be chosen. In this case, it suffices to show that
\begin{equation}
\label{ss1}
\forall B \in \mathcal B,  -z + P_{\Omega}(B) \exp( \smashoperator{\sum_{B' \sim B}} z ) \leq 2^{-k} e \ln \alpha/\alpha
\end{equation}

It is not hard to show, following \cite{gst}, that for $x = L z/2$ the LHS here is maximized when variables corresponding to the bad event $B$ each occur in exactly $L/2$ clauses positively or negatively; and that in this case, we have $P_{\Omega}(B) = 2^{-k}$, and there are $1 + L k/2$ neighbors of $B$ in the dependency graph. (The factor of $L/2$ here comes from the Lopsided LLL; namely, clauses that intersect on a variable and agree on it, are not counted as dependent for the purposes of the Lopsided LLL.)

% In this case, (\ref{ss1}) is satisfied by setting $z = \frac{2 \ln \left(\frac{2^{k+1}}{2 + k L}\right)}{2 + k L}$.

Thus, we set $z = \frac{2 \ln \left(\frac{2^{k+1}}{2 + k L}\right)}{2 + k L}$ 
and then we have the bound
\begin{align*}
 -z + P_{\Omega}(B) \exp( \sum_{B' \sim B} z ) &\leq -z + 2^{-k} \exp(z (1 + L k/2)) \\
&= \frac{2 \ln(1 + k L/2) + 2 - k \ln 4}{2 + k L} \\
&= 2^{-k} e \ln (\alpha)/\alpha
\end{align*}

Now, the expected number of resamplings is at most $m z \leq m \log^{O(1)} m / L$. For each resampling, we must scan all the affected clauses to see if they have become falsified, which takes time $k^{O(1)} L \leq L \log^{O(1)} m$. Hence the total expected runtime is $m \log^{O(1)} m$.
\end{proof}

We can also apply this result for \emph{partial} Latin transversals. Although our theorems have been stated in the context of the standard Moser-Tardos algorithm, they only depend on the Witness Tree Lemma. As we have discussed earlier, such results apply in essentially the same way for the permutation-LLL setting described in \cite{lllperm}.
\begin{definition}
Given an $n \times n$ matrix $A$, a \emph{partial Latin transversal} is a selection of $k \leq n$ cells, at most one in each row and column, with the property that there are no two selected cells with the same color.
\end{definition}

Partial Latin transversals have been most studied in the case when $A$ is a Latin square. In \cite{stein:latin}, Stein analyzes the case of partial Latin transversals for arbitrary matrices. Using techniques from that paper, one can show the existence of partial Latin transversals, whose length is a function of $\Delta$, the maximum number of occurrences of any color. This generalizes \cite{erdos-spencer}, which showed that if $\Delta$ is sufficiently small, then a full Latin transversal exists.

\begin{theorem}
\label{stein-version-thm}
Suppose each color appears at most $\Delta = \beta n$ times in the matrix $A$ for $\beta \in [0,1]$. Then one can construct a partial Latin transversal of length at least $n \times \frac{1 - e^{-\beta}}{\beta}$. 
\end{theorem}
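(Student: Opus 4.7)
We apply the Truncated Moser-Tardos Algorithm (Theorem~\ref{a1falpha-resamp-thm1}), specialized to the permutation probability space and bad-event system of Section~\ref{a1fapp-sec:latin}, with the trivial parameter choice $\mu(B) = 0$ for every bad event $B$. Then $q(B) = 0$, no event is marked core, no resampling occurs, and the output distribution $\Omega'$ on permutations is exactly the uniform distribution on $S_n$. From a sampled $\pi \sim \Omega'$, we extract a partial Latin transversal $S \subseteq [n]$ by picking, for each color $c$ appearing at all in the diagonal sequence $A(1,\pi(1)), \dots, A(n,\pi(n))$, one representative index $i$ with $A(i,\pi(i)) = c$. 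Then $|S| = \sum_c [n_c \ge 1]$, where $n_c = |\{i : A(i,\pi(i)) = c\}|$. It suffices to prove $\bE[|S|] \ge n(1 - e^{-\beta})/\beta$, since a one-sided Markov bound on the integer-valued $|S|$ then yields a permutation meeting the bound with probability $\Omega(1/n)$, giving a Las Vegas construction by repeated sampling.

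The key analytic step is a per-color bound $P_{\Omega}[n_c \ge 1] \ge 1 - e^{-\beta_c}$, where $\beta_c = \Delta_c / n$. If color $c$ occupies cells $(x_1, y_1), \dots, (x_{\Delta_c}, y_{\Delta_c})$ and $E_i = \{\pi(x_i) = y_i\}$, then $\{n_c = 0\} = \bigcap_i \bar E_i$. In the uniform permutation probability space, the family $\{\pi(x) = y\}$ is negatively correlated in the sense of the lopsided LLL \cite{erdos-spencer}: cells sharing a row or column yield mutually exclusive $E_i$'s, while cells in distinct rows and columns satisfy the pairwise bound $P_{\Omega}[\bar E_i \cap \bar E_j] \le P_{\Omega}[\bar E_i] P_{\Omega}[\bar E_j]$ by direct calculation. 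Iterating the negative-correlation inequality yields $P_{\Omega}[\bigcap_i \bar E_i] \le (1 - 1/n)^{\Delta_c} \le e^{-\beta_c}$, so $\bE[|S|] \ge \sum_c (1 - e^{-\beta_c})$.

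To finish, note $\sum_c \beta_c = \sum_c \Delta_c / n = n$ (total cells) and each $\beta_c \le \beta$. Since $(1 - e^{-x})/x$ is strictly decreasing on $(0, \infty)$, the minimum of $\sum_c (1 - e^{-\beta_c})$ under these constraints is attained by concentrating mass at $\beta_c = \beta$: roughly $n/\beta$ colors at the cap and the rest at zero. A short check using $(1 - e^{-r})/r \ge (1 - e^{-\beta})/\beta$ for any residual $r < \beta$ handles integrality, giving $\sum_c (1 - e^{-\beta_c}) \ge n(1 - e^{-\beta})/\beta$ as desired. The main obstacle is the lopsided-LLL negative-correlation inequality for random permutations, but this is established content of \cite{erdos-spencer} and can be invoked as a black box; the remaining steps are routine linearity of expectation and a concavity estimate.
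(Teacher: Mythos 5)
Your overall skeleton matches the paper's: sample a uniform random permutation (your truncated-MT run with $\mu \equiv 0$ is literally this, so the MT machinery is an inessential wrapper), keep one representative cell for each color that $\pi$ meets, lower-bound the expected number of distinct colors met, and finish with the monotonicity of $(1-e^{-x})/x$ (equivalent to the paper's concavity step, and done correctly). The problem is the key per-color estimate $P_{\Omega}\bigl[\bigcap_i \bar E_i\bigr] \leq (1-1/n)^{\Delta_c}$, whose justification does not hold up.

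Concretely: your claimed pairwise bound for cells in distinct rows and columns is reversed. For two such cells, $P[E_i \cap E_j] = \tfrac{1}{n(n-1)} > \tfrac{1}{n^2}$, so the events $\pi(x_i)=y_i$ and $\pi(x_j)=y_j$ are \emph{positively} correlated, and $P[\bar E_i \cap \bar E_j] = 1 - \tfrac{2}{n} + \tfrac{1}{n(n-1)} > \bigl(1-\tfrac{1}{n}\bigr)^2$. Moreover, even where pairwise negative correlation does hold (cells sharing a row or column), pairwise inequalities cannot simply be ``iterated'' to a product bound over $\Delta_c$ events; one needs a joint negative-dependence property or a sequential conditional argument. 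The Erd\H{o}s--Spencer lopsidedness \cite{erdos-spencer} does not supply this as a black box either: it concerns same-color \emph{pair} events conditioned on avoiding non-conflicting pair events, not a product bound for the single-cell events used here. Indeed the product bound you assert is false in exactly the main case: for a color occupying $n$ pairwise non-attacking cells (say the diagonal), $P[\bigcap_i \bar E_i]$ is the derangement probability $D_n/n!$, which exceeds $(1-1/n)^n$ for every $n \geq 2$. The paper's proof does not attempt such a correlation computation from scratch; it invokes Stein's extremal lemma \cite{stein:latin} that the probability of meeting a color is \emph{minimized} when its $d$ occurrences lie in distinct rows and columns, and only then estimates that extremal configuration. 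Your write-up needs either that lemma (or some genuine joint negative-dependence input adapted to the non-attacking case) in place of the pairwise-iteration step; as written, the central inequality is unproved and, in the form stated, not true.
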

\begin{proof}
Suppose that we select a random permutation $\pi$; whenever a color appears more than once in $\pi$, we will remove all but one of those cells from $\pi$ to turn it into a partial Latin transversal.

Suppose that a color appears $d \leq n$ times in the matrix. As shown in \cite{stein:latin}, the probability that $\pi$ meets the color at least once is minimized when all $d$ occurrences of the color are in distinct rows and columns; in this case the probability is (by negative correlation) at least $1-(1 - 1/n)^d$.

Thus, summing over all colors $i$, the total expected number of colors appearing in $\pi$ is at least  $\sum_i 1 - (1 - 1/n)^{d_i}$. By concavity, and using the facts that $d_i \leq \Delta, \sum d_i = n^2$, this is at least $\frac{n^2}{\beta n} (1 - e^{-\beta})$.

Thus, the resulting partial Latin transversal has an expected length of at least $n (\frac{1 - e^{-\beta}}{\beta})$ as we claimed.
\end{proof}

We can improve on Theorem~\ref{stein-version-thm} for $\beta \leq 0.19$ by using the MT-distribution. (Note that for $\beta \leq 0.105$, the LLL constructs a full Latin transversal.)
\begin{theorem}
\label{a1partial-lat-thm}
Suppose each color appears at most $\Delta = \beta n$ times in the matrix $A$, for $\beta \in [0,1/4]$. Then the truncated MT algorithm runs in expected time $O(n)$ and produces a partial Latin transversal whose expected length is at least
$
n \cdot \min\Bigl( 1, \frac{1}{2} + \sqrt[3]{\frac{27}{2048 \beta }} \Bigr).
$
\end{theorem}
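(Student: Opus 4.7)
The plan is to invoke the Truncated Moser--Tardos algorithm of Theorem~\ref{a1falpha-resamp-thm1} on the permutation-LLL formulation of Latin transversals from \cite{lllperm}, using a uniform weight $\mu(B)=x$ on every bad event $B$, and then to delete one cell from every color-collision that survives in the output permutation. As in Theorem~\ref{a1fapp-thm-latin}, the bad events are the unordered pairs $\{(x_1,y_1),(x_2,y_2)\}$ with $A(x_1,y_1)=A(x_2,y_2)$: each has $P_\Omega(B)=p=1/(n(n-1))$, two events are dependent iff they share a row or column, and the total number of bad events is $m\leq \beta n^3/2$.

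The crucial technical step is a sharp bound on $\theta(B)$. The neighborhood $N(B)$ is the union of four ``row/column cliques'' $C_1,\dots,C_4$ consisting of the bad events touching rows $x_1,x_2$ or columns $y_1,y_2$ respectively. Any two events in the same $C_i$ share that row/column and are thus dependent, so an independent subset of $N(B)$ selects at most one element from each $C_i$; using $|C_i|\leq n\Delta$ and overcounting events that lie in several $C_i$ (which is legitimate when upper-bounding the independence polynomial) yields
\[ \theta(B) \;\leq\; p\prod_{i=1}^4 \bigl(1+|C_i|x\bigr) \;\leq\; \frac{(1+n\Delta x)^4}{n(n-1)}. \]
This is exactly the structure behind the Erd\H{o}s--Spencer threshold $\beta\leq 27/256$ in the existential LLL proof; it is what will force the same constant to reappear below.

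By Theorem~\ref{a1falpha-resamp-thm1}, in the output each bad event occurs with probability at most $\max(0,\theta(B)-x)$. A color that appears $k\geq 2$ times in the final permutation contributes $\binom{k}{2}\geq k-1$ bad events while requiring only $k-1$ cell deletions, so the expected length of the partial transversal is at least $n - m\,\max(0,\theta(B)-x)$. Substituting $y=n\Delta x$ and letting $n\to\infty$ collapses the loss term to $(n/2)\bigl(\beta(1+y)^4-y\bigr)$. Minimizing $f(y)=\beta(1+y)^4-y$ over $y\geq 0$ gives the critical point $y^\star=(4\beta)^{-1/3}-1$ (non-negative precisely when $\beta\leq 1/4$) and minimum value $f(y^\star)=1-\sqrt[3]{27/(256\beta)}$. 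Plugging this back and using $\tfrac{1}{2}\sqrt[3]{27/(256\beta)}=\sqrt[3]{27/(2048\beta)}$ yields the claimed bound. For $\beta\leq 27/256$ the minimum $f(y^\star)$ is non-positive, whence $\theta(B)\leq x$ for every $B$, the truncated MT output has no bad events at all, and we obtain a full transversal -- this is the $\min$ with $1$ in the statement.

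For the running time, $\sum_B\mu(B) = mx = yn/2 = O(n)$, so the expected number of resamplings is $O(n)$; the hashing-based data structure from Theorem~\ref{a1fapp-thm-latin} continues to process each resampling in expected $O(1)$ time, giving a total expected runtime of $O(n)$. The only real obstacle in the argument is recognizing that the $(1+n\Delta x)^4$ cluster-expansion bound on $\theta(B)$ -- the same bound that underlies the classical $27/256$ existence result -- is permissible here; once that is in hand the remainder is a mechanical single-variable optimization that conveniently reproduces the Erd\H{o}s--Spencer constant at the boundary $\beta=27/256$.
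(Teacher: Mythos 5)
Your proposal is correct and follows essentially the same route as the paper: the same pairwise bad events on the permutation space, the same $(1+|C_i|\,\mu)^4$ bound on the independence polynomial over the four row/column cliques, the same application of Theorem~\ref{a1falpha-resamp-thm1} with a uniform $\mu$, the same "one deletion per surviving collision" accounting, and the same single-variable optimization with critical point $(4\beta)^{-1/3}-1$. The only discrepancy is your appeal to $n\to\infty$: with the looser counts $|C_i|\leq n\Delta$ and $m\leq \beta n^3/2$ the stated constant emerges only in the limit, whereas the paper uses the exact counts $n(\Delta-1)$ and $n^2(\Delta-1)/2$ (against $P_\Omega(B)=\tfrac{1}{n(n-1)}$) so that the bound holds for every finite $n$ -- a purely bookkeeping fix.
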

\begin{proof}
For every pair of cells $(i,j), (i', j')$ such that $A(i,j) = A(i', j')$, we have a bad- event $\pi(i) = j \wedge \pi(i') = j'$. We apply Theorem~\ref{a1falpha-resamp-thm1}, setting $\mu(B) = \alpha = \frac{1}{n(\Delta-1)} \Bigl( \sqrt[3]{ \frac{n-1}{4 ( \Delta-1) }}
   -1  \Bigr) $ for each such bad-event.  In each independent set of neighbors of a bad-events, for each of the four coordinates $i, j, i', j'$, one may select zero or one bad-events which overlap on that coordinates.

Thus, thew space $\Omega'$ has the property that for each $B$ we have:
{\allowdisplaybreaks
\begin{align*}
P_{\Omega'} (B) &\leq \max(0, \theta(B) -\mu(B)) \\
&\leq \max \bigl( 0, -\alpha + \frac{1}{n(n-1)} (1 + n (\Delta - 1) \alpha)^4 \bigr) \\
& \leq \max\Bigl( 0, \frac{ 1-\frac{3 \sqrt[3]{2}
   \sqrt[3]{n-1}}{8 (\Delta-1)^{1/3}}}{n (\Delta-1)} \Bigr)
\end{align*}
}

Now consider the following experiment: we draw the permutation $\pi$ from the space $\Omega'$. For each bad-event that occurs, we de-activate one of the two cells (chosen arbitrarily). Let $Q$ denote the number of active cells at the end of this process; then $\bE_{\Omega'}[Q] \geq n - \sum_{B} P_{\Omega'}(B)$. 

The total number of bad-events can be computed as follows. First, there are $n^2$ choices for $i,j$. Next, there are $\Delta-1$ choices for $i', j'$. This double-counts the number of bad-events, so in all there are at most $n^2 (\Delta-1)/2$ bad-events.

Thus
\begin{align*}
\bE_{\Omega'}[Q] &\geq n - \frac{n^2 (\Delta-1)}{2} \times \max \Bigl( 0, \frac{ 1-\frac{3 \sqrt[3]{2}
   \sqrt[3]{n-1}}{8 (\Delta-1)^{1/3}}}{n (\Delta-1)} \Bigr) 
\geq n \min\Bigl( 1, \frac{1}{2} + \sqrt[3]{\frac{27}{2048 \beta }} \Bigr)
\end{align*}

%The analysis of the run-time follows the same lines as Theorem~\ref{a1fapp-thm-latin}.
\end{proof}

\subsection{A faster parallel (RNC) algorithm}
Suppose we wish to use the parallel MT algorithm to draw from the sample space $\Omega'$ such that:
$$
\forall B \in \mathcal B, P_{\Omega'} (B) \leq \max(0, \theta(B)-\mu(B))
$$

In the symmetric setting (with $e p d = \alpha$), and using the choice of $\mu$ from Corollary~\ref{a1fsym-cor-most}, one can easily verify that the parallel MT algorithm, as described in \cite{moser-tardos}, will terminate after $O( \frac{\log m}{(\alpha-1)^2}) $ rounds whp. (The approach of \cite{hss}, based on two applications of LLL, will give the same result.) The running time of the parallel MT algorithm is dominated by selecting a maximal independent set (MIS) of true bad events (in this case, with the additional property that $Y(B) = 1$). As finding an MIS requires requires $O(\log^2 m)$ parallel time (using Luby's MIS algorithm\cite{lubymis}), the total runtime of parallel MT would be  $O(\frac{\log^3 m}{(\alpha-1)^2})$.

We can improve this running time by only running the parallel MT algorithm for a \emph{constant} number of rounds, using a slightly higher resampling probability than indicated in Theorem~\ref{a1falpha-resamp-thm1}. Unfortunately, we are not able to show a simple condition analogous to the asymmetric LLL for this algorithm to work. Unlike the Moser-Tardos algorithm, which ``converges'' to a good solution, we give an algorithm which ``over-converges'' to the desired solution. It reaches a good distribution faster than Moser-Tardos, but then it moves away from the good distribution. This algorithm seems to require a ``uniformity'' among the bad events, which is by definition true for the Symmetric LLL but seems harder to formalize in general.

We may now define a parallel algorithm corresponding to the Truncated Moser-Tardos Algorithm. It differs from the usual parallel Moser-Tardos algorithm in two key ways. First, we maintain for each bad event $B$ a resampling variable $Y(B)$ which is Bernoulli-$q(B)$, where $q \in [0,1]$ is a parameter to be chosen, and we only resample bad events (including $Y(B)$ itself) when $Y(B) = 1$. Second, instead of running the algorithm until there are no more true bad events, we run it for some fixed number $t$ of iterations. We note that the choice of $q(B)$ is not an ``equilibrium'' value, as in Theorem~\ref{a1falpha-resamp-thm1}; this makes the parallel algorithm more challenging to analyze.

\begin{lemma}
\label{a1ffaster-parallel-cond}
Suppose we are given a family of functions $\sigma_i: \mathcal B \rightarrow [0, \infty)$ for $i = 1, \dots, t+1$ as well as probabilities $q: \mathcal B \rightarrow [0,1]$, satisfying the recurrence for $i = 1, \dots, t$:
\begin{align*}
\sigma_1(B) &\geq q(B) P_{\Omega}(B) \\
\sigma_{i+1}(B) &\geq \sigma_i(B) + q(B) P_{\Omega}(B) \sum_{\substack{\mathcal I \subseteq N(B)\\\text{$\mathcal I$ independent}}} \Bigl[ \prod_{B' \in \mathcal I} \sigma_i(B') - \prod_{B' \in \mathcal I} \sigma_{i-1}(B') \Bigr]
\end{align*}

Then, if the Parallel Truncated Moser-Tardos Algorithm is terminated after $t$ iterations, then each $B$ is true at that point with probability
$$
P(\text{$B$ true after $t$ iterations}) \leq \frac{\sigma_{t+1}(B)}{q(B)} -\sigma_t(B)
$$
\end{lemma}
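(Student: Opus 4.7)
The plan is to prove the bound by adapting the witness-tree machinery of Proposition~\ref{hss-result} to the Parallel Truncated Moser-Tardos setting, followed by an induction on $t$.

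First I would construct, for each execution in which $B$ is true at the end of round $t$, a witness tree $\hat\tau$ rooted at $B$: place $B$ at the root and process rounds $t, t-1, \ldots, 1$ backward, attaching each resampled bad event (one where $Y = 1$ and the event was true just prior to its round) to the deepest existing node sharing a variable. An adaptation of the arguments of \cite{moser-tardos,hss} then gives the following truncated witness tree lemma: for any fixed tree-structure $\tau$ rooted at $B$ of round-depth at most $t$,
$$
P(\hat\tau = \tau) \;\leq\; P_\Omega(B) \prod_{v \neq \text{root}} q(v)\, P_\Omega(v).
$$
The factor $q(v)$ at each non-root node reflects that the corresponding resampling required $Y(v) = 1$, while the root weight $P_\Omega(B)$ records ``$B$ is true at the end'' rather than a resampling.

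Second, I would interpret $\sigma_i(B)$ (under the conventions $\sigma_0 \equiv 0$ and empty product $= 1$) as an upper bound on the total weight $\prod_v q(v) P_\Omega(v)$ summed over witness trees of round-depth at most $i-1$ rooted at $B$. Under this reading the recurrence is transparent: decomposing a tree by its independent set $\mathcal I \subseteq N(B)$ of depth-one children at the root yields $\sigma_i(B) = q(B) P_\Omega(B) \sum_{\mathcal I} \prod_{B'} \sigma_{i-1}(B')$, and the inclusion-exclusion $\prod \sigma_i - \prod \sigma_{i-1}$ in the recurrence precisely isolates the trees of round-depth exactly $i$.

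Third, I would prove the bound by induction on $t$. The base case $t = 0$ gives $\pi_0(B) = P_\Omega(B) = \sigma_1(B)/q(B) - \sigma_0(B)$. For the inductive step, I decompose $\pi_t(B)$ according to the set $\mathcal I$ of depth-one children of the root in $\hat\tau$. The case $\mathcal I = \emptyset$ corresponds to ``no resampling ever affected $B$'s variables''; by maximality of the MIS this forces $Y(B) = 0$ throughout (else $B$ would be a candidate in some round, and either $B$ or a neighbor would enter the MIS and change $B$'s variables), so this contributes at most $(1 - q(B))\, P_\Omega(B)$ because $Y(B)$ is independent of $X$. For each non-empty independent $\mathcal I \subseteq N(B)$, the children correspond to events appearing in some MIS, and the subtrees rooted at $B' \in \mathcal I$ have round-depth at most $t-1$; I apply the inductive hypothesis to each $B'$, tightening the naive per-subtree bound $\sigma_t(B')$ by exploiting the disjointness of subtrees at pairwise-independent $B'$'s. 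Summing across $\mathcal I$ and combining with the $\mathcal I = \emptyset$ contribution collapses, through the $\sigma$-recurrence, to $\sigma_{t+1}(B)/q(B) - \sigma_t(B)$.

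The main obstacle I expect is this final combinatorial step: converting the per-subtree inductive bounds into a product-type bound on the joint event that each $B' \in \mathcal I$ appears in the appropriate MIS, and then matching the resulting sum over $\mathcal I$ exactly to the $\sigma$-recurrence. In particular, one must track carefully how the $q(B)$-factor at the root, the $(1 - q(B))$-factor from the trivial-tree case, and the inclusion-exclusion built into the recurrence combine to yield precisely $-\sigma_t(B)$ rather than the looser $-P_\Omega(B)$ that a naive witness-tree union bound would give. This bookkeeping is where the delicate interplay between the root's $Y$-sampling and the backward round structure must be made precise.
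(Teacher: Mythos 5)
Your first two steps match the paper's proof: the paper assigns each tree-structure the weight $w(\tau)=\prod_j q(B_j)P_{\Omega}(B_j)$, proves a truncated Witness Tree Lemma for the parallel algorithm, and interprets $\sigma_i(B)-\sigma_{i-1}(B)$ as an upper bound on the total weight $T_i(B)$ of trees of height exactly $i$ rooted at $B$ (so that $T_{\leq i}(B)\leq\sigma_i(B)$ by telescoping). The final accounting at the root is also as you describe: if $Y(B)=1$ after round $t$ then maximality of the independent set forces the tree to have height exactly $t+1$, while $Y(B)=0$ contributes the factor $1-q(B)$ against trees of height $\leq t$; dividing out the root's $q(B)$ gives $\bigl(T_{t+1}(B)+T_{\leq t}(B)(1-q(B))\bigr)/q(B)\leq \sigma_{t+1}(B)/q(B)-\sigma_t(B)$.

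The gap is in your third step. You propose to run the induction on the \emph{probability} $\pi_t(B)$, decomposing by the set $\mathcal I$ of depth-one children and then ``applying the inductive hypothesis to each $B'\in\mathcal I$'' to get a product-type bound. This does not go through as stated: the inductive hypothesis controls the marginal probability of an event attached to a single $B'$, whereas your decomposition requires bounding the \emph{joint} probability that every $B'\in\mathcal I$ simultaneously contributes a subtree of the appropriate height, and these events are correlated (indeed, the relevant events are not even ``$B'$ true after $t-1$ rounds'' but ``a particular witness subtree rooted at $B'$ appears in this execution,'' which the hypothesis does not address). You correctly flag this as the main obstacle, but the resolution is not more careful bookkeeping within your framework --- it is to abandon the probabilistic induction entirely. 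The paper invokes probability exactly once, via a union bound over tree-structures together with the per-tree witness lemma, and then runs the induction as a purely \emph{deterministic} count of total weights: a tree of height $i$ is decomposed by the set $\mathcal A_1\neq\emptyset$ of root-children whose subtrees have height exactly $i-1$ and the set $\mathcal A_2$ of those with height $\leq i-2$; summing $q(B)P_{\Omega}(B)\prod_{\mathcal A_1}T_{i-1}\prod_{\mathcal A_2}T_{\leq i-2}$ over all such pairs, and then removing and re-subtracting the $\mathcal A_1=\emptyset$ terms, produces exactly the difference $\sum_{\mathcal I}\bigl[\prod\sigma_{i-1}-\prod\sigma_{i-2}\bigr]$ appearing in the hypothesis. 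Since weights are sums over tree-structures rather than probabilities of events, the product over $\mathcal A_1\cup\mathcal A_2$ factorizes with no correlation issue. Reorganize your step 3 along these lines and the proof closes.
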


\begin{proof}
We define $\sigma_0(B) = 0$ for each $B \in \mathcal B$. For each witness tree $\tau$ whose nodes are labeled $B_1, \dots, B_s$, define the weight
$
w(\tau) = \prod_{i=1}^s q(B_i) P_{\Omega}(B_i)
$

Let $T_{i} (B)$ denote the total weight of all witness trees of height $i$ rooted in $B$, and let $T_{\leq i}(B) = \sum_{j \leq i} T_j(B)$. We claim that  $T_{i} (B) \leq \sigma_{i}(B) - \sigma_{i-1}(B)$ for $i = 1, \dots, t$. We shall show this by induction on  $i$. Note that this automatically implies that $T_{\leq i}(B) \leq \sigma_i(B)$ (the sum telescopes).

Suppose $B$ is a tree of height $i$. Let $\mathcal A_1, \mathcal A_2$ denote the sets of neighbors of $B$ whose subtrees have height $i-1$ and $\leq i-2$ respectively. We must have $\mathcal A_1 \neq \emptyset$ in order for $B$ to have height $i$. For a fixed choice of $\mathcal A_1, \mathcal A_2$, the total weight of all such trees is $q(B) P_{\Omega}(B) \prod_{B_1 \in \mathcal A_1}  T_{i-1}(B_1) \prod_{B_2 \in \mathcal A_2} T_{\leq i-2}(B_2)$. Thus, summing over $\mathcal A_1, \mathcal A_2$ we have:
{\allowdisplaybreaks
\begin{align*}
  T_{i} (B) &\leq q(B) P_{\Omega}(B) \smashoperator[l]{\sum_{\substack{\mathcal A_1, \mathcal A_2 \subseteq N(B) \\ A_1 \neq \emptyset, \mathcal A_1 \cap \mathcal A_2 = \emptyset \\ \text{ $\mathcal A_1 \cup \mathcal A_2$ independent}}}} \negthickspace \negthickspace \negthickspace \negthickspace \prod_{B_1 \in \mathcal A_1} T_{i-1} (B_1) \prod_{B_2 \in \mathcal A_2} T_{\leq i-2} (B_2) \\
  &\leq q(B) P_{\Omega}(B) \smashoperator[l]{\sum_{\substack{\mathcal A_1, \mathcal A_2 \subseteq N(B) \\ A_1 \neq \emptyset, \mathcal A_1 \cap \mathcal A_2 = \emptyset \\ \text{ $\mathcal A_1 \cup \mathcal A_2$ independent}}}} \prod_{B_1 \in \mathcal A_1} ( \sigma_{i-1}(B_1) - \sigma_{i-2}(B_1)) \prod_{B_2 \in \mathcal A_2} \sigma_{i-2}(B_2)
\end{align*}
}

In order to evaluate this sum, we first remove the restriction that $\mathcal A_1 \neq \emptyset$, and then we subtract off the terms with $\mathcal A_1 = \emptyset$. In the former case, we would have
{\allowdisplaybreaks
\begin{align*}
&\sum_{\substack{\mathcal A_1, \mathcal A_2 \subseteq N(B) \\  A_1 \cap \mathcal A_2 = \emptyset \\ \text{ $\mathcal A_1 \cup \mathcal A_2$ independent}}} \prod_{B_1 \in \mathcal A_1} ( \sigma_{i-1}(B_1) - \sigma_{i-2}(B_1)) \prod_{B_2 \in \mathcal A_2} \sigma_{i-2}(B_2) \\
&= \sum_{\substack{I \subseteq N(B) \\ \text{$I$ independent}}} \sum_{\substack{\mathcal A_1 \subseteq I \\ \mathcal A_2 = I - \mathcal A_1}} \prod_{B_1 \in \mathcal A_1} ( \sigma_{i-1}(B_1) - \sigma_{i-2}(B_1)) \prod_{B_2 \in \mathcal A_2} \sigma_{i-2}(B_2) \\
&= \sum_{\substack{I \subseteq N(B) \\ \text{$I$ independent}}} \prod_{B' \in I} \Bigl( ( \sigma_{i-1}(B') - \sigma_{i-2}(B') ) + ( \sigma_{i-2}(B') \Bigr) \\
&= \sum_{\substack{I \subseteq N(B) \\ \text{$I$ independent}}} \prod_{B' \in I} \sigma_{i-1}(B')
\end{align*}
}

On the other hand, the contribution from $\mathcal A_1 = \emptyset$ is given by 
\begin{align*}
&\sum_{\substack{\mathcal A_1, \mathcal A_2 \subseteq N(B) \\ A_1 = \emptyset, \mathcal A_1 \cap \mathcal A_2 = \emptyset \\ \text{ $\mathcal A_1 \cup \mathcal A_2$ independent}}} \prod_{B_1 \in \mathcal A_1} ( \sigma_{i-1}(B_1) - \sigma_{i-2}(B_1)) \prod_{B_2 \in \mathcal A_2} \sigma_{i-2}(B_2) \\
&= \sum_{\substack{I \subseteq N(B) \\ \text{$I$ independent}}}  \prod_{B' \in \mathcal I} \sigma_{i-2}(B')
\end{align*}

Putting these together, we have that
\begin{align*}
  T_{i} (B) &\leq q(B) P_{\Omega}(B) \smashoperator[l]{\sum_{\substack{\mathcal I\subseteq N(B) \\ \text{$\mathcal I$ independent}}}} \negthickspace \negthickspace ( \prod_{B' \in \mathcal I} \sigma_{i-1}(B') ) - ( \prod_{B' \in \mathcal I} \sigma_{i-2}(B') ) \\
  &\leq \sigma_{i}(B) - \sigma_{i-1}(B) \qquad \text{(by hypothesis)}
\end{align*}

Now consider the event that bad event $B$ is true after $t$ rounds of the parallel algorithm. We may construct a witness tree for this event; it has height $\leq t+1$. If $Y(B) = 1$ after $t$ rounds, then it must be the case that this tree has height \emph{exactly} $t+1$; for, either $B$ or a neighbor would have been resampled at round $t$. Hence the probability that $B$ remains true after $t$ rounds can be described by either a witness tree of height $t+1$, rooted in $B$; or a witness tree of height $\leq t$, rooted in $(Y(B) = 0) \wedge B$. Furthermore, for every event in the witness tree, other than the root node $B$, we require that $Y(B') = 1$ at the appropriate time. Thus, in total, we have
\begin{align*}
P(\text{$B$ true after $t$ rounds}) &\leq \frac{T_{t+1}(B) + T_{\leq t}(B)  (1- q(B))}{q(B)} \\
&\leq \frac{\sigma_{t+1}(B) - \sigma_t(B) + \sigma_t(B) (1 - q(B))}{q(B)} \\
&=\frac{\sigma_{t+1}(B)}{q(B)} -\sigma_t(B)
\end{align*}
as desired.
\end{proof}

And this specializes to the symmetric setting:
\begin{theorem}
\label{a1fthm:faster-alpha}
Suppose $e p d \leq \alpha$ for $\alpha \in (1, e]$. Then let $\Omega'$ be the distribution induced on the variables after running the Parallel Truncated Moser-Tardos Algorithm for $t$ steps, where $t$ is chosen appropriately as a function of $p, d, \alpha$ and $t = O((\alpha - 1)^{-1})$. In the space $\Omega'$, bad events have probability
$
P_{\Omega'}(B) \leq  \frac{\ln \alpha}{d}. 
$

This can be implemented as a parallel (RNC) algorithm running in $\tilde O( \frac{\log^2 m}{\alpha - 1})$ time. This can also be implemented as a distributed algorithm running in $O( \frac{\log m}{\alpha - 1} )$ rounds (if $p,d,\alpha$ are globally known parameters)

\end{theorem}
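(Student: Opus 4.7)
The plan is to apply Lemma~\ref{a1ffaster-parallel-cond} with a symmetric, uniform choice of parameters. First I would specialize the lemma by setting $q(B)=q$ and $\sigma_i(B)=\sigma_i$ to the same values for every bad event $B$, and bound the independent-set polynomial appearing there by $\sum_{\mathcal I \subseteq N(B),\ \mathcal I \text{ indep}} \sigma^{|\mathcal I|} \leq (1+\sigma)^d$, which is valid since $|N(B)|\leq d$ and dropping independence only enlarges the sum. Using monotonicity of $\sigma\mapsto(1+\sigma)^d$, the recurrence reduces to $\sigma_0=0$, $\sigma_1\geq qp$, and $\sigma_{i+1} \geq \sigma_i + qp\bigl[(1+\sigma_i)^d - (1+\sigma_{i-1})^d\bigr]$, and the lemma's bound on $P(B\text{ true after }t\text{ rounds})$ becomes $\sigma_{t+1}/q - \sigma_t$.

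Second, I would take the tightest $\sigma_i$'s allowed. A key observation is that taking equality throughout makes the quantity $\sigma_i - qp(1+\sigma_{i-1})^d$ constant in $i$, so the choice $\sigma_1 = qp = g(0)$ collapses the two-step recurrence to the one-step iteration $\sigma_{i+1}=g(\sigma_i)$, where $g(\sigma)=qp(1+\sigma)^d$. The lemma's bound becomes the clean formula $h(\sigma_t) := p(1+\sigma_t)^d - \sigma_t$.

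Third, I would choose $q = 1 - \ln\alpha$. Since $\alpha\in(1,e]$, a direct computation shows $(1-\ln\alpha)\alpha\leq 1$, so $qpd\leq 1/e$ and the equation $\sigma=g(\sigma)$ admits an attractive fixed point $\sigma_*\approx(1-\ln\alpha)/d$ to which the iteration from $\sigma_0=0$ converges monotonically. Substituting the fixed-point identity $(1+\sigma_*)^d = \sigma_*/(qp)$ into $h$ yields the exact relation $h(\sigma_*) = \sigma_*(1-q)/q = \ln\alpha/d$, identifying the target. The convergence rate is controlled by $g'(\sigma_*) = d\sigma_*/(1+\sigma_*) = 1-\ln\alpha - O(1/d)$; since $\ln\alpha = \Theta(\alpha-1)$ on $(1,e]$, the iteration contracts geometrically with rate $1-\Theta(\alpha-1)$, so $t = O(1/(\alpha-1))$ rounds suffice to drive $\sigma_t$ close enough to $\sigma_*$ that $h(\sigma_t) \leq \ln\alpha/d$.

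The main obstacle is this last convergence step: $h(\sigma_*)=\ln\alpha/d$ is essentially the infimum of $h$ over the range reachable by the iteration, so a little slack is needed to actually hit it in finitely many rounds. I would handle this by perturbing $q$ slightly below $1-\ln\alpha$ to create an $O(1/d)$ margin, absorbed in the final asymptotic bound. With the per-event probability bound in hand, the running-time claims follow from the per-round cost of the Parallel Truncated Moser-Tardos Algorithm: each round finds all currently-true ``active'' bad events (those with the Bernoulli flag $Y(B)=1$) and resamples a maximal independent set of them, which in the parallel setting costs $\tilde O(\log^2 m)$ per round via Luby's MIS algorithm, and $O(\log m)$ per round in the distributed setting when $p, d, \alpha$ are known globally; multiplying by the $O(1/(\alpha-1))$ round count gives the stated time bounds.
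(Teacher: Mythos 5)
Your reduction of Lemma~\ref{a1ffaster-parallel-cond} to the one-step iteration $\sigma_{i+1}=g(\sigma_i)$ with $g(\sigma)=qp(1+\sigma)^d$, and the resulting bound $h(\sigma_t)=p(1+\sigma_t)^d-\sigma_t$, matches the paper's setup, and your fixed-point identity $h(\sigma_*)=\sigma_*(1-q)/q$ is correct. The gap is in the last step: the strategy of fixing $q\approx 1-\ln\alpha$ and running the iteration until it is close enough to its fixed point does not give $t=O((\alpha-1)^{-1})$. The target $\ln\alpha/d$ sits only $\Theta\bigl((1-\ln\alpha)^2/d^2\bigr)$ above the global minimum of $h$ (a short computation gives $\min_\sigma h(\sigma)=\frac{\ln\alpha}{d}-\frac{(1-\ln\alpha)^2}{2(d-1)^2}+O(d^{-3})$), so the claimed ``$O(1/d)$ margin'' obtainable by perturbing $q$ does not exist. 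Since $h''\approx d$ near its minimizer, the acceptance window $\{\sigma: h(\sigma)\le\ln\alpha/d\}$ has width only $\Theta\bigl((1-\ln\alpha)d^{-3/2}\bigr)$, whereas the iteration starts at distance $\sigma_*=\Theta\bigl((1-\ln\alpha)/d\bigr)$ from the fixed point and contracts by the factor $g'(\sigma_*)\approx 1-\ln\alpha$ per round. Closing that multiplicative gap of $\Theta(\sqrt{d})$ therefore forces $t=\Theta\bigl(\frac{\log d}{\alpha-1}\bigr)$ rounds, an extra $\log d$ (i.e., up to $\log m$) factor that destroys the stated improvement over the naive parallel bound.

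The paper's proof avoids fixed-point convergence altogether; this is the ``over-convergence'' alluded to just before Lemma~\ref{a1ffaster-parallel-cond}. It replaces $p$ by the slightly larger $r=\frac{1}{d}\bigl(\frac{d-1}{d-\ln\alpha}\bigr)^{d-1}\ge p$ in the recurrence, which makes the final bound evaluate \emph{exactly} to $r(1+z)^d-z=\frac{\ln\alpha}{d}$ at $z=\frac{1-\ln\alpha}{d-1}$ and guarantees that the $q=1$ iteration grows geometrically with ratio $\lambda=\bigl(\frac{d}{d-\ln\alpha}\bigr)^{d-1}\ge 1+\Omega(\alpha-1)$ from a starting value $r$ with $z/r=O(1)$. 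Hence the $q=1$ iteration overshoots $z$ within $t=O\bigl(\frac{1}{\alpha-1}\bigr)$ steps, and one then tunes $q\in[0,1]$ by the intermediate value theorem so that the $t$-step iterate lands on $z$ \emph{exactly at time} $t$ --- no contraction to a fixed point is needed. To repair your argument you would need to adopt this ``stop at the right moment'' device (and also handle $d=1$ separately, and explain how $q$ is computed to sufficient precision for the RNC implementation, as the paper does via bisection). Your description of the per-round cost (Luby's MIS in parallel, $O(\log m)$ distributed) is fine once the round count is established.
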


\begin{proof}
We note that if $d = 1$, then all the events are completely independent. We can run $t$ rounds of resampling, and each bad-event remains true with probability at most $p^t$. Thus, we need
$t = 1 + \frac{1 + \ln \ln \alpha}{\ln p} \leq O( \frac{\log \log \alpha}{\log (\alpha/e) } ) \leq O( \frac{1}{\alpha - 1})$ rounds of resampling in order to ensure that $p^t \leq \frac{\ln \alpha}{d}$. Henceforth we assume $d \geq 2$.

We next discuss how to select the parameters $t, q$. Let us define
$$
r = \frac{\bigl( \frac{d-1}{d - \ln \alpha} \bigr)^{d-1}}{d}
$$

We claim that $r \geq \frac{\alpha}{e d}$; for $\frac{r e d}{\alpha} = \frac{ e \bigl( \frac{d-1}{d - \ln \alpha} \bigr)^{d-1} }{\alpha}$; this is a decreasing function of $\alpha$, and hence it can be lower-bounded by its value at $\alpha = e$. Thus we have
\begin{align*}
\frac{r e d}{\alpha} &\geq \frac{ e \bigl( \frac{d-1}{d - \ln e} \bigr)^{d-1} }{e} =  \Bigl( \frac{d-1}{d-1} \Bigr)^{d-1} = 1.
\end{align*}

For all $B \in \mathcal B$, define $q(B) = \beta$, for some parameter $\beta$ to be chosen. Define $\sigma_i(B) = \gamma_i(\beta)$ where $\gamma_i(\beta)$ is defined recursively as follows:
\begin{align*}
 \gamma_0(\beta) = 0 \qquad  \gamma_{i+1}(\beta) &= \beta r  (1 + \gamma_i (\beta) )^d
\end{align*}

We first claim that $\gamma_{i+1}(\beta) \geq \gamma_i(\beta)$ for all $i \geq 0$. We show this by induction on $i$. It is clear for $i = 0$. For $i > 0$, we have:
\begin{align*}
\gamma_{i+1}(\beta) &= \beta r (1 + \gamma_i(\beta))^d \\
&\geq \beta r (1 + \gamma_{i-1}(\beta))^d \qquad \text{induction hypothesis} \\
&= \gamma_i(\beta)
\end{align*}

Next, we claim that this definition of $q, \sigma$ satisfies the conditions of Lemma~\ref{a1ffaster-parallel-cond}. For, we have:
\begin{align*}
 &\sigma_i(B) + q(B) P_{\Omega}(B) \smashoperator[l]{\sum_{\substack{\mathcal I \subseteq N(B)\\\text{$\mathcal I$ independent}}}} \negthickspace \negthickspace \prod_{B' \in \mathcal I} \sigma_{i} (B')  - \prod_{B' \in \mathcal I} \sigma_{i-1} (B') \\
 &\qquad = \gamma_i(\beta) + P_{\Omega}(B) \beta {\sum_{\substack{\mathcal I \subseteq N(B)\\\text{$\mathcal I$ independent}}}} \gamma_i(\beta)^{|\mathcal I|} - \gamma_{i-1}(\beta)^{|\mathcal I|} \\
&\qquad \leq \gamma_i(\beta) + P_{\Omega}(B) \beta \Bigl(  (1 + \gamma_i (\beta) )^d - (1 + \gamma_{i-1} (\beta) )^d \Bigr)  \qquad \text{as $|N(B)| \leq d$ and $\gamma_{i} (\beta) \geq \gamma_{i-1} (\beta)$} \\
  &\qquad \leq \gamma_i (\beta) + r \beta \Bigl(  (1 + \gamma_i (\beta) )^d - (1 + \gamma_{i-1} (\beta) )^d \Bigr) \qquad \qquad \text{as $P_{\Omega}(B) \leq p \leq \frac{\alpha}{e d} \leq r$} \\
&\qquad = \gamma_{i+1}(\beta) = \sigma_{i+1}(B)
\end{align*}

Let $z = \frac{1 - \ln \alpha}{d-1} \geq 0$. We claim that for $t$ sufficiently large, there is some $\beta \in [0,1]$ with $\gamma_t(\beta) = z$. We will show this by continuity. Each $\gamma_i(\beta)$ is an increasing function of $\beta$ with $\gamma_i(0) = 0$. Furthermore, we claim that we have for $t \geq 1$:
\begin{equation}
\label{gteqn}
\gamma_t(1)  \geq r \lambda^{t-1} \qquad \text{for $\lambda = r (d-1)(1+1/(d-1))^d$}
\end{equation}

The reason for (\ref{gteqn}) is that for $i \geq 0$ we have $\frac{\gamma_{i+1}(1)}{\gamma_i(1)} = \frac{r (1 + \gamma_i(1))^d}{\gamma_i(1)}$. Now observe that for all $x \geq 0$ we have $\frac{(1 + x)^d}{x} \geq (d-1)(1+1/(d-1))^d$.

Observe that $\lambda = (\frac{d}{d - \ln \alpha})^{d-1} \geq 1$. So, for $t \geq \lceil \max(0,\ln(z/r)) / \ln \lambda \rceil$, we have $\gamma_t(1) \geq z$. Note that $z/r = \bigl( \frac{d - \ln \alpha}{d-1} \bigr)^d d (1 - \ln \alpha)$. Simple calculus shows that this is $O(1)$ for $d \geq 2$. Similarly, simple calculus shows that $\lambda \geq 1 + \Omega(\alpha - 1)$.  So, for $t \geq \Omega( \frac{1}{\alpha - 1} )$ we have that $\gamma_t(1) \geq z$. This implies that there is some $\beta \in [0,1]$ and some choice of $t \leq O( \frac{1}{\alpha - 1} )$ with $\gamma_t(\beta) = z$ exactly.

Now,  Theorem~\ref{a1ffaster-parallel-cond} applies, and so the probability that any $B$ is true after $t$ rounds is at most
\begin{align*}
\frac{\sigma_{t+1}(B)}{q(B)} - \sigma_t(B) &= r (1 + \gamma_t (\beta) )^{d} - \gamma_t (\beta) = r (1 + z)^d - z = \frac{\ln \alpha}{d}
\end{align*}

So far, we have shown by continuity that there is \emph{some} choice of $\beta$, for which the parallel MT algorithm would induce $P_{\Omega'}(B) \leq \frac{\ln \alpha}{d}$. In the distributed setting, where computation is free, we can assume that each node is able to determine this value of $\beta$ to any desired precision. To give a full parallel algorithm, we need to show that it is possible to determine such $\beta$ efficiently. In fact, we only use $\beta$ as a sampling probability; thus, the probability that we need to determine its $i^{\text{th}}$ bit decreases exponentially in $i$. So whp it suffices to compute $O(\log( \frac{m}{\alpha-1} ))$ bits of it.

Recall that $\beta$ is the root of $\gamma_t(\beta) - z$ in the range $\beta \in [0,1]$. We can determine this root via numerical bisection. It requires $O( \log( \frac{m}{\alpha - 1} ))$ rounds of bisection, and each such bisection can be performed in $O( \frac{\log m}{\alpha - 1})$ steps.

\end{proof}

\section{Entropy of the MT-distribution}
\label{a1sec:min}
One of the main themes of this paper has been that the MT-distribution has a high degree of randomness, comparable to the randomness of the original distribution $\Omega$. One more quantitative measure of this is the \emph{R\'{e}nyi entropy} of the MT-distribution.
\begin{definition}[\cite{chor-goldreich:weak-randomness}]
Let $\mathcal V$ be a distribution on a finite set $S$. We define the \emph{R\'{e}nyi entropy} with parameter $\rho$ of $\mathcal V$ to be 
$$
H_{\rho}(\mathcal V) = \frac{1}{1 - \rho} \ln \sum_{v \in S} P_{\mathcal V} (v)^{\rho}
$$
\end{definition}

The entropy of any distribution is at most $\ln |S|$, which is achieved by the uniform distribution, and so $H_{\rho}$ measures how close a distribution is to uniform. The min-entropy $H_{\infty}$ is a special case
$$
H_{\infty}(\mathcal V) = -\ln \max_{v \in S} P(\mathcal V) (v) = \lim_{\rho \rightarrow \infty} H_{\rho}(\mathcal V)
$$
See, e.g., \cite{cohen:two-source,nisan-zuckerman:extractors,vadhan:pseudorandomness} for the centrality of this notion. 

It is possible to use the LLL directly for combinatorial enumeration. Suppose that, when drawing from $\Omega$, the bad-events are avoided with probability with at least $p$; then it follows that the number of solutions is at least $p |S|$. This principle was used in \cite{lu-szekely}, which counted certain types of permutations and matchings in this way.  The entropy can also be used as a tool for enumerative combinatorics; namely, if $\Omega'$ is the distribution at the end of the MT algorithm, we know that the total number of solutions (i.e. combinatorial structures avoiding the bad-events) is at least $\exp(H_{\rho}(\Omega'))$ (for any choice of $\rho$). 

The LLL gives bounds on the number of configurations which are essentially identical to those derived by analyzing the MT distribution. However, the MT distribution has a key advantage, which is that one may efficiently sample from the resulting distribution. The LLL distribution, by contrast, is a conditional distribution. In this sense, one may view the enumerate bounds produced from the MT distribution as being constructive, in a certain sense. Of course, for most applications of the LLL, the number of satisfying assignments is exponentially large, and so it is impossible to give a truly constructive enumerative algorithm for them.

Our main result on the entropy of the MT-distribution is given by:
\begin{theorem}
\label{mt-ent-thm}
Let $\Omega'$ be the MT-distribution; then for $\rho > 1$ we have
$$
H_{\rho}(\Omega') \geq H_{\rho}(\Omega) - \frac{\rho}{\rho - 1} \ln \sum_{\substack{I \subseteq \mathcal B\\\text{$I$ independent}}} \prod_{B \in I} \mu(B)
$$
\end{theorem}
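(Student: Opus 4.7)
My plan is to reduce the statement to the basic bound on the MT-distribution from Proposition~\ref{hss-result}, applied pointwise to each possible output configuration.

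Fix any configuration $x = (x_1,\dots,x_n)$, and let $E_x$ be the event that the variables take exactly the values $x$. Since $E_x$ mentions every variable, every bad event $B \in \mathcal B$ satisfies $B \sim E_x$, and therefore $N(E_x) = \mathcal B$. Writing $S = \sum_{I \subseteq \mathcal B,\, I \text{ independent}} \prod_{B \in I} \mu(B)$, the definition (\ref{a1feqn:theta}) then gives $\theta(E_x) = P_{\Omega}(E_x) \cdot S = P_{\Omega}(x) \cdot S$. Proposition~\ref{hss-result} applied to $E_x$ yields the key pointwise inequality
\[
P_{\Omega'}(x) \;=\; P_{\Omega'}(E_x) \;\leq\; \theta(E_x) \;=\; P_{\Omega}(x)\, S.
\]

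Now raise both sides to the $\rho$-th power (both sides nonnegative) and sum over $x$, obtaining
\[
\sum_{x} P_{\Omega'}(x)^{\rho} \;\leq\; S^{\rho} \sum_{x} P_{\Omega}(x)^{\rho}.
\]
Since $\rho > 1$, the factor $\frac{1}{1-\rho}$ appearing in the definition of R\'enyi entropy is negative, so applying it to both sides reverses the inequality:
\[
H_{\rho}(\Omega') \;=\; \frac{1}{1-\rho} \ln \sum_x P_{\Omega'}(x)^{\rho} \;\geq\; \frac{1}{1-\rho}\Bigl[\rho \ln S + \ln \sum_x P_{\Omega}(x)^{\rho}\Bigr] \;=\; H_{\rho}(\Omega) - \frac{\rho}{\rho-1} \ln S,
\]
which is exactly the claim.

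The only nontrivial step is recognizing that $E_x$ is a legitimate event for Proposition~\ref{hss-result} with $N(E_x) = \mathcal B$; after that, the inequality on $P_{\Omega'}(x)$ is a direct application, and the entropy bound is a mechanical manipulation of the R\'enyi formula. The sign flip from $\rho > 1$ is what allows the pointwise upper bound on $P_{\Omega'}(x)$ to become a lower bound on $H_\rho(\Omega')$. I do not expect any real obstacle; the result should also extend verbatim to the min-entropy case ($\rho \to \infty$) by taking the limit on both sides, which recovers $H_\infty(\Omega') \geq H_\infty(\Omega) - \ln S$.
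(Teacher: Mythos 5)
Your proposal is correct and follows essentially the same route as the paper: bound the probability of each atomic (full-assignment) event under the MT-distribution by $\theta(E_x) \leq P_{\Omega}(x)\sum_{I \text{ indep.}} \prod_{B \in I}\mu(B)$ via Proposition~\ref{hss-result}, then plug this pointwise bound into the R\'enyi entropy formula and use the sign of $\frac{1}{1-\rho}$ for $\rho>1$ to flip the inequality. The only cosmetic difference is that the paper simply uses $\theta(E_x) \leq P_\Omega(x)\,S$ (summing over all independent subsets of $\mathcal B$ rather than of $N(E_x)$), whereas you observe the equality $N(E_x)=\mathcal B$; either way the argument is identical.
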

\begin{proof}
Consider some  atomic event $E$ defined by $X_1 = v_1 \wedge \dots \wedge X_n = v_n$. By Proposition~\ref{hss-result}, the probability that $E$ occurs at the end of MT is at most $\theta(E)$. Now observe that $\theta(E) \leq P_{\Omega}(E) \sum_{\substack{I \subseteq \mathcal B \\\text{$I$ independent}}} \prod_{B \in I} \mu(B)$.

Letting $x = \sum_{\substack{I \subseteq \mathcal B\\\text{$I$ independent}}} \prod_{B \in I} \mu(B)$, we thus have:
\begin{align*}
H_{\rho}(\mathcal V) &= \frac{1}{1 - \rho} \ln \sum_{v} P_{\Omega'} (v)^{\rho} \\
&\geq \frac{1}{1 - \rho} \ln \sum_{v} ( x P_{\Omega} (v) )^{\rho} \\
&\geq \frac{\rho}{1 - \rho} \ln x + \frac{1}{1 - \rho}  \sum_{v} P_{\Omega} (v)^{\rho}
\end{align*}

\end{proof}

We can think of the term $\sum_{\substack{I \subseteq \mathcal B\\\text{$I$ independent}}} \prod_{B \in I} \mu(B)$ as a distortion factor between $\Omega$ and $\Omega'$. The following is a crude but simple estimate of this factor:
\begin{proposition}
\label{a1crude-prop}
We have
$$
\ln \sum_{\substack{I \subseteq \mathcal B\\\text{$I$ independent}}} \prod_{B \in I} \mu(B) \leq \sum_{B \in \mathcal B} \mu(B)
$$
\end{proposition}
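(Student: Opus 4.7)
The plan is to bound the sum over independent sets by the sum over \emph{all} subsets of $\mathcal B$, which factors into a product over $B \in \mathcal B$, and then apply the elementary inequality $\ln(1+x) \le x$ to each factor.

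More concretely, first I would observe that since every term $\prod_{B \in I} \mu(B)$ is nonnegative (as $\mu$ maps into $[0,\infty)$), dropping the independence restriction only increases the sum:
\begin{equation*}
\sum_{\substack{I \subseteq \mathcal B\\\text{$I$ independent}}} \prod_{B \in I} \mu(B) \;\le\; \sum_{I \subseteq \mathcal B} \prod_{B \in I} \mu(B) \;=\; \prod_{B \in \mathcal B} \bigl(1 + \mu(B)\bigr),
\end{equation*}
where the final equality is the standard expansion of the product. Then I would take logarithms of both sides and use $\ln(1+x) \le x$ for $x \ge 0$ termwise:
\begin{equation*}
\ln \prod_{B \in \mathcal B}\bigl(1+\mu(B)\bigr) \;=\; \sum_{B \in \mathcal B} \ln\bigl(1+\mu(B)\bigr) \;\le\; \sum_{B \in \mathcal B} \mu(B).
\end{equation*}
Combining the two displays gives the claim.

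There is essentially no obstacle here: the argument is two standard one-line inequalities chained together, and the only hypothesis being used is $\mu(B) \ge 0$ for all $B$, which is built into the definition of $\mu$. The proposition is intentionally a ``crude'' estimate, so no sharper tool (e.g.\ cluster expansion bounds as in Bissacot et al.) is needed.
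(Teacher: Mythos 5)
Your proof is correct and follows the same route as the paper: bound the independent-set sum by $\prod_{B \in \mathcal B}(1+\mu(B))$ and then use $1+x \le e^x$ (equivalently, $\ln(1+x)\le x$). No issues.
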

\begin{proof}
We have
$$\
\sum_{\substack{I \subseteq \mathcal B\\\text{$I$ independent}}} \prod_{B \in I} \mu(B) \leq \prod_{B \in \mathcal B} (1 + \mu(B)) \leq \exp( \sum_{B \in \mathcal B} \mu(B) )
$$
and the claim follows.
\end{proof}

In most applications of the LLL, we keep track of independent sets of bad-events in terms of their variables: namely, for each variable $i$, an independent set $I$ can contain at most one bad-event involving $i$. The following result shows how this variable-based accounting can yield a better estimate for the entropy:

\begin{theorem}
\label{ythm}
For any bad-event $B$, define
$$
y(B) =  (1 + \mu(B))^{\frac{1}{| \text{var}(B)|}} - 1
$$
Then we have 
$$
\sum_{\substack{I \subseteq \mathcal B\\\text{$I$ independent}}} \prod_{B \in I} \mu(B) \leq \prod_{i \in [n]} \Bigl( 1 + \negthickspace \negthickspace \negthickspace \sum_{\substack{B \in \mathcal B \\ \text{$B$ involves variable $i$}}}  \negthickspace \negthickspace \negthickspace y(B) \Bigr)
$$
\end{theorem}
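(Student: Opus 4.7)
The plan is to expand both sides into sums over combinatorial objects (essentially partial assignments of variables to bad events), and then exhibit an injection from the terms in the LHS to a subset of the terms in the RHS, all of which are non-negative.

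First, I would use the definition of $y(B)$ together with the binomial theorem. Since $y(B) = (1+\mu(B))^{1/|\text{var}(B)|} - 1$, we have $1 + \mu(B) = (1+y(B))^{|\text{var}(B)|}$, so
$$
\mu(B) \;=\; (1+y(B))^{|\text{var}(B)|} - 1 \;=\; \sum_{\emptyset \neq S \subseteq \text{var}(B)} y(B)^{|S|}.
$$
Substituting this into the LHS and expanding,
$$
\sum_{I \text{ indep}} \prod_{B \in I} \mu(B) \;=\; \sum_{I \text{ indep}} \;\sum_{(S_B)_{B \in I}} \prod_{B \in I} y(B)^{|S_B|},
$$
where in the inner sum each $S_B$ ranges over the nonempty subsets of $\text{var}(B)$.

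Second, I would expand the RHS by distributivity. Let $f$ range over maps $f:[n] \to \mathcal B \cup \{\star\}$ subject to the constraint that $f(i)\in\mathcal B$ implies $i \in \text{var}(f(i))$. Then
$$
\prod_{i \in [n]} \Bigl(1 + \sum_{B \ni i} y(B)\Bigr) \;=\; \sum_{f} \;\prod_{i:\, f(i) \neq \star} y(f(i)).
$$

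Third, I would define the injection. Given a pair $(I, (S_B)_{B\in I})$ from the LHS expansion, independence of $I$ means the sets $\text{var}(B)$ for $B \in I$ are pairwise disjoint, so the $S_B$ are disjoint. Define $f(i) = B$ if $i \in S_B$ for some (necessarily unique) $B \in I$, and $f(i) = \star$ otherwise. This $f$ satisfies the constraint above, and one checks $\prod_{B \in I} y(B)^{|S_B|} = \prod_{i:\,f(i)\neq \star} y(f(i))$. The map $(I,(S_B)) \mapsto f$ is injective: from $f$ one recovers $I = \{f(i) : f(i) \neq \star\}$ and $S_B = f^{-1}(B)$, the latter nonempty precisely when $B \in I$.

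Since every term $y(B)^{|S_B|}$ is non-negative, the LHS equals the sum over those $f$ whose image lies in the independent subsets of $\mathcal B$, while the RHS is the sum over \emph{all} such $f$; dropping the independence restriction can only increase the total, which yields the claimed inequality. The only real thing to verify is that the map $(I,(S_B)) \mapsto f$ is well-defined and injective, which uses independence in an essential way; everything else is a bookkeeping exercise after applying the binomial expansion of $\mu(B)$.
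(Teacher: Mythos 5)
Your proof is correct and is essentially the same argument as the paper's: both expand $\mu(B) = (1+y(B))^{|\mathrm{var}(B)|}-1$ as a sum of $y(B)^{|S|}$ over nonempty $S\subseteq\mathrm{var}(B)$, expand the right-hand product by distributivity (your maps $f$ are exactly the paper's monomials grouped by support), use disjointness of the variable sets of an independent family to match the left-hand terms with a subset of the right-hand terms, and conclude by non-negativity of the discarded terms.
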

\begin{proof}
We can expand the RHS as a polynomial $Q$ in the values $y(B)$ where $B$ ranges over $\mathcal B$. Given an independent set $I \subseteq \mathcal B$, we say that a monomial in the terms $y$ is \emph{supported} on $I$ if, for each $B$, the exponent of $y(B)$ is positive iff $B \in I$. 

For any set $I$, define $q(I)$ to be the sum of all monomials of $Q$ supported on $I$. Thus, for example if $I = \{B \}$ then $q(I)$ is the sum over all terms in RHS of the form $y(B)^j$, for $j \geq 1$.

Now, observe that if $J, J'$ are distinct subsets of $\mathcal B$, then the monomials supported on $J, J'$ are disjoint. Furthermore, $q(J) \geq 0$ for all $J \subseteq \mathcal B$. Thus
$$
 \prod_{i \in [n]} \Bigl( 1 + \negthickspace \negthickspace \negthickspace \sum_{\substack{B \in \mathcal B \\ \text{$B$ involves variable $i$}}}  \negthickspace \negthickspace \negthickspace y(B) \Bigr) = \sum_{J \subseteq \mathcal B} q(J)
$$

We now claim that for any independent set $I \subseteq \mathcal B$, we have
\begin{equation}
\label{qeqn1}
\prod_{B \in I} \mu(B) = q(I)
\end{equation}

This equation (\ref{qeqn1}) implies that 
\begin{align*}
\sum_{\substack{I \subseteq \mathcal B \\ \text{$I$ independent}}} \prod_{B \in I} \mu(B) &\leq \sum_{\substack{I \subseteq \mathcal B \\ \text{$I$ independent}}} q(I) \\
& \leq \sum_{\substack{J \subseteq \mathcal B}} q(J) \qquad \text{as $q(J) \geq 0$ for all $J \subseteq \mathcal B$} \\
&=  \prod_{i \in [n]} \Bigl( 1 + \negthickspace \negthickspace \negthickspace \sum_{\substack{B \in \mathcal B \\ \text{$B$ involves variable $i$}}}  \negthickspace \negthickspace \negthickspace y(B) \Bigr)
\end{align*}
which is what we are trying to show. So we now move on to prove (\ref{qeqn1}).

For any set $J \subseteq \mathcal B$ (not necessarily independent), we may produce a monomial supported on $J$ by selecting, for each $i = 1, \dots, k$ some set of variables $R_i \subseteq \text{var}(B_i), R_i \neq \emptyset$, and furthermore $R_1, \dots, R_k$ are all disjoint. Thus, for any $J = \{B_1, \dots, B_k \} \subseteq \mathcal B$ we have
\begin{align*}
q(J) &= \sum_{\substack{R_1, \dots, R_k \\  R_1, \dots, R_k \text{disjoint} \\ R_i \subseteq \text{var}(B_i) \\ R_i \neq \emptyset}} y(B_1)^{|R_1|} \dots y(B_k)^{|R_k|}
\end{align*}

Observe that if $I$ is independent, then any such $R_1, \dots, R_k$ are automatically disjoint. Thus, for independent $I = \{ B_1, \dots, B_k \} \subseteq \mathcal B$, we have
$$
q(I) = \negthickspace \negthickspace \sum_{\substack{R_1, \dots, R_k  \\ R_i \subseteq \text{var}(B_i) \\ R_i \neq \emptyset}}  \negthickspace \negthickspace y(B_1)^{|R_1|} \dots y(B_k)^{|R_k|} = \prod_{i=1}^k \sum_{\substack{R \subseteq \text{var}(B_i) \\ R \neq \emptyset}} y(B_i)^{|R|} = \prod_{i=1}^k \bigl( (1 + y(B_i))^{|\text{var}(B_i)|} - 1 \bigr) = \prod_{i=1}^k \mu(B_i)
$$

So, we have shown that for independent $I \subseteq \mathcal B$ we have $q(I) = \prod_{B \in I} \mu(B)$.

\end{proof}

We give an example for independent transversals. Given a graph $G$ with its vertices partitioned into blocks $V = V_1 \sqcup V_2 \sqcup \dots \sqcup V_k$, an \emph{independent transversal} (also known as an \emph{independent system of representatives}) of $G$ is a set $I$ such that $|I \cap V_i| = 1$ for each $i = 1, \dots, k$, and such that $I$ is an independent set of $G$. This. This structure has received significant attention, starting in \cite{bollobas-erdos-szemeredi:isr}. Currently, the best algorithms for producing independent transversals come from the LLL and the MT algorithm; see \cite{bissacot} and \cite{pegden}.

\begin{proposition}
Suppose we have a graph $G$ of maximum degree $\Delta$, with its vertex set partitioned into $k$ blocks containing $ b$ vertices, such that $b \geq 4 \Delta$. Suppose we run the MT algorithm to find an independent transversal, using the natural probability distribution (selecting one vertex independently from each block). Then the MT algorithm terminates and the resulting probability space has min-entropy at least
$$
H_{\infty} (\Omega') \geq k \ln \frac{ 4 b }{2 + b/\Delta - \sqrt{b^2/\Delta^2 - 4 b/\Delta}}
$$
\end{proposition}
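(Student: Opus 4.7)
The plan is to instantiate the LLL for independent transversals in the natural way and then combine Theorem~\ref{mt-ent-thm} (in the min-entropy limit $\rho\to\infty$) with Theorem~\ref{ythm}. The bad events are indexed by edges of $G$: for an edge $e=\{u,v\}$ with $u\in V_i$, $v\in V_j$, let $B_e$ be the event that both $u$ and $v$ are selected. Then $P_{\Omega}(B_e)=1/b^{2}$ and $B_e$ involves exactly the two block-variables $V_i,V_j$; intra-block edges contribute measure-zero events and may be ignored.

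I would take the uniform assignment $\mu(B)=\mu_{0}$ and choose $\mu_{0}$ as the smallest value satisfying Pegden's cluster-expansion criterion (Theorem~\ref{a1fapp-main-thm}). To bound the inner sum, classify each $B'\in N(B_e)$ by which of $\{i,j\}$ its edge touches: let $a$ be the number of edges between $V_i$ and $V_j$, $c$ the number of edges from $V_i$ to blocks other than $V_j$, and $d$ the number from $V_j$ to blocks other than $V_i$. Any independent $I\subseteq N(B_e)$ can contain at most one event per block, so a straightforward enumeration gives
\[
\sum_{\substack{I\subseteq N(B_e)\\ I\text{ indep}}}\prod_{B'\in I}\mu_{0}\;\le\; 1+(a+c+d)\mu_{0}+cd\,\mu_{0}^{2}.
\]
The degree bounds $a+c\le b\Delta$ and $a+d\le b\Delta$ imply, after optimizing over $a$, that this sum is at most $(1+b\Delta\mu_{0})^{2}$ (the worst case is $a=0$). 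Hence the LLL criterion reduces to $b^{2}\mu_{0}\ge (1+b\Delta\mu_{0})^{2}$, which has a positive real solution exactly when $b\ge 4\Delta$, with the smallest root $\mu_{0}$ characterized by $b\sqrt{\mu_{0}}=1+b\Delta\mu_{0}$. By Theorem~\ref{a1fapp-main-thm} the MT algorithm then terminates with probability one.

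For the entropy bound, note that $\Omega$ is uniform on $b^{k}$ assignments, so $H_{\infty}(\Omega)=k\ln b$. Taking $\rho\to\infty$ in Theorem~\ref{mt-ent-thm} gives
\[
H_{\infty}(\Omega')\;\ge\; H_{\infty}(\Omega)-\ln\sum_{\substack{I\subseteq\mathcal B\\ I\text{ indep}}}\prod_{B\in I}\mu_{0}.
\]
Applying Theorem~\ref{ythm} with $y(B)=\sqrt{1+\mu_{0}}-1$ and using $\sum_{B\ni v}y(B)\le b\Delta\, y$ for every block $v$, followed by the elementary inequality $\sqrt{1+\mu_{0}}-1\le\mu_{0}/2$, yields
\[
\sum_{I\text{ indep}}\prod_{B\in I}\mu_{0}\;\le\;(1+b\Delta\, y)^{k}\;\le\;(1+b\Delta\mu_{0}/2)^{k}.
\]

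All that remains is algebra. From $b\sqrt{\mu_{0}}=1+b\Delta\mu_{0}$ one gets $1+b\Delta\mu_{0}/2=(1+b\sqrt{\mu_{0}})/2$, and solving the quadratic $b\Delta s^{2}-b s+1=0$ for its smaller root $s=\sqrt{\mu_{0}}$ gives $b\sqrt{\mu_{0}}=(u-\sqrt{u^{2}-4u})/2$ with $u=b/\Delta$. Substituting, $1+b\Delta\mu_{0}/2=(2+u-\sqrt{u^{2}-4u})/4$, and the claimed bound follows immediately. The only non-routine step is the combinatorial classification of $N(B_e)$ in the cluster expansion that yields the clean worst-case bound $(1+b\Delta\mu_{0})^{2}$; everything else is algebraic manipulation.
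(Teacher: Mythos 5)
Your proposal is correct and follows essentially the same route as the paper: the same edge-indexed bad events, the same uniform weight $\mu_0=\alpha=\frac{(b-\sqrt{b^2-4b\Delta})^2}{4b^2\Delta^2}$ obtained from the cluster-expansion criterion, and the same combination of Theorem~\ref{mt-ent-thm} (in the $\rho\to\infty$ limit) with Theorem~\ref{ythm}. If anything, your finish is cleaner: you use the correct count of at most $b\Delta$ bad events per block together with the exact identity $b\sqrt{\mu_0}=1+b\Delta\mu_0$ and $\sqrt{1+\mu_0}-1\le\mu_0/2$, whereas the paper counts $2b\Delta$ events per block and argues via monotonicity in $\Delta$.
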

\begin{proof}
The min-entropy of $\Omega$ is $-\ln b^{-k} = k \ln b$.

The probability distribution $\Omega$ selects a node from each block uniformly at random. For each edge $f = \langle u, v \rangle \in G$ we have a bad-event that $u,v$ are both selected for the independent transversal. It is any easy exercise to see that the asymmetric LLL criterion is satisfied by setting $\mu(B) = \alpha = \frac{ (b - \sqrt{b^2 - 4 b \Delta} )^2 }{4 b^2 \Delta^2}$ for all $B \in \mathcal B$. Thus, we have $y(B) = ( 1 + \alpha)^{1/2} - 1$.

In this setting, a variable corresponds to a block. There are at most $2 b \Delta$ bad-events involving each block and so we have
\begin{align*}
\prod_{\text{variables $i$}} (1 + \sum_{\text{$B$ involves variable $i$}} y(B)) &\leq \prod_{\text{blocks $i$}} (1 + 2 b \Delta ( (1 + \alpha)^{1/2} - 1) ) \\
&= \Bigl( 1 + 2 b \Delta \bigl( \sqrt{1 + \frac{\left(b-\sqrt{b (b-4
   \Delta)}\right)^2}{4 b^2 \Delta^2}}-1 \bigr) \Bigr)^k
\end{align*}

Now, suppose that $b/\Delta = x$, where $x \geq 4$ is a fixed value; then simple calculus shows that the expression  $1 + 2 b \Delta \bigl (\sqrt{1 + \frac{\left(b-\sqrt{b (b-4
   \Delta)}\right)^2}{4 b^2 \Delta^2}}-1 \big)$ is an increasing function of $\Delta$ which approaches increasingly to $1/2 (x - \sqrt{x (x - 4)})$. Thus, we have that 
$$
1 + 2 b \Delta \bigl (\sqrt{1 + \frac{\left(b-\sqrt{b (b-4 \Delta)}\right)^2}{4 b^2 \Delta^2}}-1 \bigr) \leq \tfrac{1}{2} ( (b/\Delta) - \sqrt{ (b/\Delta) (b/\Delta -4)}).
$$

 By Theorem~\ref{ythm}, this implies that
\begin{align*}
H_{\infty} (\Omega') &\geq k \ln b - k \ln \Bigl( \frac{2 + b/\Delta - \sqrt{b^2/\Delta^2 - 4 b/\Delta}}{4} \Bigr) \\
&= k \ln \frac{ 4 b }{2 + b/\Delta - \sqrt{b^2/\Delta^2 - 4 b/\Delta}}
\end{align*}
\end{proof}

We see that the distortion of $\Omega'$ is relatively mild. When $b = 4 \Delta$, then the min-entropy is $\leq k ( \ln b - \ln 3/2 )$. When $b \gg \Delta$, the min-entropy is (up to first order) $k( \ln b - \frac{\Delta}{2 b} - \frac{7\Delta^2}{8 b^2} - O( (\Delta/b)^{5/2})$. By comparison, the cruder Proposition~\ref{a1crude-prop} would give estimates in these two regimes of, respectively, 
$k(\ln b - 1/2)$ and $k (\ln b - \frac{\Delta}{2 b} - \frac{\Delta^2}{b^2} - O( (\Delta/b)^3)$.

\iffalse
We give a similar result for non-repetitive vertex colorings.
\begin{proposition}
Suppose $G$ is a graph of maximum degree $\Delta$, and we use the MT algorithm to find a non-repetitive vertex coloring using  $C \geq \Delta^2 + \phi \Delta^{5/3}$ colors. Then the min-entropy of the resulting distribution is at least $n (\ln C - \frac{1}{4 \Delta^{2/3}})$. 
\end{proposition}
\begin{proof}
Note that $\Omega$ in this case has min-entropy $n \ln C$.

As shown in Theorem~\ref{a1fnon-rep-thm},  we set $\mu(B) = \alpha^{2 l}$, for any bad-event $B$ defined by a path of length $2 l$, where $\alpha = (\sqrt{C} (\Delta + \Delta^{2/3}))^{-1}$. Then we can estimate $\sum_{B} \mu(B)$ as follows. We select the first vertex in the path ($n$ choices), we select the remaining $2 l - 1$ vertices in the path ( $\Delta^{2 l - 1}$ choices), and we select the color sequence ($C^l$ choices). We discount by a factor of $2$, as such paths are double-counted in their two orientations. Thus we have:
\begin{align*}
&\sum_B \mu(B) \leq \sum_{l=1}^{\infty} \frac{n}{2} \Delta^{2 l -1} C^l \alpha^{2 l} \\
&\quad  =\frac{n}{2 (1 + 2 \Delta^{1/3}) \Delta^{1/3}} \leq \frac{n}{4 \Delta^{2/3}}
\end{align*}
\end{proof}
\fi

Finally, we give an example for partially satisfying $k$-SAT. This is, to our knowledge, the first result to show that not only is the $k$-SAT problem partially satisfiable, but that it has many partial solutions (indeed, exponentially many solutions).

\begin{proposition}
Suppose we have a $k$-SAT instance with $m$ clauses and $n$ variables, in which each variable participates in up to $L \leq \frac{\alpha 2^{k+1}}{e k} - 2/k$ clauses (either positively or negatively), for $\alpha \in [1,e]$. Then there are at least
$$
\frac{2^n}{ \exp(\frac{\beta (4 + 4 \sqrt{\beta} + \beta)}{k^2}) \text{poly}(m) }
$$
assignments which satisfy at least $m (1 - 2^{-k} e \ln(\alpha)/\alpha) - 1$ clauses, where we define
$$
\beta = 1 - \ln \alpha
$$
\end{proposition}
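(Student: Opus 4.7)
The plan is to combine the Truncated Moser--Tardos algorithm of Theorem~\ref{a1falpha-resamp-thm1}, run with the same parameters as in the proof of Theorem~\ref{a1fk-sat-most}, with the R\'enyi-entropy bound of Theorem~\ref{mt-ent-thm} and the product bound of Theorem~\ref{ythm}. The argument follows the same template as the independent-transversal example worked out earlier in this section.

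First, I would instantiate Theorem~\ref{a1falpha-resamp-thm1} exactly as in the proof of Theorem~\ref{a1fk-sat-most}: take $\Omega$ to be the uniform distribution on $\{T,F\}^n$ (the worst case of the lopsided LLL check arises from the balanced instance, where the biased choice of Theorem~\ref{a1fk-sat-most} coincides with uniform sampling) and set $\mu(B) = z := \frac{2\ln(2^{k+1}/(2+kL))}{2+kL}$ for every clause $B$. The calculation in Theorem~\ref{a1fk-sat-most} gives $\theta(B) - \mu(B) \leq 2^{-k} e\ln\alpha/\alpha$ for every clause, so Theorem~\ref{a1falpha-resamp-thm1} produces an output distribution $\Omega'$ on $\{T, F\}^n$ satisfying $P_{\Omega'}(B) \leq 2^{-k} e\ln\alpha/\alpha$ for each clause.

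Second, let $U(\sigma)$ be the number of clauses falsified by $\sigma \in \{T,F\}^n$. Linearity gives $\bE_{\Omega'}[U] \leq \mu_0 := m \cdot 2^{-k} e\ln\alpha/\alpha$, and since $U$ is integer-valued, Markov's inequality yields $P_{\Omega'}[U \leq \mu_0 + 1] \geq 1/(\mu_0 + 1) \geq 1/(m+1)$. Let $G := \{\sigma : U(\sigma) \leq \mu_0 + 1\}$; each such assignment satisfies at least $m(1 - 2^{-k} e\ln\alpha/\alpha) - 1$ clauses, and $P_{\Omega'}(G) \geq 1/\mathrm{poly}(m)$. Next, apply Theorem~\ref{mt-ent-thm} in the limit $\rho \to \infty$ to get $H_\infty(\Omega') \geq H_\infty(\Omega) - \ln X = n\ln 2 - \ln X$, where $X := \sum_{I \subseteq \mathcal B,\ I \text{ independent}} \prod_{B \in I} \mu(B)$. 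Consequently $\max_\sigma P_{\Omega'}(\sigma) \leq 2^{-n} X$, and therefore
\[
|G| \;\geq\; \frac{P_{\Omega'}(G)}{\max_\sigma P_{\Omega'}(\sigma)} \;\geq\; \frac{2^n}{X \cdot \mathrm{poly}(m)}.
\]

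The remaining step, and the main analytical obstacle, is to bound $\ln X \leq \beta(4 + 4\sqrt\beta + \beta)/k^2$ independently of $m$ and $n$. For this I would invoke Theorem~\ref{ythm} to write $X \leq \prod_i (1 + L_i y)$ with $y = (1+z)^{1/k} - 1$ and $\sum_i L_i = km$; in the extremal regime $L = \alpha 2^{k+1}/(ek) - 2/k$ one verifies that $Lz \leq 2\beta/k$, and so (by the tangent estimate $(1+z)^{1/k} \leq 1 + z/k$) $L_i y \leq 2\beta/k^2$. The nontrivial part is converting this per-variable bound into the claimed constant $\exp\bigl((2+\sqrt\beta)^2 \beta/k^2\bigr) = \exp\bigl(\beta(4+4\sqrt\beta+\beta)/k^2\bigr)$: the emergence of $\sqrt\beta$ together with the algebraic identity $(2+\sqrt\beta)^2 = 4+4\sqrt\beta+\beta$ strongly suggests that the argument uses the second-order expansion $(1+z)^{1/k} = 1 + z/k - \frac{k-1}{2k^2} z^2 + O(z^3/k^3)$, together with the global constraint $\sum_i L_i = km$ and a convex/Jensen argument over the $L_i$-profile, to extract the tight constant. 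Once this estimate on $\ln X$ is established, the proposition follows by substitution into the displayed bound on $|G|$.
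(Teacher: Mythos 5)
Your high-level template (truncated MT $\to$ R\'enyi entropy of $\Omega'$ $\to$ Markov plus counting) is the right one, and your final counting step matches the paper's. However, there are three genuine gaps. First, you cannot take $\Omega$ to be uniform. The proof of Theorem~\ref{a1fk-sat-most} runs the algorithm with the biased assignment of \cite{gst} (variable $i$ set true with probability $1/2 - x(\delta_i - 1/2)$, with bias up to $\beta/(2k)$); the observation that ``the worst case of the LLL check is the balanced instance'' licenses evaluating the criterion at the balanced point, not sampling uniformly on an unbalanced instance. This is not cosmetic: the paper's proof must pay for the bias, bounding $H_\rho(\Omega) \geq \frac{n}{1-\rho}\ln\bigl((1/2+\tfrac{\beta}{2k})^\rho + (1/2-\tfrac{\beta}{2k})^\rho\bigr)$ rather than $n\ln 2$, and this deficit is one of the two sources of the final constant.

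Second, your declared ``main analytical obstacle'' --- bounding $\ln X \leq \beta(4+4\sqrt\beta+\beta)/k^2$ independently of $m$ and $n$ --- is unachievable: since the empty set and all singletons are independent, $X \geq 1 + \sum_{B}\mu(B) = 1 + mz$, which grows without bound in $m$. The paper never attempts this; it uses the crude Proposition~\ref{a1crude-prop} (not Theorem~\ref{ythm}) to get $\ln X \leq \sum_B\mu(B) \leq mz \leq \frac{2n\beta}{k^2}$ via the double-counting bound $m \leq nL/k$, and the factor of $n$ survives into the final bound $H_\rho(\Omega') \geq n\bigl(\ln 2 - \frac{\beta(4+4\sqrt\beta+\beta)}{k^2}\bigr)$; the count actually supported by the proof has $\exp\bigl(\frac{n\beta(4+4\sqrt\beta+\beta)}{k^2}\bigr)$ in the denominator, which is presumably what the statement intends. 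Third, your guess for the origin of $\sqrt\beta$ is wrong: it does not come from Theorem~\ref{ythm} or a second-order expansion of $(1+z)^{1/k}$. It comes from choosing the R\'enyi parameter $\rho = 1 + 2\beta^{-1/2}$ in Theorem~\ref{mt-ent-thm}, balancing the coefficient $\frac{\rho}{\rho-1}$ on the distortion term $\frac{2n\beta}{k^2}$ against the $\rho$-dependence of the entropy of the biased product distribution, via the inequality $\ln\bigl((1/2+w)^{\rho} + (1/2-w)^{\rho}\bigr) \leq (1-\rho)\ln 2 - 2(1-\rho)\rho w^2$ with $w = \beta/(2k)$. Relatedly, your $\rho\to\infty$ specialization would be lossy even with the correct $\Omega$: the min-entropy of a Bernoulli$(1/2+\beta/(2k))$ variable falls short of $\ln 2$ by $\Theta(\beta/k)$, linear rather than quadratic in the bias, so a finite, carefully tuned $\rho$ is essential to obtain the $1/k^2$ scaling.
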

\begin{proof}
We run the MT algorithm as in Theorem~\ref{a1fk-sat-most} and compute $H_{\rho}$ of the resulting distribution. Using the notation of Theorem~\ref{a1fk-sat-most}, we have $\sum_{B \in \mathcal B} \mu(B) \leq m z$. Observe that, by double-counting $m \leq n L/k$ and so we have $\sum_{B \in \mathcal B} \mu(B) \leq \frac{2 n \beta}{k^2}$.

Next, we compute $H_{\rho}$ of the original distribution. Each variable is Bernoulli with mean $1/2 + x (1/2 - \delta) \leq 1/2 + \frac{\beta}{2 k}$, so we have
$$
H_{\rho}(\Omega) \geq \frac{1}{1 - \rho} \ln \Bigl(  (1/2 + \frac{\beta}{2 k})^{\rho} + (1/2 - \frac{\beta}{2 k})^{\rho} \Bigr)
$$

Hence by Theorem~\ref{mt-ent-thm} we have
\begin{align*}
H_{\rho} (\Omega') &\geq \frac{n}{1 - \rho} \ln \Bigl(  (1/2 + \frac{\beta}{2 k})^{\rho} + (1/2 - \frac{\beta}{2 k})^{\rho} \Bigr) - \frac{\rho}{\rho - 1} \frac{2 n \beta}{k^2}  \\
\end{align*}

We set $\rho = 1 + 2 \beta^{-1/2}$ and use the identity $\ln( (1/2 + w)^{\rho} + (1/2 - w)^{\rho}) \leq (1 - \rho) \ln 2 - 2 (1 - \rho) \rho w^2$ to obtain:
$$
H_{\rho} (\Omega') \geq n \Bigl( \ln 2 - \frac{\beta (4 + 4 \sqrt{\beta} + \beta)}{k^2}  \Bigr)
$$

In the resulting probability distribution, the expected number of failed constraints is $m 2^{-k} e \ln(\alpha)/\alpha$. Hence, by Markov's inequality we fail at most $m 2^{-k} e \ln(\alpha)/\alpha + 1$ constraints with probability at least $\text{poly}(1/m)$. Thus, the entropy of $\Omega'$ conditioned on this event is at least $n( \ln 2 -  \frac{\beta (4 + 4 \sqrt{\beta} + \beta)}{k^2} ) - O(\log m)$. The result follows.

\end{proof}

\section{Acknowledgements}
Thanks to the anonymous journal and conference reviewers for their many  helpful corrections and suggestions.

\end{document}